\documentclass[a4paper,fleqn]{cas-sc}

\usepackage{core}
\usepackage{hcalculation}
\usepackage{tikzit}
\usepackage{gla}
\usepackage{makecell}

\tikzstyle{Black}=[fill=black, draw=black, shape=circle, inner sep=0, minimum size=6pt]
\tikzstyle{White}=[fill=white, draw=black, shape=circle, inner sep=0, minimum size=6pt]
\tikzstyle{Red}=[fill=red, draw=black, shape=rectangle]
\tikzstyle{Rel}=[fill=white, draw=black, shape=chamfered rectangle, minimum size=8, chamfered rectangle corners=north east, chamfered rectangle xsep=2, chamfered rectangle ysep=2, tikzit shape=rectangle]
\tikzstyle{Subspace}=[fill=white, draw=black, shape=rectangle, minimum size=8, tikzit shape=rectangle]
\tikzstyle{SubspaceGray}=[fill=gray, draw=black, shape=rectangle, minimum size=8, tikzit shape=rectangle]
\tikzstyle{CoRel}=[fill=white, draw=black, shape=chamfered rectangle, minimum size=8, chamfered rectangle corners=north west, chamfered rectangle xsep=2, chamfered rectangle ysep=2, tikzit shape=rectangle, tikzit draw=red]
\tikzstyle{RelBig}=[fill=white, draw=black, shape=chamfered rectangle, minimum size=24, chamfered rectangle corners=north east, chamfered rectangle xsep=4, chamfered rectangle ysep=4, tikzit shape=rectangle]
\tikzstyle{CoRelBig}=[fill=white, draw=black, shape=chamfered rectangle, minimum size=24, chamfered rectangle corners=north west, chamfered rectangle xsep=4, chamfered rectangle ysep=4, tikzit shape=rectangle, tikzit draw=red]
\tikzstyle{RelSmal}=[fill=white, draw=black, shape=chamfered rectangle, minimum size=6, chamfered rectangle corners=north east, chamfered rectangle xsep=2, chamfered rectangle ysep=2, tikzit shape=rectangle]
\tikzstyle{CoRelSma?}=[fill=white, draw=black, shape=chamfered rectangle, minimum size=6, chamfered rectangle corners=north west, chamfered rectangle xsep=2, chamfered rectangle ysep=2, tikzit shape=rectangle, tikzit draw=red]
\tikzstyle{Map}=[fill=white, draw=black, shape=rounded rectangle, minimum width=23, minimum height=16, rounded rectangle west arc=none, tikzit shape=rectangle]
\tikzstyle{MapSmall}=[fill=white, draw=black, shape=rounded rectangle, minimum width=12, minimum height=8, rounded rectangle west arc=none, tikzit shape=rectangle]
\tikzstyle{CoMap}=[fill=white, draw=black, shape=rounded rectangle, minimum width=23, minimum height=16, rounded rectangle east arc=none, tikzit shape=rectangle, tikzit draw=red]
\tikzstyle{CoMapSmall}=[fill=white, draw=black, shape=rounded rectangle, minimum width=12, minimum height=8, rounded rectangle east arc=none, tikzit shape=rectangle, tikzit draw=red]
\tikzstyle{MapGray}=[fill=gray, draw=black, shape=rounded rectangle, minimum width=23, minimum height=16, rounded rectangle west arc=none, tikzit shape=rectangle]
\tikzstyle{CoMapGray}=[fill=gray, draw=black, shape=rounded rectangle, minimum width=23, minimum height=16, rounded rectangle east arc=none, tikzit shape=rectangle, tikzit draw=red]
\tikzstyle{MapBig}=[fill=white, draw=black, shape=rounded rectangle, minimum width=24, minimum height=32, rounded rectangle west arc=none, tikzit shape=rectangle]
\tikzstyle{CoMapBig}=[fill=white, draw=black, shape=rounded rectangle, minimum width=24, minimum height=32, rounded rectangle east arc=none, tikzit shape=rectangle, tikzit draw=red]
\tikzstyle{RelGray}=[fill=gray, draw=black, shape=chamfered rectangle, minimum size=8, chamfered rectangle corners=north east, chamfered rectangle xsep=2, chamfered rectangle ysep=2, tikzit shape=rectangle, tikzit fill={rgb,255: red,191; green,191; blue,191}]
\tikzstyle{CoRelGray}=[fill=gray, draw=black, shape=chamfered rectangle, tikzit draw=red, tikzit fill={rgb,255: red,191; green,191; blue,191}, minimum size=8, chamfered rectangle corners=north west, chamfered rectangle xsep=2, chamfered rectangle ysep=2, tikzit shape=rectangle]
\tikzstyle{Empty}=[draw=black, shape=rectangle, dashed, minimum size=16, tikzit draw=black]
\tikzstyle{Inv}=[fill={blue!20}, draw=blue, shape=rounded rectangle, minimum width=23, minimum height=16, rounded rectangle west arc=none, tikzit shape=rectangle]
\tikzstyle{CoInv}=[fill={blue!20}, draw=blue, shape=rounded rectangle, minimum width=23, minimum height=16, rounded rectangle east arc=none, tikzit shape=rectangle, tikzit draw=red]
\tikzstyle{Ortho}=[fill={red!20}, draw=red, shape=rounded rectangle, minimum width=23, minimum height=16, rounded rectangle west arc=none, tikzit shape=rectangle]
\tikzstyle{CoOrtho}=[fill={red!20}, draw=red, shape=rounded rectangle, minimum width=23, minimum height=16, rounded rectangle east arc=none, tikzit shape=rectangle, tikzit draw=red]
\tikzstyle{InvBig}=[fill={blue!20}, draw=blue, shape=rounded rectangle, minimum width=24, minimum height=32, rounded rectangle west arc=none, tikzit shape=rectangle]
\tikzstyle{CoInvBig}=[fill={blue!20}, draw=blue, shape=rounded rectangle, minimum width=24, minimum height=32, rounded rectangle east arc=none, tikzit shape=rectangle, tikzit draw=red]
\tikzstyle{RelGraySmall}=[fill=gray, draw=black, shape=chamfered rectangle, minimum size=6, chamfered rectangle corners=north east, chamfered rectangle xsep=1, chamfered rectangle ysep=1, tikzit shape=rectangle, tikzit fill={rgb,255: red,191; green,191; blue,191}]
\tikzstyle{CoRelGraySmall}=[fill=gray, draw=black, shape=chamfered rectangle, tikzit draw=red, tikzit fill={rgb,255: red,191; green,191; blue,191}, minimum size=6, chamfered rectangle corners=north west, chamfered rectangle xsep=1, chamfered rectangle ysep=1, tikzit shape=rectangle]

\tikzstyle{new edge style 3}=[-, dashed]
\tikzstyle{arrow}=[->]
\tikzstyle{arrow2}=[->, dashed]
\tikzstyle{arrowlarge}=[->, thick=3pt]
\tikzstyle{arrowbluelarge}=[-, draw=blue, thick=3pt, line width=2pt]
\tikzstyle{arrowredlarge}=[-, draw=red, thick=2pt, line width=2pt]
\tikzstyle{arrow3}=[<->]
\tikzstyle{arrow3large}=[<->, thick=3pt]
\tikzstyle{arrowBlue}=[->, draw=blue, thick=2.2pt]
\tikzstyle{arrowBlueduplo}=[<->, draw=blue, thick=2.2pt]
\tikzstyle{arrowRedduplo}=[<->, draw=red, thick=2.2pt]
\tikzstyle{arrowRed}=[->, draw=red, thick=2.2pt]
\tikzstyle{arrowgreen}=[->, draw=green, thick=2.2pt]
\tikzstyle{arroworange}=[->, draw=orange, thick=2.2pt]

\usepackage{subcaption}

\usepackage[numbers]{natbib}

\DeclareMathOperator{\argmin}{argmin}
\DeclareMathOperator{\rank}{rank}

\DeclareMathOperator{\Img}{Im}
\DeclareMathOperator{\Ker}{Ker}
\DeclareMathOperator{\CoKer}{Nil}
\DeclareMathOperator{\Dom}{Dom}

\DeclareMathOperator{\Tot}{Tot}
\DeclareMathOperator{\Det}{Det}
\DeclareMathOperator{\Inj}{Inj}
\DeclareMathOperator{\Sur}{Sur}

\newtheorem{theorem}{Theorem}
\newtheorem{lemma}[theorem]{Lemma}
\newtheorem{example}{Example}

\newtheorem{remark}{Remark}
\newtheorem{corollary}[theorem]{Corollary}
\newtheorem{proposition}[theorem]{Proposition}
\newtheorem{definition}{Definition}
\newdefinition{rmk}{Remark}
\newproof{pf}{Proof}
\newproof{pot}{Proof of Theorem \ref{thm}}

\begin{document}
\let\WriteBookmarks\relax
\def\floatpagepagefraction{1}
\def\textpagefraction{.001}

\shorttitle{}

\shortauthors{J A Mota; I Leal de Freitas; L Rufino; J Paixão}

\title [mode = title]{Least-Squares and Low-Rank Approximation for Linear Relations Using a Diagrammatic Language}

\author[1]{Júlia de Araújo Mota}[orcid=0000-0002-9547-9629]
\cormark[1]
\ead{juliamota@ufrj.br}
\credit{}

\author{Iago {Leal de Freitas}}[orcid=0009-0001-6813-5863]
\ead{iago@iagoleal.com}
\credit{}

\author[2]{Lucas Rufino}[orcid=0009-0009-2402-7949]
\ead{rufino.martelotte.lucas.c5@s.mail.nagoya-u.ac.jp}
\credit{}

\author[1]{João Paixão}[orcid=0000-0002-1610-9022]
\ead{jpaixao@ic.ufrj.br}

\credit{}

\affiliation[1]{organization={Universidade Federal do Rio de Janeiro},
            addressline={Instituto de Computação, Avenida Athos da Silveira Ramos, 274},
            city={Rio de Janeiro},
           postcode={21941-916},
            state={RJ},
            country={Brazil}}

\affiliation[2]{organization={Nagoya University},
            addressline={Furocho, Chikusaku},
            city={Nagoya},
            postcode={464-8602},
            state={Aichi},
            country={Japan}}

\cortext[1]{Corresponding author}

\begin{abstract}
We employ the machinery of linear relations to the study of optimization problems in linear algebra. We first show that the relational version of the pseudo-inverse can be realized through a generalization of the least-squares problem. This allows one to prove that the pseudo-inverse realizes the solution of certain relational optimization problems. Our main result is showing that a certain truncation of this pseudo-inverse defines a solution to a relational version of the classical low-rank approximation problem which recovers both the Eckart-Young Theorem and several optimization problems involving pairs of matrices and vector spaces.
\end{abstract}

\begin{keywords}
Linear Relations \sep String Diagrams \sep Least-Squares \sep Low-Rank Approximation \sep GSVD \sep Pseudoinverse
\end{keywords}

\maketitle

\section{Introduction}\label{sec:introduction}

Low-rank approximation is the name given to the problem of approximating a given real matrix $A$ by a matrix $S$ of lower rank. Solving this problem is one of the most standard and fundamental routines in linear algebra. More concretely, for $0 \leq k \leq {\rm rank} \; A$, one is interested in finding a solution to the optimization problem
\begin{equation}\label{eq:intro_low_rank_approximation} \underset{S}{\rm argmin} \left[  \| A - S \| \;:\; {\rm rank}\;S = k \right],
\end{equation}
where $\| \cdot \|$ is the Frobenius norm (i.e. finding a minimizer $S$ of the function $\|A-S\|$ over the set of all matrices of rank $k$). A solution is constructed as follows. First, one computes the Singular Value Decomposition (SVD) of $A$, written as $A = U\Sigma V^T$. The matrix $\Sigma$ is diagonal with non-decreasing diagonal entries called the singular values of $A$. Let $\Sigma_k$ be the $k$-th truncation of $\Sigma$, i.e. the diagonal matrix whose $(i, i)$-th entries are zero if $i > k$, otherwise coincide with those of $\Sigma$. Then, we have the Eckart-Young Theorem (EYT).

\begin{theorem}[Eckart-Young]\label{thm:intro_eckart_young} The optimization problem~\eqref{eq:intro_low_rank_approximation} has a (unique) solution given by the $k$-th truncation of $A$, i.e.
\begin{equation}\label{eq:intro_truncation_of_A}A_k := U \Sigma_k V^T.\end{equation}
\end{theorem}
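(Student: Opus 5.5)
The plan is to reduce the problem to choosing a $k$-dimensional subspace, and then maximize a trace over such subspaces. Throughout, I read the diagonal of $\Sigma$ as ordered $\sigma_1 \geq \sigma_2 \geq \cdots \geq 0$; this is the convention under which $A_k$ is the intended object. Uniqueness should be understood under the generic hypothesis $\sigma_k > \sigma_{k+1}$. Without it, rotating inside the eigenspace of the repeated singular value produces other minimizers, so I will state and use this hypothesis explicitly.

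\emph{Step 1: fix the column space.} Let $S$ have rank $k$ and let $W = \Img S$, with orthogonal projector $P_W$. Decomposing $A - S = (I - P_W)A + (P_W A - S)$, the two summands have columns in $W^\perp$ and $W$ respectively, so they are Frobenius-orthogonal. Pythagoras gives
\[
\|A - S\|^2 = \|(I-P_W)A\|^2 + \|P_W A - S\|^2 \;\geq\; \|A\|^2 - \|P_W A\|^2 .
\]
Equality holds iff $S = P_W A$. Since $\rank P_W A \leq k$, any minimizer over the set of matrices of rank at most $k$ is of the form $P_W A$. It therefore suffices to maximize $\|P_W A\|^2 = \operatorname{tr}(P_W A A^T)$ over $k$-dimensional subspaces $W$.

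\emph{Step 2: the trace maximization.} This is the step I expect to carry the real content. Writing $AA^T = U\Sigma\Sigma^T U^T$, I will diagonalize to reduce to $\operatorname{tr}(Q D)$. Here $D = \operatorname{diag}(\sigma_1^2, \sigma_2^2, \dots)$ and $Q = U^T P_W U$ is again an orthogonal projector of rank $k$. The diagonal entries $q_{ii}$ of $Q$ satisfy $0 \leq q_{ii} \leq 1$ and $\sum_i q_{ii} = k$, and $\operatorname{tr}(QD) = \sum_i q_{ii}\sigma_i^2$. A linear function over this polytope is maximized by putting weight $1$ on the $k$ largest $\sigma_i^2$, so $\operatorname{tr}(QD) \leq \sum_{i\leq k}\sigma_i^2$ (Ky Fan).

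When $\sigma_k > \sigma_{k+1}$, equality forces $q_{ii} = 1$ for $i \leq k$ and $q_{ii} = 0$ for $i > k$. For a projector, $q_{ii} = \|Qe_i\|^2$, so $q_{ii}=0$ gives $Qe_i=0$, and $q_{ii}=1$ gives $Qe_i = e_i$. Hence $Q$ is the projector onto $\operatorname{span}(e_1,\dots,e_k)$, i.e. $W$ is spanned by the first $k$ columns of $U$. This yields uniqueness of the optimal $W$.

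\emph{Step 3: conclude.} With $W = \operatorname{span}(u_1,\dots,u_k)$, we get $P_W A = U\,\operatorname{diag}(1,\dots,1,0,\dots)\,U^T U\Sigma V^T = U\Sigma_k V^T = A_k$. This matrix has rank exactly $k$ whenever $k \leq \rank A$, so it is admissible for the rank-$k$ constraint. It attains the lower bound $\|A\|^2 - \sum_{i\le k}\sigma_i^2 = \sum_{i>k}\sigma_i^2$ from Steps 1 and 2, so it is a minimizer. By the equality cases of both steps, it is the only minimizer under the gap hypothesis.

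The main obstacle is Step 2, specifically the equality analysis that gives uniqueness. Existence alone could alternatively be obtained via Weyl's inequality $\sigma_{i+k}(A) \leq \sigma_i(A-S)$ for $\rank S \le k$, then summing squares. However, that route does not identify the minimizer as cleanly, so I prefer the projection argument.
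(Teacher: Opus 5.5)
Your proof is correct, and it follows a different route from the paper. The paper never proves Eckart--Young directly. In Theorem~\ref{thm:eckart-young} it instantiates the relational low-rank result, Theorem~\ref{thm:Eckart-Young-gsvd}, with $R$ the cofunction $\{(x,y) : x = Ay\}$, so that $k_I = k_S = 0$ and the GSVD reduces to the SVD of $A$, and then changes variables $S \mapsto A - S$. Theorem~\ref{thm:Eckart-Young-gsvd} is itself a diagrammatic translation of De Moor and Golub's RSVD result (Theorem~\ref{thm:teo6} with $C = I$), which is imported without proof. So in the paper the analytic content is outsourced and only existence is obtained; the paper says explicitly that it recovers the theorem ``up to the uniqueness statement''. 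What it gains is uniformity: the same theorem also gives the least-norm left inverse and the subspace-constrained problems of Table~\ref{tab:instantiations_of_geyt}. Your route uses Pythagoras to reduce to $S = P_W A$ and then a Ky Fan bound via the diagonal of a rank-$k$ projector. It is self-contained, gives the optimal value $\sum_{i>k}\sigma_i^2$ explicitly, and its equality analysis supplies the uniqueness the paper's proof lacks. The one step you assert without justification closes in a line: with $t = \sum_{i>k} q_{ii} = \sum_{i\le k}(1-q_{ii})$, the deficit $\sum_{i\le k}\sigma_i^2 - \sum_i q_{ii}\sigma_i^2$ is at least $t(\sigma_k^2 - \sigma_{k+1}^2)$, which forces $t = 0$. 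Both of your interpretive choices are needed corrections to the statement as written. The introduction orders the singular values ``non-decreasing'', under which $\Sigma_k$ keeps the $k$ smallest and $A_k$ is not optimal. Uniqueness also fails without $\sigma_k > \sigma_{k+1}$: for $A = I_2$ and $k = 1$, every $uu^T$ with $\|u\| = 1$ attains the minimum. The cost of your approach is that Step~1 is tied to the Frobenius norm, whereas the De Moor--Golub theorem behind the paper's argument holds for any unitarily invariant norm. For the statement at hand, which is posed in the Frobenius norm, this costs nothing.
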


\noindent This essentially answers all questions one might have about~\eqref{eq:intro_low_rank_approximation}. For some natural variants of the problem, however, the solution is not so easy. For example, let $W$ be a subspace of the domain of $A$. The \emph{restricted} low-rank approximation problem \begin{equation}\label{eq:intro_restricted_low_rank_approximation} \underset{S}{\rm argmin} \left[ \| A - S \| \;:\; {\rm rank}\; \left. S \right|_W = k \right]
\end{equation}
is very hard to solve using solely the SVD. One way to solve it is via an important work by De Moor and Golub~\cite[Theorem 6]{Golub:1991} which describes a solution (as well as sufficient conditions for the existence of a solution) to the more general optimization problem
\begin{equation}\label{eq:intro_golubs_triple_matrix_optimization_problem}
\underset{S}{\rm argmin} \left[ \|S\| \;:\; {\rm rank}(A - BSC)\right].
\end{equation}
More concretely, a solution is given in terms of the so-called Restricted Singular Value Decomposition (RSVD) of the triplet $(A,B,C)$. Going back to the restricted low-rank approximation problem~\eqref{eq:intro_restricted_low_rank_approximation}, if one first computes a matrix $M_W$ whose image is $W$, then
\begin{align*}
&\;\;\;\;\;\;\; T \in \underset{S}{\rm argmin} \; \left[\|A-S\| \;:\; {\rm rank}\;\left. S \right|_W = k \right] \\
&\Leftrightarrow\; T \in \underset{S}{\rm argmin} \; \left[\|A-S\| \;:\; {\rm rank} \; (SM_W) = k \right] \\
&\Leftrightarrow\; A - T \in \underset{S}{\rm argmin} \; \left[\|S\| \;:\; {\rm rank} \; ((A - S)M_W) = k \right] \\
&\Leftrightarrow\; A - T \in \underset{S}{\rm argmin} \; \left[\|S\| \;:\; {\rm rank} \; (AM_W - SM_W) = k \right].
\end{align*}
This is now in the format of~\eqref{eq:intro_golubs_triple_matrix_optimization_problem}, and so the problem is solved by applying the RSVD for the matrix triplet $(AM_W, I, M_W)$.

This solution, even though implementable, is arguably not completely satisfactory from a conceptual standpoint for a number of reasons, some examples given as follows.
\begin{itemize}
\item One first has to compute a basis or a generating set $M_W$ of the space $W$.
\item The existence of a solution is hard to describe. Indeed, a solution exists if and only if $r_{ab} + r_{ac} - r_{abc} \leq k \leq \min({r_{ab}, r_{ac}})$, where the terms $r_{ab}, r_{ac}, r_{abc}$ are ranks of different blocked matrices involving the triplet $(A, B, C)$ (see Appendix~\ref{ap:functionalizer} for a precise definition).
\item The solution itself is delicate to describe: even defining the RSVD requires chopping up a blocked matrix into another ten-by-ten blocked matrix (100 blocks!), and the size of each block is given by a certain combination of the numbers in the previous item (see~\cite[Section 2]{Golub:1991}).
\end{itemize}

\noindent These conceptual issues often occur in linear algebra. Sometimes they are an inevitable reflex of the sheer difficulty of the problem. Other times, however, they are avoidable, which is often the case for problems mixing linear maps and vector spaces, just as~\eqref{eq:intro_restricted_low_rank_approximation}. The key obstacle usually lies in the fact that the SVD, the RSVD and other famous matrix decompositions cannot receive vector spaces as input. This forced us to first compute a basis of $W$; in other cases one might have to resort to set-theoretical notation. A possible remedy is to change the language: instead of working with linear maps and vector spaces as if they were objects of different nature, one instead declares linear relations as the primary object of study.

\subsection{The workflow of linear relations}


Let $V, W$ be two vector spaces. A linear relation $R$ from $V$ to $W$ is a subspace of the product $V \times W$, and we write $R \subseteq V \times W$. It is also standard to write $xRy$ to mean $(x, y) \in R$. Any vector space $V$ can be seen as a linear relation via the trivial inclusion $V \subseteq \{0\} \times V$. Moreover, any linear transformation $M : V \rightarrow W$ can be seen as a linear relation via its graph ${\rm Gr}(M) := \{(x, y) \;:\; Mx = y\} \subseteq V \times W$. We say a linear relation $R$ is (or corresponds to) a linear transformation $M$ if it is equal to ${\rm Gr}(M)$.


We now want to illustrate what can be called the "workflow" of linear relations, that is, how can one use them to sometimes put theorems of seemingly different natures into the same context. Let us consider, as an example, perhaps the most famous theorem in linear algebra: the Invertible Matrix Theorem (IMT). In order to be precise (since the IMT is often presented as an equivalence of more than 10 statements), we mean the following.

\begin{theorem}[Invertible Matrix Theorem] For a square matrix $M$, the following are equivalent.
\[\begin{tabular}{l l l l}
    (i) & $M$ is surjective, & (iv) & $M$ is injective, \\
    (ii) & $M$ has a right-inverse, & (v) & $M$ has a left-inverse, \\
    (iii) & $M$ has full rank, & (vi) & $M$ has an inverse.
\end{tabular}\]
\end{theorem}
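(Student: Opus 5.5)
The plan is to prove the six conditions equivalent through two chains, one on each side of the dimension count, and join them using rank--nullity. Let $M$ be $n\times n$, viewed as a linear map $\mathbb{R}^n\to\mathbb{R}^n$.

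\emph{Surjective side.} First, (i)$\Leftrightarrow$(iii), because full rank means $\dim \Img M = n$, and a subspace of $\mathbb{R}^n$ of dimension $n$ is all of $\mathbb{R}^n$. Second, (i)$\Rightarrow$(ii): choose a basis $e_1,\dots,e_n$, pick preimages $x_i$ with $Mx_i=e_i$, and let $R$ be the matrix with columns $x_i$, so $MR=I$. Third, (ii)$\Rightarrow$(i): if $MR=I$, then every $y$ equals $M(Ry)$.

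\emph{Injective side.} For (iv)$\Rightarrow$(v), note that $M$ sends a basis to a linearly independent set. Extend that set to a basis and define $L$ by sending each $Me_i$ back to $e_i$ and the remaining basis vectors to $0$; then $LM=I$. For (v)$\Rightarrow$(iv), if $Mx=0$ then $x=LMx=0$.

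\emph{Linking the sides.} This is the only step that uses squareness, and it is the crux. Rank--nullity gives $\dim\Ker M+\operatorname{rank} M=n$, so injectivity ($\Ker M=0$) holds if and only if $\operatorname{rank}M=n$, which is (iv)$\Leftrightarrow$(iii).

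\emph{Bringing in (vi).} Clearly (vi) implies both (ii) and (v). Conversely, if (ii) and (v) both hold, then
\[
L = L(MR) = (LM)R = R,
\]
so this common matrix is a two-sided inverse. Since the linking step shows that (ii) and (v) are each equivalent to (iii), any one of the conditions (i)--(v) yields both one-sided inverses and hence (vi).

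All the steps are routine. The only point needing care is the rank--nullity link, since without squareness the surjective and injective chains are genuinely separate.
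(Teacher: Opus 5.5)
Your proof is correct, but it takes a different route from the one the paper has in mind. The paper does not actually prove the IMT. It only says that the customary proof goes through the canonical decomposition $M = U\begin{bmatrix} I & 0\\ 0 & 0\end{bmatrix}V$ with $U,V$ invertible, and it defers the relational version to \cite{freitas2025imt}.

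On that route, each condition is transported through the invertible outer factors to the middle block and read off from its shape:
surjectivity and having a right inverse both mean there are no zero rows;
injectivity and having a left inverse both mean there are no zero columns;
full rank means the identity block fills the whole matrix.
The one-sided inverses can then be written down directly as $V^{-1}\begin{bmatrix} I & 0\\ 0 & 0\end{bmatrix}^{\top}U^{-1}$. Squareness is used only to observe that the numbers of zero rows and zero columns coincide, both being $n-r$. In the relational language of Lemma~\ref{lem:no_lolipop_GSVD}, a map has $k_T=k_D=0$, so $m=d+k_I$ and $n=d+k_S$; then $m=n$ forces $k_I=k_S$, i.e.\ injective if and only if surjective.

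You instead build the one-sided inverses by hand from preimages of a basis and from basis extension. You close the loop with $L=L(MR)=(LM)R=R$ and put all the use of squareness into rank--nullity. Your argument is more elementary and self-contained, since it needs no decomposition theorem (whose proof rests on the same basis-extension facts anyway). The decomposition route instead buys uniformity: every property becomes the vanishing of a block. That is exactly the form that carries over to linear relations, where there are four such defect dimensions $k_I,k_S,k_T,k_D$ and IMT-type statements reduce to counting wires.

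One small fix: in (i)$\Rightarrow$(ii), take $e_1,\dots,e_n$ to be the standard basis. For an arbitrary basis, $MR$ is the matrix whose columns are the $e_i$, not $I$.
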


\noindent The IMT therefore "connects" together a number of statements concerning a single matrix $M$. To prove it, it is costumary to make use of a matrix decomposition, e.g. the canonical decomposition
\begin{equation}\label{eq:intro_canonical_decomposition}
M = U \begin{bmatrix}
I & 0 \\ 0 & 0
\end{bmatrix} V,
\end{equation}
where $U,V$ are invertible matrices. It turns out that this decomposition can be generalized in a natural way to linear relations and, as a result, one can prove an analogous characterization theorem about linear relations which reduces to the IMT. Moreover, linear relations being more general than linear maps, this new theorem encodes much more information than the original IMT, which is not limited to statements about a single matrix anymore. This includes results about the existence of solutions to linear systems, and also the Exchange Lemma, which is a foundational lemma about vector spaces present in any proof-focused linear algebra textbook. For a more detailed account of this story, see~\cite{freitas2025imt}. Figure~\ref{fig:intro_relations_connecting_theorems} illustrates the usual "workflow" of linear relations and how they can be used to put results mixing linear maps and vector spaces in the same context as results in which only one of the two appear.

\begin{figure}[pos=htb]
    \centering
    \begin{tikzcd}
\begin{matrix} \text{relational} \\ \text{version of} \\ \text{decomposition }D \end{matrix} \arrow[d, "{\rm specializes} \; {\rm to}"] \arrow[rr, "\text{used to prove}"] &  & \begin{matrix} \text{relational} \\ \text{version of} \\ \text{theorem }T \end{matrix} \arrow[d, "{\rm specializes}\;{\rm to}"] \arrow[rr, "\text{specializes to}"] &  & \begin{matrix} \text{theorems} \\  \text{mixing} \\ \text{matrices and} \\ \text{vector spaces} \end{matrix} \\
{\rm decomposition} \; D \arrow[rr, "\text{used to prove}"] &  & {\rm theorem}\; T  &  &
\end{tikzcd}
    \caption{How linear relations are used to "connect" theorems together.}
    \label{fig:intro_relations_connecting_theorems}
\end{figure}

\subsection{Objectives}

Following the notation of Figure~\ref{fig:intro_relations_connecting_theorems}, the goal of this paper, roughly speaking, is to apply this workflow when $D$ is the SVD and $T$ is the Eckart-Young Theorem. This is not so straightforward as the canonical decomposition of matrices mentioned earlier, the reasons being as follows.
\begin{itemize}
    \item First, one needs to make sense of what the word "rank" means for linear relations. This is not a trivial question, since, for example, in the canonical decomposition of a linear relation one needs three numbers to describe the middle dimensions (as opposed to just one, in the case of linear maps). In this paper, we adopt a simpler approach and define the rank of a linear relation as the dimension of its image, i.e. \[{\rm rank}\; R := {\rm dim} \{y \;:\; \exists x, \; xRy\}.\] This of course matches the usual definition of rank when $R$ is a linear transformation.
    \item The second problem is that the term $\| R - S \|$ in~\eqref{eq:intro_low_rank_approximation} is not well-defined since there is not a good candidate for what the norm of a linear relation should be. If $R$ corresponded to a linear transformation, we could make the (involutive) change of variables $S \mapsto R - S$ and make it so only $S$ appears inside of the norm. This "modified" version of the problem is the one we will consider in this paper.
\end{itemize}
Putting these two considerations together, the main result of this paper Theorem~\ref{thm:Eckart-Young-gsvd}, which describes a solution (together with a sufficient condition for its existence) to the relational low-rank approximation problem
\begin{equation}\label{eq:intro_relational_low_rank_optimization} \underset{S}{\rm argmin} \left[ \|S\| \;:\; {\rm rank}(R - S) \right],
\end{equation}
where $R - S := \{(x, y_1 - y_2) \;:\; xRy_1 \text{ and } xSy_2 \}$ and $S$ is a cofunction (i.e. it equals $\{(x, y) \;:\; x = \widetilde{S}y\}$ for some linear transformation $\widetilde{S}$; here we define $\|S\| := \|\widetilde{S}\|$). When $R$ itself is a cofunction, this reduces to the usual low-rank approximation problem~\eqref{eq:intro_low_rank_approximation} as we recover the Eckart-Young Theorem up to the uniqueness statement. Therefore, for the purposes of this explanation, let us call our main result the Generalized Eckart-Young Theorem (GEYT). What about when $R = A$ is a linear transformation? If $k = 0$, the relational low-rank approximation problem reduces to
\begin{equation}\label{eq:intro_pseudo_inverse_optimization_problem}
\underset{S}{\rm argmin} \left[\|S\| \;:\; SA = I\right],
\end{equation}
where now $S$ is a linear transformation. In other words, it is the problem of finding the smallest left-inverse of $A$. The solution is known to be the pseudo-inverse $A^+$ of $A$; it is the unique linear transformation satisfying the four Moore-Penrose equations~\citep{Penrose:1955}, and can be characterized by the fact that it is a solution to the least-squares problem
\[A^+y \in \underset{x}{\rm argmin} \; \|Ax - y\|.\]
If $k > 0$, we get a truncation (in the same sense of~\eqref{eq:intro_truncation_of_A}) of the pseudo-inverse. This is the first conceptual advantage of the relational method: the Eckart-Young Theorem and the solution to the pseudo-inverse optimization problem~\eqref{eq:intro_pseudo_inverse_optimization_problem} are particular cases of the same result.

In the same way that the proof of EYT uses the SVD, the proof of the GEYT uses a relational version of the SVD called the Generalized Singular Value Decomposition (GSVD), first introduced by~\cite{vanloan1976} and later developed by many others~\citep{paige1981,sun1983,vanloan1985,Betcke:2008,Gockenbach:2016,Edelman:2020}. We therefore have completed the "workflow" diagram from Figure~\ref{fig:intro_relations_connecting_theorems} with the exception of the mixed theorems, as illustrated in Figure~\ref{fig:lift_of_svd}.

\begin{figure}[pos=htb]
    \centering
    \begin{tikzcd}
{\rm GSVD} \arrow[d, "{\rm specializes} \; {\rm to}"] \arrow[rr, "\text{used to prove}"] &  & {\rm GEYT} \arrow[d, "{\rm specializes}\;{\rm to}"] \arrow[rr, "\text{specializes to}"] &  & \begin{matrix} \text{theorems} \\  \text{mixing} \\ \text{matrices and} \\ \text{vector spaces} \end{matrix} \\
{\rm SVD} \arrow[rr, "\text{used to prove}"] &  & {\rm EYT}  &  &
\end{tikzcd}
    \caption{The linear relation workflow applied to the Eckart-Young Theorem.}
    \label{fig:lift_of_svd}
\end{figure}

\noindent The remaining question is what fits in the remaining node of this diagram? In other words, what theorems we get by instantiating $R$ as mixtures of linear maps and vector spaces? It turns out that, given a linear transformation $A$ and a subspace $W$ of the domain of $A$, if we instantiate $R$ as the linear relation
\begin{equation}\label{eq:intro_restricted_low_rank_approximation_instance}
\{(x, y) \;:\; x = Ay, \; y \in V\},
\end{equation}
we specialize to the restricted low-rank approximation problem~\eqref{eq:intro_restricted_low_rank_approximation}, thus showing that both the restricted and non-restricted low-rank approximation problems are solved by applying the same theorem to different inputs (notice that we did not have to compute a basis for $W$!). This, of course, is not the only approximation problem obtained from the GEYT. There are many other variants, seven of which can be seen, in a sense, dual versions of the restricted low-rank approximation problem; these are presented in Table~\ref{tab:instantiations_of_geyt}.

\begin{table}[pos=htb]
  \centering
  \begin{tabular}{c l r}
    {$R$ in Graphical Syntax} & {$R$ in Point-full Syntax} & {Optimization Condition} \\ \hline
    \scalebox{0.8}{\input{subspace_intro_7.tikz}} & $\{(x, y) \;:\; x = Ay, \; y \in W\}$ & ${\rm rank} \left.(A - S)\right|_W = k$ \\ \hline
    \scalebox{0.8}{\input{subspace_intro_6.tikz}} & $\{(x+v, y) \;:\; x = Ay, \; v \in V\}$ & ${\rm dim}( {\rm Img}(A - S)  + V) = k$ \\ \hline
    \scalebox{0.8}{\input{subspace_intro_1.tikz}} & $\{(x, y) \;:\; Ax = y, \; x \in W\}$ & ${\rm rank} \left. (I - SA) \right|_W = k$ \\ \hline
    \scalebox{0.8}{\input{subspace_intro_2.tikz}} & $\{(x+w, y) \;:\; Ax = y, \; w \in W\}$ & ${\rm dim}({\rm Img}(I - SA) + W) = k$ \\ \hline
    \scalebox{0.8}{\input{subspace_intro_3.tikz}} & $\{(x, y) \;:\; Ax = y, \; y \in V\}$ & ${\rm rank} \left. (I - SA) \right|_{A^{-1}[V]} = k$ \\ \hline
    \scalebox{0.8}{\input{subspace_intro_4.tikz}} & $\{(x, y+v) \;:\; Ax = y, \; v \in V\}$  & ${\rm dim}( {\rm Img}(I - SA) + SV) = k$ \\ \hline
    \scalebox{0.8}{\input{subspace_intro_5.tikz}} & $\{(x, y) \;:\; x = Ay, \; x \in V\}$ & ${\rm rank} \left. (A - S) \right|_{A^{-1}[V]} = k$ \\ \hline
    \scalebox{0.8}{\input{subspace_intro_8.tikz}} & $\{(x, y+w) \;:\; x = Ay, \; w \in W\}$  & ${\rm dim}({\rm Img}(A - S) + AW) = k$
  \end{tabular}
  \caption{Instantiations of GEYT, where $A$ is a linear transformation, $W$ is a subspace of the domain of $A$ and $V$ is a subspace of the codomain of $A$. Each line shows $R$ in two different syntaxes and, on the last column, the condition on the comap $S$ in the resulting optimization problem, i.e. $(\text{optimization problem}) = \underset{S}{\rm argmin} \left[ \|S\| \;:\; (\text{optimization condition}) \right]$.}
  \label{tab:instantiations_of_geyt}
\end{table}

Before proceeding, let us clarify some aspects of Table~\ref{tab:instantiations_of_geyt}. The reader may be a little confused about the first column; it contains exactly the same information as the second column, but the relation $R$ is now written using a syntax of string diagrams~\citep{selinger2010survey,BaezErbele-CategoriesInControl} for Linear Algebra developed by \citet{ZanasiThesis} called Graphical Linear Algebra (GLA)~\citep{bonchi2014categorical,Bonchi2015,bonchi2017refinement,Bonchi2019a, PAIXAO2022, blogpawel}. Of course, the reader is not expected to understand the meaning of these diagrams yet. We, however, chose to include this column here to allow the reader to appreciate the fact that, even though the instances in the second column may feel arbitrary and/or taken "out-of-the-hat", the graphical syntax reveals how strikingly similar they truly are. According to \citet{hinze2023introducing}, string diagrams are a good tool for point-free reasoning and type-checking, and, in the authors' experience, they often greatly improve the readability of proofs. For that, and many other reasons, almost all of the proofs in this paper are done using the graphical syntax. Moreover, all proofs follow the calculational style~\citep{Dijkstra1989PredicateCA}, structured as sequences of diagram manipulations. Each step includes a ``proof hint'' in brackets, placed to the right of a diagram or statement, explaining which rule or result was used to arrive at it from the previous step.

\subsection{Text Structure}

Section~\ref{sec:dictionary} presents the theoretical foundations of this work, that is, the \emph{syntax} and \emph{decompositions} used throughout the text.
Some concepts of Linear Algebra and Relations are revisited. Diagrammatic notation is presented, along with an explanation of how to reason with it. The decompositions used in the study of linear relations are introduced. The main objective of this chapter is to provide the reader with a ``user manual'' for understanding the concepts, notation, and results that will be developed in the following chapters.

Section~\ref{sec:LQ} introduces the linear relations operations, \emph{generalizing the Least Squares problem} to the context of linear relations. Building upon these characterizations, we revisit the definition of the pseudoinverse of relations and propose algorithms to calculate it. Instantiating $R$ as a linear transformation, we obtain the Moore-Penrose pseudoinverse as a particular case. Finally, we present some applications of the problem in optimization problems.

Section~\ref{sec:LR}, \emph{generalizes the Low-Rank Approximation problem} to the context of linear relations. We introduce a low-rank approximation of a linear relation obtained from the truncation of the GSVD  and prove that this is the cofunction with the lowest rank and smallest norm that approximates the linear relation. Several instances of the problem are explored, showing how this approach recovers classical results.

\section{Theoretical Foundations}\label{sec:dictionary}

This section presents the \emph{syntax} and \emph{decompositions} used in this work. Its main objective is to provide the reader with a ``user manual'' for understanding the concepts, notations, and results that will be developed in the following chapters. Whenever a proof is omitted, it is because it can be found in one of the published works \citep{MOTA2025,freitas2025imt}.

Throughout this work, we represent linear relations with \emph{string diagrams syntax}.
As noted by~\cite{hinze2023introducing}, these are well-known objects in category theory that allow equational reasoning without loss of type information.
They characterize the arrows of free strict monoidal categories~\citep{selinger2010survey} and are, therefore, rigorous mathematical objects.
\citet{PAIXAO2022} apply string diagrams to develop a graphical syntax for Linear Algebra,
known as Graphical Linear Algebra (GLA), which presents the laws of linear relations in an equational manner.

\subsection{Graphical Notation for Linear Relations}
\label{sec:syntax}

We now present an introduction to how diagrams are constructed and outline the structures of how to manipulate and reason with them. Most of the results presented here have been published in \cite{MOTA2025}. For a more thorough account of Graphical Linear Algebra (GLA), we recommend the blog series by Pawe{\l}  Soboci\'{n}ski~\citep{blogpawel}.

An $n$-by-$m$ \emph{linear relation} can be written as a two-dimensional diagram with $m$ dangling wires on the left and $n$ on the right (the chamfered edge on the right is necessary because eventually we will need to start ``flipping diagrams horizontally”, so it is important to retain directional information).
\begin{equation}
  R \subset \mathbb{R}^n \times \mathbb{R}^m \mapsto \quad \input{Introdution_GLA_n_wired_dots_relation.tikz}.
\end{equation}
For example, a $2$-by-$1$ linear relation $R$ can be written as a diagram \input{Introdution_GLA_1_by_2_relation.tikz} with one wire on the left and two wires on the right. Generally, when $R$ is a $n$-by-$m$ linear relation, it is more convenient to write $\TypedRel{R}{m}{n}$, and when there is no confusion about the type information, we may omit the numbers $m, n$ and write it as $\Rel{R}$. Lastly, when one of the numbers is zero, say, $m=0$,  then $R$ is called a \emph{linear subspace} and we omit the dangling wire on the left, writing $\begin{tikzpicture}
	\begin{pgfonlayer}{nodelayer}
		\node [style=Rel] (0) at (0, 0) {$R$};
		\node [style=none] (1) at (0.5, 0) {};
		\node [style=none] (5) at (1.75, 0) {};
	\end{pgfonlayer}
	\begin{pgfonlayer}{edgelayer}
		\draw (1.center) to (5.center);
	\end{pgfonlayer}
\end{tikzpicture}
$.

We also reserve different symbols based on how specialized the linear relation is: while the box symbol $\Rel{R}$ denotes a general linear relation, the curvy symbol $\Map{A}$ is used when we know $A$ is a linear transformation. Moreover, the blue curvy symbol $\Inv{A}$ is used when $A$ is known to have an inverse (usually denoted by $A^{-1}$), and the red curved symbol $\Ortho{U}$ denotes orthogonal linear transformations.

In diagrammatic notation, the relational composition and Cartesian product operations become, respectively, the action of connecting diagrams horizontally and stacking them vertically, as shown in Definition~\ref{def:composing_diagrams}. The opposite relation becomes simply the action of flipping the diagram to the opposite side.

\begin{definition}[Operations on Linear Relations in Diagrams]\label{def:composing_diagrams} The main operations on linear relations in diagrams are the following.
  \begin{center}
    \begin{enumerate}
      \item \label{def:composing_diagrams:seq}
            $\input{Introdution_GLA_composing_diagrams_1.tikz}$,
      \item \label{def:composing_diagrams:par}
            $\input{Introdution_GLA_composing_diagrams_2.tikz}$,
      \item \label{def:composing_diagrams:op}
            $\input{Introdution_GLA_composing_diagrams_3.tikz}$,
      \item \label{def:composing_diagrams:perp}
        $\TypedRel{A^\top}{m}{n} = \TypedCoRelGray{A}{m}{n}$.
    \end{enumerate}
  \end{center}
\end{definition}

Notice that, similar to relational composition, to compute $\Rel{R} ; \Rel{S}$, $R$ and $S$ have to be compatible: the number of right-wires of $R$ must be the same as the number of left-wires of $S$.

There are specific symbols commonly used for some linear relations, called identity, twist, zero, sum, copy, and discard. All of these graphs represent linear transformations and act as ``building blocks'' for more complex diagrams. Table~\ref{tab:generators} shows their diagrammatic symbols and corresponding linear transformations. Semantically, each diagram canonically represents a linear relation. We leave the details of how to manipulate such diagrams to Appendix~\ref{subsec:translating}.

\begin{table}[pos=htb]
  \centering
  \renewcommand{\arraystretch}{2.5}
  \begin{tabular}{r l l l}
    (Identity) & \TypedId{n} & $x \mapsto x$ & $: \mathbb{R}^n \to \mathbb{R}^n$  \\
    (Twist) & \input{Introdution_GLA_n_wired_twist.tikz} & $(y \in \mathbb{R}^m, x \in \mathbb{R}^n) \mapsto (y, x)$ & $: \mathbb{R}^{m+n} \to \mathbb{R}^{m+n}$ \\
    (Zero) & \TypedZero{n} & $y \mapsto 0$ & $: \mathbb{R}^0 \to \mathbb{R}^n$ \\
    (Sum) & \input{Introdution_GLA_n_wired_sum.tikz} & $(x \in \mathbb{R}^n, y \in \mathbb{R}^n) \mapsto x + y$ & $: \mathbb{R}^{2n} \to \mathbb{R}^n$  \\
    (Discard) & \TypedDiscard{n} & $x \mapsto 0$ & $: \mathbb{R}^n \to \mathbb{R}^0$ \\
    (Copy) & \input{Introdution_GLA_n_wired_copy.tikz} & $x \mapsto (x, x)$ & $: \mathbb{R}^n \to \mathbb{R}^{2n}$ \\
  \end{tabular}
  \caption{Special relations together with their names and graphical syntax.}
  \label{tab:generators}
\end{table}

Subspaces can be thought of as relations $R \subseteq V \times W$
where $V = \{*\}$ is the zero-dimensional space.
In diagrammatic notation, these correspond to the diagrams with 0 wires on the left.
In particular, the zero diagram \TypedZero{n} can be thought of as the zero subspace $\{0\} \subseteq \mathbb{R}^n$, whereas the opposite of the discard, \TypedCoDiscard{n}, can be thought of as the full space $\mathbb{R}^n \subseteq \mathbb{R}^n$.

\begin{definition}[Fundamental Subspaces of a Linear Relation]\label{def:fund_subspaces} For every linear relation $\Rel{R}$, we define:
  \begin{center}
    \input{Introdution_Dictionary_fund_subespaces.tikz}
  \end{center}
\end{definition}

With this definition, we can define surjectivity and injectivity graphically as
\begin{center}
\renewcommand{\arraystretch}{2.0}
\begin{tabular}{l l l l l}
    $A \text{ is injective}$ & $\iff$  &
    $\Ker(A) \subseteq \{0\}$ &
    $\iff$ &
    \input{Introdution_Maps_A_is_injective.tikz},\\

    $A \text{ is surjective}$ & $\iff$ &
    $\Img(A) \supseteq \mathbb{R}^n$ &
    $\iff$ &
    \input{Introdution_Maps_A_is_surjective.tikz}.
\end{tabular}
\end{center}
In fact, as a side note, linear transformations can be defined as linear relations satisfying $\input{Introdution_Maps_R_is_det.tikz}$ and $\input{Introdution_Maps_R_is_tot.tikz}$. If these equations hold, we say the relation $R$ is, respectively, deterministic and total.  More concretely, all four properties (injectivity, surjectivity, determinism and totality) can be written in terms of the white and black structures (see Figure~\ref{fig:typerelations}).

The diagrammatic manipulation rules, as well as useful theorems are presented in Appendix~\ref{ap:gla}. The appendix also showcases rules for converting between the graphical and the usual syntax for matrices and linear relations.

\subsection{Orthogonal Projections}
\label{sec:orthogonal-projections}

For a fixed subspace $\Subspace{W}$, we denote by $\Map{P_W}$ the \emph{orthogonal projection} onto $\Subspace{W}$, defined as follows via linear relations:

\begin{definition}[Orthogonal Projection]
  \label{def:proj-rel}
  The orthogonal projection $\Map{P_W}$ into a subspace $\Subspace{W}$
  is the linear relation:
  \[\Map{P_W} = \input{orthogonal-projection.tikz}.\]
\end{definition}

In the following, we briefly recall the properties of an orthogonal projection in terms of linear relations.

\begin{lemma}
  \label{lem:proj-props}
  An orthogonal projection satisfies
  \begin{enumerate}
    \item \label{lem:proj-props:img} $\Image{P_W}  = \Subspace{W}$,
    \item \label{lem:proj-props:ker} $\Kernel{P_W} = \begin{tikzpicture}
	\begin{pgfonlayer}{nodelayer}
		\node [style=SubspaceGray] (4) at (0, 0) {$W$};
		\node [style=none] (5) at (1.5, 0) {};
	\end{pgfonlayer}
	\begin{pgfonlayer}{edgelayer}
		\draw (4) to (5.center);
	\end{pgfonlayer}
\end{tikzpicture}
$,
    \item \label{lem:proj-props:idem} $\Map{P_W} = \Comp{P_W/Map, P_W/Map} = \CoMapGray{P_W}$.
  \end{enumerate}
\end{lemma}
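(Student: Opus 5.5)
Since the diagram in Definition~\ref{def:proj-rel} is not reproduced above, I will read it the standard way: as the linear relation
\[P_W = \{(x,y) \;:\; y \in W,\; x - y \in W^\perp\},\]
namely copy the input, require one branch to lie in $W$ (this is the output), and require the difference to lie in $W^\perp$ (the gray box). The plan is to first confirm that this relation really is a linear transformation, and then check the three items one at a time in point-full form, translating each into diagrams using the rules of Appendix~\ref{ap:gla}.

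\textbf{Step 1: $P_W$ is a function.} This is where the only genuine content sits; it rests on the decomposition $\mathbb{R}^n = W \oplus W^\perp$.
\begin{itemize}
\item \emph{Totality.} For any $x$, write $x = w + w'$ with $w\in W$ and $w' \in W^\perp$. Then $(x,w)\in P_W$.
\item \emph{Determinism.} If $(0,y)\in P_W$, then $y\in W$ and $-y\in W^\perp$, so $y\in W\cap W^\perp=\{0\}$.
\end{itemize}
Diagrammatically, totality and determinism are exactly the equations for $W + W^\perp = \mathbb{R}^n$ and $W \cap W^\perp = 0$. Both should already be available among the orthogonal-complement facts in the appendix. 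I expect this to be the main obstacle: it is the one step that uses the inner product rather than pure diagram algebra, so I would cite it as a lemma rather than derive it diagrammatically.

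\textbf{Step 2: image and kernel.}
\begin{itemize}
\item \emph{Image.} The image is contained in $W$ by construction. Conversely, every $w\in W$ satisfies $(w,w)\in P_W$, since $w-w=0\in W^\perp$. Hence $\operatorname{Im} P_W = W$.
\item \emph{Kernel.} We have $(x,0)\in P_W$ if and only if $x\in W^\perp$, which is the gray subspace.
\end{itemize}
In diagrams, both items follow by plugging discard or zero into the output wire of the defining diagram and simplifying with the bialgebra and unit laws.

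\textbf{Step 3: idempotence.}
\begin{itemize}
\item \emph{$P_W;P_W = P_W$.} Because $P_W$ fixes $W$, and the image of $P_W$ lies in $W$ by Step 2, composing $P_W$ with itself changes nothing.
\item \emph{Self-opposite.} Let $(x,y)\in P_W$ and write $x = y + w'$ with $w'\in W^\perp$. The claim is that $(y,x)\in P_W^{\mathrm{op}}$ coincides with $P_W$ itself as a relation. This holds because $P_W$ is symmetric under the inner product: $\langle P_W x, z\rangle = \langle x, P_W z\rangle$. Equivalently, since $P_W$ is a map, its opposite is $P_W^\top$, which I identify with the gray comap by Definition~\ref{def:composing_diagrams}(4).
\end{itemize}
Symmetry is checked by expanding both sides along the decomposition $W \oplus W^\perp$. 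Diagrammatically, it follows because the defining diagram is invariant under transposition: transposing swaps white and black structure and complements subspaces, which sends $W$ to $W^\perp$ and $W^\perp$ back to $W$, giving back the same picture.
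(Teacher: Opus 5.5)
The paper does not prove this lemma. It only recalls it, deferring the proof to the earlier work cited at the start of Section~\ref{sec:dictionary}, so there is no in-paper route to compare against. Your point-full verification through $\mathbb{R}^n = W \oplus W^\perp$ is a sound way to establish it. Step~1 (totality and determinism), the image and kernel computations, and $P_W;P_W = P_W$ are all correct as written.

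The one step that is wrong as stated is the second equality of item~3. You call it ``self-opposite,'' claim that $P_W^{\mathrm{op}}$ coincides with $P_W$, and claim that ``since $P_W$ is a map, its opposite is $P_W^\top$.'' Both claims are false:
\begin{itemize}
\item The opposite $P_W^{\mathrm{op}} = \{(y,x) : y \in W,\ x - y \in W^\perp\}$ is neither total (its domain is $W$) nor deterministic (it relates $y$ to the whole coset $y + W^\perp$). So it is not even a function unless $W = \mathbb{R}^n$.
\item For a general map, the opposite equals the transpose only when the map is orthogonal.
\end{itemize}

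What Definition~\ref{def:composing_diagrams}(4) says is that the \emph{gray} mirrored box is the transpose. That box is the orthogonal complement of the opposite, not the opposite itself. So the statement to prove is $P_W^\top = P_W$, and the symmetry you invoke is exactly what is needed. Write $x = a + a'$ and $z = b + b'$ with $a,b\in W$ and $a',b'\in W^\perp$; then $\langle P_W x, z\rangle = \langle a, b\rangle = \langle x, P_W z\rangle$.

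If you drop the self-opposite framing and the claim $P_W^{\mathrm{op}} = P_W^\top$, the proof is complete. Your closing diagrammatic remark is the right picture for the corrected claim: taking the gray mirror image swaps the colors and exchanges $W$ with $W^\perp$, returning the same relation. That remark fits $P_W^\top = P_W$, not the ``self-opposite'' statement.
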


\subsection{Generalized Singular Value Decomposition  for Linear Relations}\label{sec:gsvd}

Decomposition of linear transformations and linear subspaces are widely used tools in Linear Algebra,
as they reveal fundamental properties of these objects.
For linear relations, decomposition plays a similar role.
Results such as the canonical decomposition of linear relations~\citep{hassi2006canonical} and the characterization of canonical forms equivalent to Kronecker's~\citep{berger2015} allow us to separate regular and singular components, providing a clear description of the kernels, images, and dimensions involved.
More recently, we have shown how relational decompositions directly connect classical concepts of invertibility and rank to the general case of linear relations~\citep{freitas2025imt}.

The study of linear relations allows the application of a classical result: every linear relation can be expressed both in the \emph{span form} and in the \emph{cospan form}. This is described in Proposition~\ref{prop:span-cospan}.

\begin{proposition}[Span/Cospan form]\label{prop:span-cospan}
For every linear relation $\TypedRel{R}{m}{n}$, there exist $\TypedMap{A}{p}{n}$, $\TypedMap{B}{p}{m}$, $\TypedMap{\tilde{A}}{m}{q}$ and $\TypedMap{\tilde{B}}{n}{q}$ such that:
\begin{center}
\begin{tabular}{r r r l}
  \textbf{(Span)} & $\TypedRel{R}{m}{n}$ &$=$&$\TypedSpan{B}{A}{m}{p}{n}$; \\[1em]
  \textbf{(Cospan)} & $\TypedRel{R}{m}{n}$ &$=$&$\TypedCoSpan{\tilde{A}}{\tilde{B}}{m}{q}{n}$.
\end{tabular}
\end{center}
\end{proposition}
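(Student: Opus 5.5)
We prove both forms by building the maps explicitly from bases, working in the usual set-theoretic syntax and translating into diagrams only at the end. The key fact is that a relation $R \subseteq \mathbb{R}^m \times \mathbb{R}^n$ is a subspace of $\mathbb{R}^{m+n}$. Such a subspace can be presented either as the image of a linear map or as the kernel of one. The span form comes from the first presentation and the cospan form from the second.

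\textbf{Span form.} Let $p = \dim R$ and choose a basis $(x_1,y_1),\dots,(x_p,y_p)$ of $R$, with $x_i \in \mathbb{R}^m$ and $y_i \in \mathbb{R}^n$. Let $B : \mathbb{R}^p \to \mathbb{R}^m$ be the matrix with columns $x_i$, and let $A : \mathbb{R}^p \to \mathbb{R}^n$ be the matrix with columns $y_i$. The span diagram, namely the opposite of $B$ followed by $A$, denotes the relation
\[
\{(x,y) : \exists z \in \mathbb{R}^p,\ Bz = x,\ Az = y\}.
\]
This is exactly the set of linear combinations $\sum z_i (x_i, y_i)$, which is $R$. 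The choice of basis is not essential: any generating set would work.

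\textbf{Cospan form.} Let $q = m+n-\dim R$. Choose a matrix $M = [\tilde{A}\ \ -\tilde{B}] : \mathbb{R}^{m+n} \to \mathbb{R}^q$ whose kernel is $R$. For instance, its rows can be a basis of the orthogonal complement $R^\perp$, since $(R^\perp)^\perp = R$. Here $\tilde{A} : \mathbb{R}^m \to \mathbb{R}^q$ and $\tilde{B} : \mathbb{R}^n \to \mathbb{R}^q$ are the two blocks. The cospan diagram, namely $\tilde{A}$ followed by the opposite of $\tilde{B}$, denotes the relation
\[
\{(x,y) : \tilde{A}x = \tilde{B}y\} = \Ker M = R.
\]
Alternatively, the cospan form follows from the span form by duality. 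Apply the span form to the orthogonal complement relation, and use the appendix facts that complementation swaps images and kernels and turns each map into the opposite of its transpose, as in Definition~\ref{def:composing_diagrams}. Either route gives the result.

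\textbf{Main obstacle.} The linear algebra is routine. The only delicate step is checking that the diagrammatic composite really denotes the stated set. This means unfolding relational composition with the intermediate wire $\mathbb{R}^p$ or $\mathbb{R}^q$ existentially quantified, and reading the opposite of a map as the converse of its graph. I would verify this carefully against the translation rules in Appendix~\ref{subsec:translating}, after which both equalities follow at once.
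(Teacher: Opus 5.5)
Your proof is correct and complete. The span form is the image representation of $R$ as a subspace of $\mathbb{R}^{m+n}$: the columns of $\begin{bmatrix}B\\A\end{bmatrix}$ form a basis of $R$. The cospan form is the kernel representation: the rows of $[\tilde A\ \ -\tilde B]$ form a basis of the Euclidean complement of $R$, and the minus sign turns $\Ker M$ into $\{(x,y):\tilde Ax=\tilde By\}$. The degenerate cases $R=\{0\}$ and $R=\mathbb{R}^{m+n}$ are covered by allowing $p=0$ or $q=0$, which the diagrams permit.

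The paper gives no proof of this proposition. It calls it a classical result and defers to the earlier works cited at the start of Section~\ref{sec:dictionary}, so there is no in-paper argument to compare against. An alternative would be to derive span and cospan normal forms by rewriting diagrams using the graphical axioms. Your semantic route is shorter and needs nothing beyond the translation rules of Appendix~\ref{subsec:translating}. It also yields the smallest possible middle dimensions, $p=\dim R$ and $q=m+n-\dim R$. A syntactic route would instead keep the argument inside the diagram-rewriting style used in the rest of the paper.

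One small remark on your alternative duality route. What it actually needs is that the paper's complement has three properties: it is involutive, it commutes with taking opposites, and it respects composition, $(R;S)^\perp=R^\perp;S^\perp$. This is the complement for which the complement of a map $A$ is the opposite of $A^\top$, as in Definition~\ref{def:composing_diagrams}. The fact that it exchanges images and kernels is not what does the work. With these properties, if $R^\perp$ has span form with maps $B'$ and $A'$, then $R$ has cospan form with $\tilde A=(B')^\top$ and $\tilde B=(A')^\top$. Your primary route does not depend on any of this.
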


It is always possible to switch between the \emph{span} and \emph{cospan} forms. Throughout the text, both representations will be employed according to the convenience. Expressing a linear relation in the \emph{span} form is particularly useful for constructing visualizations, as illustrated in Example~\ref{ex:visual-span}, whereas the \emph{cospan} form is advantageous for verification: a pair $(x,y)$ belongs to the relation if and only if the equality $\tilde{A}x=\tilde{B}y$ holds.

\begin{example}[Span Visualization]\label{ex:visual-span}
Let $R \subseteq \mathbb{R}^{3} \times \mathbb{R}^3$ be a linear relation in span form, as described in Proposition~\ref{prop:span-cospan}. We choose $A=US$ and $B=VC$, where $U,V$ are orthogonal linear transformations, and \[S=\begin{bmatrix}1 & 0 & 0 \\0 & s & 0 \\0 & 0 & 0 \\\end{bmatrix},\quad C=\begin{bmatrix}0 & 0 & 0 \\0 & c & 0 \\0 & 0 & 1 \\\end{bmatrix}\] with $s,c \in \mathbb{R}$ such that $0< s,c< 1$. To illustrate how we can construct visualizations using \emph{span}, we apply the linear transformations induced by $A$ and $B$ to the vectors \[w_1=\begin{bmatrix}1 \\ 0 \\ 0\end{bmatrix},\quad w_2=\begin{bmatrix}0 \\ 1 \\ 0 \end{bmatrix},\quad w_3=\begin{bmatrix}0 \\ 0 \\ 1 \end{bmatrix}\quad \text{and}\quad w_4=\begin{bmatrix}1 \\ 1 \\ 1 \end{bmatrix}.\]
\begin{figure}[pos=h]
    \centering
    \input{Introdution_Relations_linear-relation-visual.tikz}
  \caption{Visualization of a linear relation $R \subseteq \mathbb{R}^{3} \times \mathbb{R}^3$ in span form.}
  \label{fig:visual-span}
\end{figure}
In Figure~\ref{fig:visual-span}, the vectors $w_i$ are shown in the upper central illustration. On the right, their images $\bar{y}_i = Sw_i$ are represented, and on the left, the corresponding images $\bar{x}_i = Cw_i$. Applying the orthogonal linear transformations $U$ and $V$ produces the lower illustrations, where $y_i = U\bar{y}_i$ and $x_i = V\bar{x}_i$. For the linear relation $R$, however, only the two lower figures are observable: we know the pairs $(x_i, y_i)$ but not the intermediate representations.
\end{example}

The choices of linear transformations $A,B$ in the Example~\ref{ex:visual-span} were not random. Through a tool known as \emph{Generalized Singular Value Decomposition} (GSVD), we can simultaneously decompose a pair of linear transformations into linear transformations with non-negative diagonal entries, structured by blocks as in $S$ and $C$, plus two rotations. The GSVD was initially introduced by \citet{vanloan1976} and later developed by many others~\citep{paige1981,sun1983,vanloan1985,Betcke:2008,Gockenbach:2016,Edelman:2020}.

\begin{theorem}[Generalized Singular Value Decomposition]\label{thm:gsvd}
  For every pair of linear transformations $\TypedMap{A}{p}{n}$ and $\TypedMap{B}{p}{m}$
  there exist numbers $q$, $k_I$, $k_S$, $k_T$, $k_D$, $d$ $\in \mathbb{N}$,
  orthogonal linear transformations $\TypedOrtho{U}{n}{n}$ and $\TypedOrtho{V}{m}{m}$,
  invertible diagonal linear transformations $\TypedInv{S}{d}{d}$ and $\TypedInv{C}{d}{d}$,
  and a surjective linear transformation $\TypedMap{H}{p}{q}$
  such that
  \[
    \TypedMap{A}{p}{n}=\input{GSVD_GSVD_A.tikz},
    \quad\quad
    \TypedMap{B}{p}{m}=\input{GSVD_GSVD_B.tikz}.
  \]
\end{theorem}

Note that the linear transformation $\Map{H}$ does not possess orthogonality or diagonality properties --- we can only guarantee its surjectivity.
Nevertheless, it does not stand a problem, since we will see below that $\Map{H}$ vanishes when we apply Theorem~\ref{thm:gsvd} to a linear relation in span form as in Proposition~\ref{prop:span-cospan}.

\begin{theorem}[GSVD for Linear Relations]
  \label{thm:gsvd-relations}
For every linear relation $\TypedRel{R}{m}{n}$, there exist numbers $k_T$, $k_D$, $k_I$, $k_S$, $d \in \mathbb{N}$, orthogonal linear transformations $\TypedOrtho{U}{n}{n}$ and $\TypedOrtho{V}{m}{m}$, and invertible linear transformations $\TypedInv{S}{d}{d}$ and $\TypedInv{C}{d}{d}$, such that
  \[\TypedRel{R}{m}{n}=\input{GSVD_GSVD.tikz} \]
\end{theorem}
\begin{proof}
  \begin{hcalculation}[=]{\scalebox{0.8}{\TypedRel{R}{m}{n}}}
    \hstep{\scalebox{0.8}{\TypedSpan{B}{A}{m}{p}{n}}}{Prop.\ref{prop:span-cospan}: Span}
    \hstep{\scalebox{0.8}{\input{GSVD_step1.tikz}}}{Thm.\ref{thm:gsvd}: GSVD}
    \hstep{\scalebox{0.8}{\input{GSVD_GSVD.tikz}}}{Fig.\ref{fig:typerelations}-DET $+$ SUR: $\scalebox{0.7}{\Span{H}{H}}=\scalebox{0.7}{\Id}$ }
  \end{hcalculation}
\end{proof}

As this result will be used several times throughout this work, in order to avoid repetition, Definition~\ref{def:diagonalrelation} defines the ``center'' part of Theorem~\ref{thm:gsvd-relations} as a \emph{diagonal linear relation}.

\begin{definition}[Diagonal Linear Relation in Graphical Version]\label{def:diagonalrelation}
Let $m, n, k_I, k_S, k_T, k_D, d \in \mathbb{N}$ such that \(m = k_T + d + k_I\) and \(n = k_D + d + k_S\), and $\TypedInv{S}{d}{d}$ and $\TypedInv{C}{d}{d}$ be invertible diagonal linear transformations. The diagonal linear relation is defined by
\[\TypedRel{D}{m}{n}=\input{GSVD_GSVD_D.tikz}.\]
\end{definition}

\begin{remark}
From now on, for any linear relation $\Rel{R}$, the notations $\Ortho{U}$, $\Ortho{V}$, and $\Rel{D}$ will always denote, respectively, the orthogonal linear transformations and the diagonal linear relation given by the GSVD (Theorem~\ref{thm:gsvd-relations} and Definition~\ref{def:diagonalrelation}).
\end{remark}

\begin{remark} Definition~\ref{def:diagonalrelation} presents the diagonal linear relation in span form. Note that, in graphical syntax, changing to cospan form is the same as inverting the invertible linear transformations $\Inv{C}$ and $\Inv{S}$ and sliding wires:
\[
\begin{tabular}{r r}
  \TypedRel{D}{m}{n}=\input{GSVD_GSVD_D_Cospan.tikz} & \textbf{(Cospan)}
\end{tabular}
\]
\end{remark}

Theorem~\ref{thm:gsvd-relations} provides a GSVD for any linear relation $\Rel{R}$. Since, by Proposition~\ref{prop:span-cospan}, every linear relation can be described in terms of two linear transformations, the following question arises: is it always possible to turn this GSVD into the GSVD of a pair of linear transformations by separately and compatibly decomposing the constituent factors $\Map{A}$ and $\Map{B}$? The answer is yes, but it requires the introduction of a linear transformation $\Map{H}$, which connects the two factorizations, as proved in \cite[Section~3.1]{freitas2025imt}.

\subsection{Fundamental properties and subspaces with the GSVD}

The GSVD is powerful enough to characterize all fundamental properties and subspaces of a linear relation.
We show in the next two lemmas that
the vanishing of its middle wires determine the four fundamental properties,
while the orthogonal transformations produce the fundamental subspaces.
In both Lemma~\ref{lem:no_lolipop_GSVD} and~\ref{lem:subespaces_GSVD},
let \Rel{R} be decomposed as in Theorem~\ref{thm:gsvd-relations}, i.e.,
\[ \TypedRel{R}{m}{n} = \input{GSVD_GSVD.tikz}.\]

\begin{lemma}[Fundamental properties via GSVD]
  \label{lem:no_lolipop_GSVD}
  For every linear relation $\Rel{R}$, the vanishing of central wires of its GSVD determine the fundamental properties of $\Rel{R}$:
  \begin{center}
    \renewcommand{\arraystretch}{2.0}
    \begin{tabular}{lcr|lcr}
      \Rel{R} is INJ   & $\iff$ & $k_I = 0$
      & \Rel{R} is SUR & $\iff$ & $k_S = 0$  \\
      \Rel{R} is TOT   & $\iff$ & $k_T = 0$
      & \Rel{R} is DET & $\iff$ & $k_D = 0$
    \end{tabular}
  \end{center}
\end{lemma}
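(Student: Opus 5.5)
The plan is to reduce each of the four properties of $\Rel{R}$ to the same property of the diagonal relation $\Rel{D}$ from Definition~\ref{def:diagonalrelation}, and then read the property off the block structure of $\Rel{D}$.

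\textbf{Reduction to $\Rel{D}$.} By Theorem~\ref{thm:gsvd-relations}, $\Rel{R}$ is the composite of $V^\top$ (the flipped orthogonal map), then $\Rel{D}$, then $\Ortho{U}$. Orthogonal maps are invertible, so $\Ortho{U}$ and its inverse are injective, surjective, deterministic and total. Each of INJ, SUR, DET and TOT is an equation built from the white and black structures (Figure~\ref{fig:typerelations}), and each is preserved under pre- and post-composition with invertible maps. For example, for determinism, $\Rel{R}$ is DET iff $\Rel{R}^{\rm op};\Rel{R}\subseteq \mathrm{Id}$. Substituting the decomposition, the inner factors $U;U^{-1}$ cancel, and the condition becomes $V ; (D^{\rm op};D) ; V^{-1} \subseteq \mathrm{Id}$. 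Conjugating by the invertible $V$ shows this is equivalent to $D^{\rm op};D\subseteq\mathrm{Id}$. The same sliding argument handles the other three properties. Hence $\Rel{R}$ has a given property iff $\Rel{D}$ does.

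\textbf{Reading off the properties of $\Rel{D}$.} $\Rel{D}$ is a direct sum (vertical stack) of four kinds of blocks:
\begin{enumerate}
\item $k_T$ left wires ending in a discard-type ``lollipop'';
\item the invertible middle part $\Span{C}{S}$ on $d$ wires;
\item $k_I$ left wires attached to a zero;
\item $k_D$ and $k_S$ right wires carrying the corresponding codiscard and cozero.
\end{enumerate}
Each of the four properties is compatible with $\oplus$: a direct sum of relations has the property iff every summand does, since the defining equations stack vertically. The middle block $\Span{C}{S}$ with $C,S$ invertible is the graph of the invertible map $C^{-1};S$, so it has all four properties.

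It then remains to check each elementary lollipop block. A nonzero-width copy of
\begin{itemize}
\item the left-discard block fails exactly totality,
\item the left-zero block fails exactly injectivity,
\item the right-codiscard block fails exactly determinism,
\item the right-cozero block fails exactly surjectivity,
\end{itemize}
and each of these blocks satisfies the other three properties. For instance, the discard $\mathbb{R}^{k}\to\mathbb{R}^0$ is total, while the zero relation on the right, $\{0\}\subseteq\mathbb{R}^k$, fails surjectivity when $k>0$.

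\textbf{Conclusion.} Combining the two steps gives, for example, $\Rel{R}$ INJ $\iff$ $\Rel{D}$ INJ $\iff$ $k_I=0$, and likewise for the other three properties.

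The main obstacle is bookkeeping rather than ideas: the exact wiring of the lollipops in \input{GSVD_GSVD_D.tikz} must be matched to the right labels $k_I,k_S,k_T,k_D$. The direct-sum compatibility must also be stated carefully, because the relational composites $D^{\rm op};D$ decompose blockwise; this follows from functoriality of $\oplus$ over composition.
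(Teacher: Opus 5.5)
Your strategy is valid, and it differs from the paper's route. The paper proves the TOT case by writing totality as an equation on $\Dom(R)$. It then substitutes the GSVD expression of that subspace from Lemma~\ref{lem:subespaces_GSVD}, cancels the orthogonal factor, and strips the wires on which both sides already agree. Finally it uses the inequality laws to conclude that the zero subspace of $\mathbb{R}^{k_T}$ equals all of $\mathbb{R}^{k_T}$ iff $k_T=0$.

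You instead use two general facts: the four properties are invariant under composition with invertible maps on either side, and they are compatible with direct sums. You then inspect $D$ block by block. Your version is more self-contained, since it does not need the subspace lemma. It also makes explicit why the other wires are irrelevant: each lollipop has the other three properties. The paper's version is a short diagram calculation once Lemma~\ref{lem:subespaces_GSVD} is available, and that lemma is reused later for Theorem~\ref{thm:gsvd-compositional} and the pseudoinverse.

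However, the bookkeeping you identify as the crux is wrong in two places.

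\textbf{The DET step.} Here $R^{\mathrm{op}};R = U^{\mathrm{op}};D^{\mathrm{op}};V;V^{\mathrm{op}};D;U$. So it is $V;V^{\mathrm{op}}$ that cancels, and the condition becomes $U^{\mathrm{op}};(D^{\mathrm{op}};D);U\subseteq\mathrm{Id}$, a conjugation by $U$ rather than $V$. This slip is harmless.

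\textbf{The left-hand lollipops are interchanged.} From the span form of $D$ (Table~\ref{tab:gsvd_withLowRank}), $x\,D\,y$ holds iff $x=(0,Cw_2,w_3)$ and $y=(w_1,Sw_2,0)$ for some $w$. This gives the following blocks.
\begin{itemize}
\item The $k_T$ block is $\{(0,*)\}\subseteq\mathbb{R}^{k_T}\times\mathbb{R}^0$, a flipped zero (white node). It is injective, deterministic and surjective, but not total.
\item The $k_I$ block is the discard $\{(x,*)\}$ (black node). It is total, as you yourself remark, but its kernel is all of $\mathbb{R}^{k_I}$. So it fails exactly injectivity.
\end{itemize}
Your bullets ``the left-discard block fails exactly totality'' and ``the left-zero block fails exactly injectivity'' are therefore false. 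The first one also contradicts your very next sentence. You reach the correct final correspondence only because you also misassign which block sits on the $k_T$ and $k_I$ wires, and the two errors cancel.

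To fix this, swap discard and zero in your item~1, your item~3 and the first two bullets. Then the argument goes through. On the right, the $k_D$ block is the codiscard and the $k_S$ block is the zero $\{(*,0)\}$. This is what your closing example correctly uses, despite item~4 calling it a ``cozero''.
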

\begin{proof}
  We prove the TOT case. The other three cases are proven similarly.
  \begin{hcalculation}[\Leftrightarrow]{\scalebox{0.8}{\input{GSVD_Fund.Prop_tot.tikz}}}
    \hstep{\scalebox{0.8}{\input{GSVD_Fund.Prop_tot_step3.tikz}}}{Lem~\ref{lem:subespaces_GSVD}}
    \hstep{\scalebox{0.8}{\input{GSVD_Fund.Prop_tot_step4.tikz}}}{Multiply $\scalebox{0.7}{\input{GSVD_Fund.Prop_tot_v.tikz}}$}
    \hstep{\scalebox{0.8}{\input{GSVD_Fund.Prop_tot_step5.tikz}}}{$\scalebox{0.7}{\input{GSVD_Fund.Prop_tot_v.tikz}}$ is orthogonal}
    \hstep{\scalebox{0.8}{\input{GSVD_Fund.Prop_tot_step6.tikz}}}{Remove $\scalebox{0.7}{\Discard}$ from both sides}
    \hstep{k_T=0}{Fig.\ref{fig:Inequality}}
  \end{hcalculation}
\end{proof}

\begin{lemma}[Fundamental subspaces via GSVD]
  \label{lem:subespaces_GSVD} For every linear relation $\Rel{R}$, we have that:
  \begin{center}
    \input{GSVD_Subespaces_fund_sub_gsvd.tikz}
  \end{center}
\end{lemma}
\begin{proof}
  We prove the case $\scalebox{0.8}{\begin{tikzpicture}
	\begin{pgfonlayer}{nodelayer}
		\node [style=Subspace] (0) at (0, 0) {$\Ker(R)$};
		\node [style=none] (1) at (-2.25, 0) {};
	\end{pgfonlayer}
	\begin{pgfonlayer}{edgelayer}
		\draw (1.center) to (0);
	\end{pgfonlayer}
\end{tikzpicture}
}$. The other three cases are proven similarly.
  \begin{hcalculation}[=]{\scalebox{0.8}{}}
    \hstep{\scalebox{0.8}{\begin{tikzpicture}
	\begin{pgfonlayer}{nodelayer}
		\node [style=White] (0) at (2, 0) {};
		\node [style=Rel] (1) at (0, 0) {$R$};
		\node [style=none] (2) at (-1.75, 0) {};
	\end{pgfonlayer}
	\begin{pgfonlayer}{edgelayer}
		\draw (0) to (1);
		\draw (1) to (2.center);
	\end{pgfonlayer}
\end{tikzpicture}
}}{Def.\ref{def:fund_subspaces}}
    \hstep{\scalebox{0.8}{\input{GSVD_Subespaces_kernel_step1.tikz}}}{Thm.\ref{thm:gsvd-relations}: GSVD}
    \hstep{\scalebox{0.8}{\input{GSVD_Subespaces_kernel_step2.tikz}}}{$\scalebox{0.7}{\Ortho{U}}$ is orthogonal}
    \hstep{\scalebox{0.8}{\input{GSVD_Subespaces_kernel_step3.tikz}}}{Fig.\ref{fig:Frobenius_Algebra} + Fig.\ref{fig:Bialgebra} + \\ $C$ and $S$ invertible}
  \end{hcalculation}
\end{proof}

\section{Relational Least-Squares}\label{sec:LQ}

The least-squares method has been a very important tool for data fitting since its original formulation by \citet{legendre1805}, and its developments extend across all areas of mathematical science. The method solves inconsistent linear systems of the form $Ax = b$, where $A$ is a linear transformation with more rows than columns. This section generalizes this method to the context of linear relations.

For illustrative purposes, consider the following motivating example: a natural disaster is about to occur, and the government needs to make quick decisions to shelter the city's population. A diagram linking each person to the nearby shelters they have access to is displayed in Figure~\ref{fig:example-relation}.
Some people, of course, may not have access to any shelter, in which case a natural measure would be to associate them to those accessible by the closest person in their vicinity, as shown in Figure~\ref{fig:illustration-sur}.

Roughly speaking, the problem receives as input a finite relation between shelters and people and returns a \emph{reasonable} (which in mathematical terms would mean minimizing a certain norm) cofunction, i.e. an injective and surjective relations, from shelters to people. If the input relation were an injective function, the output would be a particular choice of left-inverse. We now make this discussion precise by defining operations which take a linear relation as input and return, respectively, something injective, surjective, total or deterministic.

\begin{remark} From here on we often use the compact notation $\underset{x}{\argmin}{\left[F(x) \;:\; P(x) \right]}$. This means taking the argmin with respect to the function $F(x)$ over all elements $x$ satisfying the property $P(x)$.
\end{remark}

\subsection{Least-Squares Operators}


\begin{figure}[pos=htb]
  \centering
  \begin{subfigure}[b]{0.48\textwidth}
    \centering
    \includegraphics[width=0.85\linewidth]{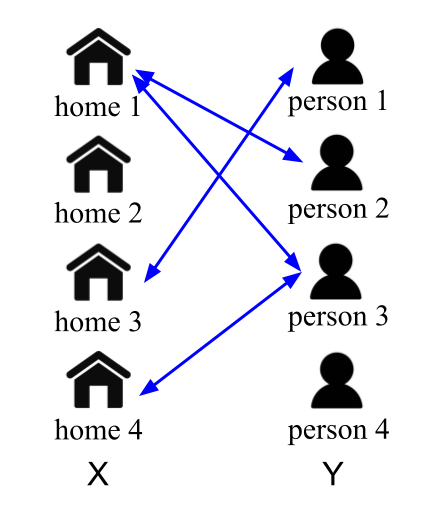}
    \caption{Example of finite relation between houses and people in a city.}
    \label{fig:example-relation}
  \end{subfigure}
  \hfill
  \begin{subfigure}[b]{0.48\textwidth}
    \centering

    \begin{subfigure}[b]{\linewidth}
      \centering
      \includegraphics[width=\linewidth]{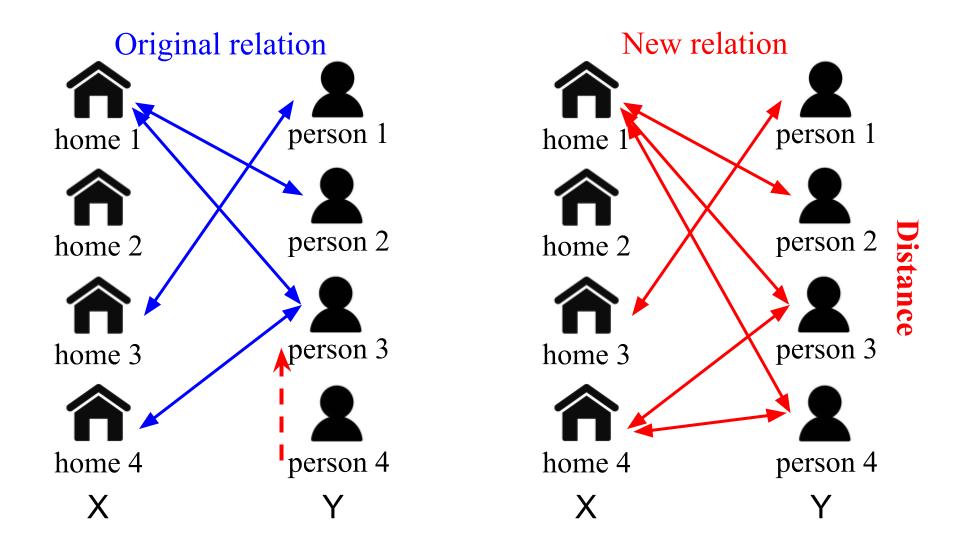}
      \caption{Example of new finite relation between houses and people in a city, ensuring that all people are sheltered.}
      \label{fig:illustration-sur}
    \end{subfigure}
    \vspace{1em}
    \begin{subfigure}[b]{\linewidth}
      \centering
      \includegraphics[width=\linewidth]{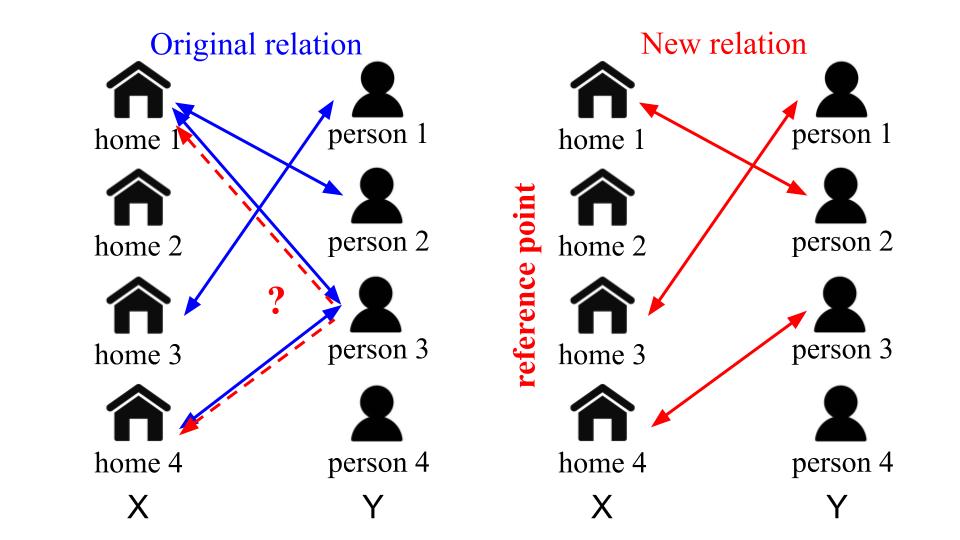}
      \caption{Example of the new finite relation between houses and people in a city, ensuring that each person is assigned to only one house.}
      \label{fig:illustration-inj}
    \end{subfigure}
  \end{subfigure}
  \caption{These examples illustrate how a linear relation can be turned injective or surjective.}
  \label{fig:illustration-properties}
\end{figure}

\begin{definition}[Generalized Least-Squares]\label{def:LQ-geometric}
  For every linear relation $R$, define
  \begin{enumerate}
    \item $\Sur(R) \coloneqq \left\{(x,\bar{y})\mid \exists y;\; xRy\;\;\text{and}\;\; y\in \underset{\hat{y}}{\argmin}\left [\left\| \bar{y}-\hat{y}\right\|_2^2\;:\;\hat{y}\in \Img(R) \right ] \right\}$,
    \item $\Inj(R) \coloneqq \left\{(\bar{x},y)\mid \exists x;\; xRy\;\;\text{and}\;\; \bar{x}\in \underset{\hat{x}}{\argmin}\left [ \left\|\hat{x}\right\|_2^2\;:\; \hat{x}Ry \right ] \right\}$,
    \item $\Tot(R) \coloneqq \left\{(\bar{x},y)\mid \exists x;\; xRy\;\;\text{and}\;\; x\in \underset{\hat{x}}{\argmin}\left [ \left\| \bar{x}-\hat{x}\right\|_2^2\;:\;\hat{x} \in \Dom(R) \right ] \right\}$,
    \item $\Det(R) \coloneqq \left\{(x,\bar{y})\mid \exists y;\; xRy\;\;\text{and}\;\; \bar{y}\in \underset{\hat{y}}{\argmin}\left [ \left\|\hat{y}\right\|_2^2\;:\; xR\hat{y}\right ] \right\}$.
  \end{enumerate}
\end{definition}

\begin{example}[Least-Squares with Equality Constraints~\citep{golub2013matrix}]\label{ex:LSE}
If we set \[\Rel{R}=\input{Connect_disconnect_lse.tikz}=\input{Connect_disconnect_lse-A1A2.tikz},\]
\noindent the relation represents the solutions to $A_1x = y$ that satisfy the equality constraint $A_2x=0$. The relation $\Sur(R)$, from Definition~\ref{def:LQ-geometric}, extends $R$ by associating $x$ with $\bar{y}$ whenever there exists a $y$ that satisfies the equality constraints and minimizes the Euclidean distance to $\bar{y}$, which is originally associated with $x$.
\end{example}

\begin{example}[Minimum-Norm Solution Restricted to a Subspace]
If we set \[\Rel{R}=\input{subspace_V-copy-A.tikz},\]
\noindent the relation represents the solutions to $Ax = y$ restricted to $x \in V$. The relation $\Inj(R)$ from Definition~\ref{def:LQ-geometric} selects only the $\bar{x}$ with minimum norm among all possible solutions while maintaining the restrictions.

\end{example}

\begin{example}[Orthogonal Projection onto an Affine Space]
If we set \[\Rel{R}=\input{subspace_V-mais-A.tikz},\]
\noindent the relation represents the solutions to the affine equation $Ax + b = y$, for all $b \in V$. The relation $\Sur(R)$ from Definition~\ref{def:LQ-geometric} expands $R$ by associating $x$ with $\bar{y}$ whenever the orthogonal projection of $\bar{y}$ into the affine subspace $\{\,Ax+b : b\in V\,\}$ is originally related to $x$.

\end{example}

Simply by looking at Definition~\ref{def:LQ-geometric}, it is not clear, for example, that $\Inj(R)$ is still a linear relation. This is a consequence of Theorem~\ref{thm:LQ-relational} below, which shows that these operations can be written as certain linear projections of $R$. For example, a linear relation $R$ can be canonically made injective by restricting it to ${\rm Ker}(R)^\perp$. Similarly, it can be made total by composing it with the projection onto ${\rm Dom}(R)$. Figure~\ref{fig:projection-example} presents a geometric visualization of the projections associated with surjectivity and injectivity.

\begin{figure}[pos=htb]
    \centering
    \begin{subfigure}[b]{0.48\textwidth}
        \centering
        \input{Connect_disconnect_example-esquema-projection1.tikz}
        \caption{Given a linear relation $R \subseteq U \times V$ and a vector $\bar{y} \in V$, we project $\bar{y}$ onto $\Img(R)$ to obtain the nearest vector $y$ such that $xRy$.}
        \label{fig:projection-surjectivity}
    \end{subfigure}
    \hfill
    \begin{subfigure}[b]{0.48\textwidth}
        \centering
        \input{Connect_disconnect_example-esquema-projection2.tikz}
        \caption{Given a linear relation $R \subseteq U \times V$ and a set of vectors $x_1, \dots, x_6 \in U$, we select the projection $\bar{x}$ with the smallest norm (obtained in $\Ker(R)^\perp$) such that $\bar{x}Ry$.}
        \label{fig:projection-injectivity}
    \end{subfigure}
    \caption{Geometric visualization of the projections associated with surjectivity (left) and injectivity (right).}
    \label{fig:projection-example}
\end{figure}


\begin{theorem}[Least-Squares as Projection]\label{thm:LQ-relational}
  For every linear relation $\Rel{R}$, the following statements hold:
  \begin{enumerate}
    \item \label{thm:LQ-relational:sur} $\input{Connect_disconnect_def_R_sur.tikz}$,
    \item \label{thm:LQ-relational:inj} $\input{Connect_disconnect_def_R_inj.tikz}$,
    \item \label{thm:LQ-relational:tot} $\input{Connect_disconnect_def_R_tot.tikz}$,
    \item \label{thm:LQ-relational:det} $\input{Connect_disconnect_def_R_det.tikz}$.
  \end{enumerate}
\end{theorem}

\begin{proof}
Appendix~\ref{ap:relationalLQ}.
\end{proof}

Although the relation operations of Definition~\ref{def:LQ-geometric} were formulated to satisfy each of the four fundamental properties, we can formalize these guarantees rigorously, proving them directly from Theorem~\ref{thm:LQ-relational}.
The following lemma establishes this result.

\begin{lemma}[Fundamental properties in Relational Least-Squares]\label{lem:prop_fund_Rbar}
  For every linear relation $\Rel{R}$, the following statements hold:
  \begin{multicols}{2}
    \begin{enumerate}
      \item \label{lem:prop_fund_Rbar:sur}
            $\Rel{\Sur(R)}$ is surjective,
      \item $\Rel{\Inj(R)}$ is injective,
      \item $\Rel{\Tot(R)}$ is total and
      \item $\Rel{\Det(R)}$ is deterministic.
    \end{enumerate}
  \end{multicols}
\end{lemma}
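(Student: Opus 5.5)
The plan is to read off each property from the projection formulas of Theorem~\ref{thm:LQ-relational}, using the graphical characterizations of the four fundamental properties (Figure~\ref{fig:typerelations}) together with the fundamental subspaces of Definition~\ref{def:fund_subspaces} and Lemma~\ref{lem:proj-props}. We do the surjective and total cases in detail. The injective and deterministic cases are their duals under flipping the diagram (taking the opposite relation). Flipping swaps image with domain and kernel with cokernel (the $\CoKer$ subspace), and it also swaps surjective with total and injective with deterministic. So one carefully written case from each dual pair suffices, and we record the other by symmetry.

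\textbf{Surjective case.} By Theorem~\ref{thm:LQ-relational}, $\Sur(R)$ is $R$ composed with the opposite of the projection $P_{\Img(R)}$, i.e.\ $(x,\bar y)$ with $xRy$ and $y=P_{\Img(R)}\bar y$. To prove surjectivity, we must show that $\Img(\Sur(R))$ is the whole space, i.e.\ that the codiscard postcomposed with the opposite of $\Sur(R)$ gives the codiscard. Unfolding, this means: compose the full space with the opposite of $P_{\Img(R)}$, and check that the result lies in $\Img(R)$, i.e.\ in the image of $R$ read backwards. Since $P_{\Img(R)}$ is a linear map, it is total, so composing the full space with it gives exactly its image. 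By Lemma~\ref{lem:proj-props}(\ref{lem:proj-props:img}), that image is $\Img(R)$. Finally, the composite of $\Img(R)$ with the opposite of $R$ is the whole space of the opposite side's domain, which is precisely the statement that $\Img(R)$ is the image. Concretely: every $\bar y$ has $P\bar y\in\Img(R)$, so some $x$ satisfies $xR(P\bar y)$.

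\textbf{Total case.} This is symmetric: $\Tot(R)=P_{\Dom(R)};R$. For any $\bar x$, the point $P\bar x$ lies in $\Dom(R)$, so it is related to some $y$.

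\textbf{Injective case.} Here $\Inj(R)$ restricts the left side of $R$ to $\Ker(R)^\perp$ via the projection. We need $\Ker(\Inj(R))=\{0\}$. Suppose $\bar x\,\Inj(R)\,0$. Then $\bar x=P_{\Ker(R)^\perp}x$ for some $x$ with $xR0$, i.e.\ $x\in\Ker(R)$. By Lemma~\ref{lem:proj-props}(\ref{lem:proj-props:ker}), $\Ker(P_{\Ker(R)^\perp})=\Ker(R)$, so $\bar x=0$.

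\textbf{Deterministic case.} This is the dual of the injective case, using $\CoKer(R)$ (the set of $y$ with $0Ry$) in place of $\Ker(R)$.

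\textbf{Main obstacle.} The only delicate point is the injective/deterministic pair. Making these purely diagrammatic requires the idempotence and orthogonal-complement facts of Lemma~\ref{lem:proj-props}(\ref{lem:proj-props:ker},\ref{lem:proj-props:idem}), used to slide the projection past the white nodes. I would first verify that step pointwise as above, then transcribe it into the calculational style.
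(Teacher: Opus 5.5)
Your proposal is correct and takes essentially the same route as the paper: rewrite each operator in its projection form via Theorem~\ref{thm:LQ-relational}, then read the property off a fact about orthogonal projections (the paper writes out only the $\Sur$ case and calls the rest analogous, while you write out three cases and get $\Det$ by the flip duality). The difference is presentation only: the paper argues diagrammatically, opening $P_{\Img(R)}$ with Definition~\ref{def:proj-rel} and concluding from $V+V^{\perp}=\mathbb{R}^n$, whereas you argue pointwise from the totality of $P_{\Img(R)}$ and Lemma~\ref{lem:proj-props}, which rest on that same decomposition fact.
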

\begin{proof}
  Appendix~\ref{ap:relationalLQ}.
\end{proof}

\subsubsection{Compositions of Least-Squares Operators}

By composing the relation operations of Theorem~\ref{thm:LQ-relational}, it is possible to obtain new linear relations with specific properties. For example, the four possible compositions of the $\Tot$ operation take the form:
\begin{enumerate}
  \item  $\Rel{\Tot\left( \Sur(R) \right)} = \Comp{P_{\Dom(R)}/Map, R/Rel, P_{\Img(R)}/CoMap}$,
  \item  $\Rel{\Tot\left( \Inj(R) \right)} = \Comp{P_{\Dom(R)}/Map, P_{\Ker(R)}/CoMapGray, R/Rel}$,
  \item  $\Rel{\Tot\left( \Tot(R) \right)} = \Comp{P_{\Dom(R)}/Map, P_{\Dom(R)}/Map, R/Rel}$,
  \item  $\Rel{\Tot\left( \Det(R) \right)} = \Comp{P_{\Dom(R)}/Map, R/Rel, P_{\CoKer(R)}/MapGray}$.
\end{enumerate}
The properties of these linear relations obtained through the composition of the relation operations of Theorem~\ref{thm:LQ-relational} are described in Lemma~\ref{lem:comp-idempotent}.

\begin{lemma}[Idempotency and Commutativity]\label{lem:comp-idempotent}
  For every linear relation $\Rel{R}$, the following statements hold:
  \begin{center}
    \normalfont
    \begin{multicols}{2}
      \textbf{Idempotency}
      \begin{enumerate}[nolistsep]
        \item \label{lem:comp-idempotent:sur}
              $\Rel{\Sur\left( \Sur(R) \right)} = \Rel{\Sur(R)}$,
        \item $\Rel{\Inj\left( \Inj(R) \right)} = \Rel{\Inj(R)}$,
        \item $\Rel{\Tot\left( \Tot(R) \right)} = \Rel{\Tot(R)}$,
        \item $\Rel{\Det\left( \Det(R) \right)} = \Rel{\Det(R)}$.
      \end{enumerate}

      \textbf{Commutativity}
      \begin{enumerate}[nolistsep]
        \setcounter{enumi}{5}
        \item \label{lem:comp-commutative:sur}
              $\Rel{\Sur\left( \Inj(R) \right)} = \Rel{\Inj\left( \Sur(R) \right)}$,
        \item $\Rel{\Sur\left( \Tot(R) \right)} = \Rel{\Tot\left( \Sur(R) \right)}$,
        \item $\Rel{\Det\left( \Inj(R) \right)} = \Rel{\Inj\left( \Det(R) \right)}$,
        \item $\Rel{\Tot\left( \Det(R) \right)} = \Rel{\Det\left( \Tot(R) \right)}$.
      \end{enumerate}
    \end{multicols}
  \end{center}
\end{lemma}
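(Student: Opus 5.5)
Throughout I use the projection formulas of Theorem~\ref{thm:LQ-relational}. In pointwise terms they read $\Sur(R) = R;P_{\Img(R)}^{\mathrm{op}}$, $\Inj(R) = P_{\Ker(R)^\perp};R$, $\Tot(R) = P_{\Dom(R)};R$ and $\Det(R) = R;P_{\CoKer(R)^\perp}$. This matches the compositions listed just before the statement.

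The first step is to compute the four fundamental subspaces of each operator's output. This can be done directly from the formulas, or from the GSVD via Lemma~\ref{lem:subespaces_GSVD}. I expect:
\begin{itemize}
\item $\Sur(R)$ keeps $\Dom$ and $\Ker$, has full image, and has indeterminacy $\Img(R)^\perp \oplus \CoKer(R)$;
\item $\Inj(R)$ keeps $\Dom$, $\Img$ and $\CoKer$, and has kernel $\{0\}$;
\item $\Tot(R)$ has full domain and kernel $\Ker(R) \oplus \Dom(R)^\perp$;
\item $\Det(R)$ has indeterminacy $\{0\}$.
\end{itemize}

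In each case the operator adds only an orthogonal complement or deletes a subspace. So the operator changes only the one subspace it targets, plus its ``partner'' (image with indeterminacy, or domain with kernel), and does so orthogonally.

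\textbf{Idempotency.} By Lemma~\ref{lem:prop_fund_Rbar}, $\Sur(R)$ is already surjective, so $\Img(\Sur(R))$ is the whole space and $P_{\Img(\Sur(R))}$ is the identity. Hence $\Sur(\Sur(R)) = \Sur(R);\mathrm{id} = \Sur(R)$. The same argument handles the other three: an injective relation has $\Ker = 0$, so $P_{\Ker^\perp} = \mathrm{id}$; the total and deterministic cases are analogous. Diagrammatically, this is the identity-projection rule applied to the white/black structures characterizing each property (Figure~\ref{fig:typerelations}).

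\textbf{Commutativity.} I take $\Sur \circ \Inj$ as the representative case. We have
\[
\Sur(\Inj(R)) = P_{\Ker(R)^\perp};R;P_{\Img(\Inj R)}^{\mathrm{op}}
\quad\text{and}\quad
\Inj(\Sur(R)) = P_{\Ker(\Sur R)^\perp};R;P_{\Img(R)}^{\mathrm{op}}.
\]
Using the first step, $\Img(\Inj R) = \Img(R)$ and $\Ker(\Sur R) = \Ker(R)$, so both sides equal $P_{\Ker(R)^\perp};R;P_{\Img(R)}^{\mathrm{op}}$. The pair $\Det\circ\Inj$ works the same way, because $\Inj$ preserves $\CoKer$ and $\Det$ preserves $\Ker$.

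The two mixed-side pairs, $\Sur$ with $\Tot$ and $\Tot$ with $\Det$, act on opposite sides. I need $\Img(\Tot R) = \Img(R)$, which holds since $P_{\Dom(R)}$ maps onto $\Dom(R)$. I also need $\Dom(\Sur R) = \Dom(R)$, which holds since composing with the total map $P_{\Img(R)}^{\mathrm{op}}$ does not change the domain. The $\Det$ case needs $\CoKer(\Tot R) = \CoKer(R)$ and $\Dom(\Det R) = \Dom(R)$.

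\textbf{Main obstacle.} The difficulty is these invariance facts about the fundamental subspaces, especially those involving $\CoKer$ and $\Dom$ under pre- or post-composition with a projection. Two examples:
\begin{itemize}
\item $\Tot(R)$: precomposing with $P_{\Dom(R)}$ must not enlarge the indeterminacy.
\item $\Det(R)$: postcomposing with $P_{\CoKer(R)^\perp}$ must not change the domain.
\end{itemize}
I would settle these most cleanly with the GSVD. Write $R = V^{\mathrm{op}};D;U$ and note that each projection is diagonal in the same orthogonal frame; for example, $P_{\Dom(R)}$ becomes $V^{\mathrm{op}}$ followed by the projection onto the $d$ and $k_I$ wires. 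Then every composite reduces to a diagonal relation whose zero/discard wires can be read off directly. Once everything is in this form, both idempotency and commutativity become block-by-block comparisons of the wires $k_T, k_D, k_I, k_S$.
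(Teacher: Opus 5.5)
Your proposal is correct and is essentially the paper's proof. Both rewrite each operator with the projection formulas of Theorem~\ref{thm:LQ-relational} and then match the projections on the two sides.

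The small differences work in your favour:
\begin{itemize}
\item For idempotency, you use Lemma~\ref{lem:prop_fund_Rbar} to see that the second projection is onto the whole space, hence the identity. The paper instead writes $\Sur(\Sur(R)) = R;P^{\mathrm{op}}_{\Img(R)};P^{\mathrm{op}}_{\Img(R)}$ and invokes idempotency of projections (Lemma~\ref{lem:proj-props}). In doing so it tacitly uses $P_{\Img(R)}$ where the theorem gives $P_{\Img(\Sur(R))}$; this is harmless, since both expressions equal $\Sur(R)$.
\item For commutativity, you state the needed invariances explicitly ($\Img(\Inj(R))=\Img(R)$, $\Ker(\Sur(R))=\Ker(R)$, and so on). The paper uses them silently.
\end{itemize}

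There are a few slips to fix.
\begin{itemize}
\item By Definition~\ref{def:LQ-geometric}, $\Inj(R)=\{(P_{\Ker(R)^\perp}x,\,y) : xRy\}$, which is $P^{\mathrm{op}}_{\Ker(R)^\perp};R$. The expression $P_{\Ker(R)^\perp};R$ that you wrote is just $R$ again, because $a$ and $P_{\Ker(R)^\perp}a$ differ by an element of $\Ker(R)$. Your commutativity argument survives only because the same expression appears on both sides.
\item Your bullet claiming that $\Inj(R)$ keeps $\Dom$ is false. In fact $\Dom(\Inj(R))=\Dom(R)\cap\Ker(R)^\perp$, exactly as your own ``partner'' remark predicts. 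You never use this claim, so no harm is done.
\item $P^{\mathrm{op}}_{\Img(R)}$ is not total; its domain is $\Img(R)$. The fact $\Dom(\Sur(R))=\Dom(R)$ still holds, because $\Img(R)$ lies inside that domain.
\item The GSVD detour for your ``main obstacle'' is unnecessary. Each of the eight invariances is a one-line pointwise check. For example, $\CoKer(P_{\Dom(R)};R)=\{y : 0Ry\}$ because $P_{\Dom(R)}0=0$. Likewise, $\Dom(R;P)=\Dom(R)$ for any total map $P$.
\end{itemize}
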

\begin{proof}
  Appendix~\ref{ap:comp}.
\end{proof}

With the results of idempotency and commutativity established, we are now ready to state Theorem~\ref{lem:comp-fund-prop}, which synthesizes the fundamental properties of all compositions between pairs of the relational operations $\Sur$, $\Inj$, $\Tot$ and $\Det$.

\begin{theorem}[Linear Compositional Relations: Fundamental Properties]
  \label{lem:comp-fund-prop}
  For every linear relation $\Rel{R}$ and for any pair of relational operations
  \[
    F\in\left\{ \Sur,\Inj,\Tot,\Det\right\}\quad\text{and}\quad G\in\left\{ \Sur,\Inj,\Tot,\Det\right\},
  \]
  the composed $\Rel{F(G(R))}$ relation satisfies the fundamental property indicated at the intersection of the row labeled $F$ and the column labeled $G$ in the following table:
  \begin{center}
    \normalfont
    \begin{tabular}{r|cccc}
      \diagbox{F}{G} & $\Sur$                    & $\Inj$                       & $\Tot$                   & $\Det$                       \\ \hline
      $\Sur$                   & surjective               & \makecell{surjective and\\injective}    & \makecell{surjective\\and total}    & surjective                  \\ \hline
      $\Inj$                   & \makecell{surjective and\\injective} & injective                   & injective               & \makecell{injective and\\deterministic} \\ \hline
      $\Tot$                   & \makecell{surjective\\and total}     & total                       & total                   & \makecell{total and\\deterministic}     \\ \hline
      $\Det$                   & deterministic            & \makecell{injective and\\deterministic} & \makecell{total and\\deterministic} & deterministic
    \end{tabular}
  \end{center}
\end{theorem}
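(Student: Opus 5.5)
The plan is to prove each of the sixteen table entries by combining two tools already established: Lemma~\ref{lem:prop_fund_Rbar}, which says each single operation produces its own property, and the projection formulas of Theorem~\ref{thm:LQ-relational}. The entries on the diagonal ($F=G$) follow immediately from idempotency (Lemma~\ref{lem:comp-idempotent}) together with Lemma~\ref{lem:prop_fund_Rbar}. For every off-diagonal entry, the outer operation $F$ contributes its property directly by Lemma~\ref{lem:prop_fund_Rbar} applied to the relation $G(R)$. So the real work is to decide when the property produced by the inner operation $G$ survives the application of $F$.

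\textbf{Commuting pairs.} For the four commuting pairs $\{\Sur,\Inj\}$, $\{\Sur,\Tot\}$, $\{\Inj,\Det\}$ and $\{\Tot,\Det\}$, I will use commutativity (Lemma~\ref{lem:comp-idempotent}). Since $F(G(R)) = G(F(R))$, Lemma~\ref{lem:prop_fund_Rbar} applied once to each side gives both properties. This yields the eight "two-property" entries at once.

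\textbf{Non-commuting pairs.} The remaining eight off-diagonal entries are $\Sur\Det$, $\Det\Sur$, $\Inj\Tot$, $\Tot\Inj$ and their analogues. The table claims only the property of the outer operation for these, which is exactly what Lemma~\ref{lem:prop_fund_Rbar} delivers. So no inheritance argument is needed for them.

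\textbf{Expected obstacle.} The main difficulty is confirming that commutativity really covers all two-property entries. For example, $\Sur(\Inj(R))$ must be injective, and this holds only because it equals $\Inj(\Sur(R))$. If a direct argument were wanted instead, I would use the GSVD of Theorem~\ref{thm:gsvd-relations}. By Lemma~\ref{lem:subespaces_GSVD} each projection in Theorem~\ref{thm:LQ-relational} becomes $U$ or $V$ conjugating a coordinate projection that kills exactly the $k_I$, $k_S$, $k_T$ or $k_D$ wires of the diagonal relation $D$. Hence each operation sets exactly one of those four numbers to zero. Lemma~\ref{lem:no_lolipop_GSVD} then reads off the properties. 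Checking diagrammatically that a projection deleting one block of wires does not reintroduce another block is the step I expect to need care. For the commuting pairs, however, the commutativity lemma already suffices.
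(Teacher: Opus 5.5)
Your proof is correct and matches the paper's, which is just a direct application of Lemma~\ref{lem:prop_fund_Rbar} for the outer property and Lemma~\ref{lem:comp-idempotent}, whose commutativity part supplies the inner property for the eight two-property entries. One small miscount: after the four diagonal entries and the eight commuting entries, only four off-diagonal entries remain ($\Sur(\Det(R))$, $\Det(\Sur(R))$, $\Inj(\Tot(R))$, $\Tot(\Inj(R))$), not eight. Your optional GSVD detour would also show that each operation moves a block of wires rather than deleting it (e.g.\ $\Sur$ turns the $k_S$ wires into $k_D$ wires), which is exactly why those four entries carry only one property.
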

\begin{proof}
  Direct application of Lemmas~\ref{lem:prop_fund_Rbar} and~\ref{lem:comp-idempotent}.
\end{proof}

The commutative compositions established in Lemma~\ref{lem:comp-idempotent} exhibit, by Theorem~\ref{lem:comp-fund-prop}, particularly relevant properties. In particular, the opposite of the composition $\scalebox{0.8}{\Rel{\Sur\left ( \Inj(R)\right)}}$ coincides with the well-known \emph{pseudoinverse of relations}, as defined in~\cite{alvarez2014},
which will be further explored in Section~\ref{sec:Pseudoinverse}.
The remaining compositions, although still requiring a more in-depth investigation of their properties, also present themselves as mathematically promising operations, which we have not yet found in the literature. The composition $\scalebox{0.8}{\Rel{\Sur\left ( \Tot(R)\right)}}$ defines a surjective and total relation, that is, it ensures that all elements are related; $\scalebox{0.8}{\Rel{\Det\left ( \Inj(R)\right)}}$, being deterministic and total, results in a one-to-one relation; and $\scalebox{0.8}{\Rel{\Tot\left ( \Det(R)\right)}}$ produces a linear relation that is also a function, providing a functional approximation of the original linear relation.

\subsection{Relational Least-Squares via GSVD}\label{sec:algoritmos-leastsquares}

Besides the optimization and projection formulations presented above, the operations $\Sur$, $\Inj$, $\Tot$, and $\Det$ admit an additional description in terms of the GSVD, in which case, perhaps surprisingly, they correspond to just swapping the colors of the black and white dots.

\begin{theorem}[GSVD for Relational Least Squares]\label{thm:gsvd-compositional}
  Let $\Rel{R}$ be a linear relation whose GSVD is
  \[\TypedRel{R}{m}{n}=\input{GSVD_GSVD.tikz}. \]
  Then,
  \begin{enumerate}
    \item $\Rel{\Sur(R)}=\input{Connect_disconnect_GSVD_sur.tikz}$,
    \item $\Rel{\Inj(R)}=\input{Connect_disconnect_GSVD_inj.tikz}$,
    \item $\Rel{\Tot(R)}=\input{Connect_disconnect_GSVD_tot.tikz}$,
    \item $\Rel{\Det(R)}=\input{Connect_disconnect_GSVD_det.tikz}$.
  \end{enumerate}
\end{theorem}
\begin{proof}
  We will prove item $1.$ The remaining items are proved analogously.
  \begin{hcalculation}[=]{\scalebox{0.8}{\Rel{\Sur(R)}}}
    \hstep{\scalebox{0.8}{\input{Connect_disconnect_LQ-GSVD_step1.tikz}}}{Thm.\ref{thm:LQ-relational}}
    \hstep{\scalebox{0.8}{\input{Connect_disconnect_LQ-GSVD_step2.tikz}}}{Def.~\ref{def:proj-rel}}
    \hstep{\scalebox{0.8}{\input{Connect_disconnect_LQ-GSVD_step3.tikz}}}{GSVD + Lem.\ref{lem:subespaces_GSVD}}
    \hstep{\scalebox{0.8}{\input{Connect_disconnect_LQ-GSVD_step4.tikz}}}{$U$ is orthogonal + Fig.\ref{fig:Inequality}}
    \hstep{\scalebox{0.8}{\input{Connect_disconnect_LQ-GSVD_step5.tikz}}}{Fig.\ref{fig:Commutative_Comonoid} + Fig.\ref{fig:Bialgebra}}.
  \end{hcalculation}
  This completes the proof.
\end{proof}

It may be useful to think of the theorem above as the following: calculating the relational least-squares of a linear relation is equivalent to ``switching on or off'' the wire associated with that property in the GSVD, if we wish the relation to be surjective, we simply change the color of the node corresponding to the wire $k_S$.

\begin{corollary}
  Theorem~\ref{thm:gsvd-compositional} coincides, except for two orthogonal linear transformations, with the application of the operations $\Sur$, $\Inj$, $\Tot$, and $\Det$ to the GSVD of a diagonal linear relation. In other words, if we write $\Rel{D}$ for the relation in the middle of the GSVD of $\Rel{R}$, we get
  \begin{enumerate}
    \item $\Rel{\Sur(R)}=\input{Connect_disconnect_GSVD_sur_D_withD.tikz}$,
    \item $\Rel{\Inj(R)}=\input{Connect_disconnect_GSVD_inj_D_withD.tikz}$,
    \item $\Rel{\Tot(R)}=\input{Connect_disconnect_GSVD_tot_D_withD.tikz}$,
    \item $\Rel{\Det(R)}=\input{Connect_disconnect_GSVD_det_D_withD.tikz}$.
  \end{enumerate}
\end{corollary}

\subsection{Pseudoinverse}\label{sec:Pseudoinverse}

\begin{definition}[Pseudoinverse of Linear Relation]\label{def:pseudo-rel}
  For every linear relation $\TypedRel{R}{m}{n}$, its pseudoinverse is
  \[ \TypedRel{R^+}{m}{n} \coloneqq \TypedRel{\Sur\left ( \Inj(R) \right )}{m}{n}.\]
\end{definition}

\begin{remark}
  By Theorem~\ref{thm:gsvd-compositional}, $\TypedRel{R^+}{m}{n}$ is a cofunction, so we henceforth denote it as $\TypedCoMap{R^+}{m}{n}$.
\end{remark}

As a cofunction, $R^+$ can be interpreted as a relation that associates elements of the target set to elements of the starting set, and not the reverse. This can be interpreted geometrically through
Definition~\ref{def:LQ-geometric}, as illustrated in Figure~\ref{fig:visualize_projection_pseudo}. Moreover, it corresponds precisely to the pseudoinverse, originally formalized by~\citet{alvarez2014}, which generalizes the classic Moore-Penrose pseudoinverse~\citep{Penrose:1955} to the broader context of linear relations.

\begin{figure}[pos=htb]
    \centering
    \input{MQ_Pseudo_visualize_geometric.tikz}
    \caption{Geometric visualization of the pseudoinverse. Given a linear relation $R \subseteq U \times V$ and a vector $\bar{y} \in V$, we project $\bar{y}$ onto $\Img(R)$ to obtain the nearest point $xRy$. The vector $\bar{y}$ is then associated with the set of infinitely many least squares solutions $x$ of the relational equation $x R \bar{y}$. Among these solutions, the projection $\bar{x}$ with the smallest norm (obtained in $\Ker(R)^\perp$) is selected, characterizing the pseudoinverse $R^+$ such that $ \bar{y} R^+ \bar{x}$.}
    \label{fig:visualize_projection_pseudo}
\end{figure}

By opening the orthogonal projections in Definition~\ref{def:pseudo-rel},
we arrive at the following characterization of the pseudoinverse
in terms of the fundamental spaces of $\Rel{R}$.

\begin{lemma}\label{lem:pseudo-relation}
  For every linear relation $\Rel{R}$, its pseudoinverse can be written as
  \[\CoMap{R^+}=\input{MQ_Pseudo_pseudoinverse.tikz}.\]
\end{lemma}
\begin{proof}
  Appendix~\ref{ap:pseudoinverse}.
\end{proof}

Furthermore, for linear maps, Lemma~\ref{lem:pseudo-relation}
simplifies to an expression in terms of it and its transpose.

\begin{corollary}[Pseudoinverse of linear transformation]\label{lem:pseudoinverse}
  For every linear transformation $\Map{A}$, its pseudoinverse can be written as
  \[\CoMap{A^+}=\input{MQ_Pseudo_Pseudo-function_pseudo-function.tikz}.\]
\end{corollary}
\begin{proof}
  Appendix~\ref{ap:pseudoinverse}.
\end{proof}

According to \citet{BenIsrael:2003}, the classic formulation of the Moore-Penrose pseudoinverse is traditionally presented in terms of the Penrose equations, introduced by \citet{Penrose:1955}. The following theorem says that the relational definition of the pseudoinverse presented in Definition~\ref{def:pseudo-rel}, when applied to the particular case where $\Rel{R}$ is a linear transformation $\Map{A}$, satisfies the four Penrose equations.

\begin{theorem}[Moore-Penrose pseudoinverse]\label{thm:moore-penrose-pseudo}
For every linear transformation $\Map{A}$, its pseudoinverse $\CoMap{A^+}$
is the unique cofunction satisfying the following system of equations:
\begin{align}
  \begin{tikzpicture}
	\begin{pgfonlayer}{nodelayer}
		\node [style=none] (0) at (-2.5, 0) {};
		\node [style=Map] (1) at (-1, 0) {$A$};
		\node [style=Map] (2) at (1, 0) {$A^+$};
		\node [style=Map] (3) at (3, 0) {$A$};
		\node [style=none] (4) at (4.5, 0) {};
	\end{pgfonlayer}
	\begin{pgfonlayer}{edgelayer}
		\draw (0.center) to (4.center);
	\end{pgfonlayer}
\end{tikzpicture}
 &= \Map{A},\label{thm:moore-penrose-pseudo1}\\
  \begin{tikzpicture}
	\begin{pgfonlayer}{nodelayer}
		\node [style=none] (0) at (-2.5, 0) {};
		\node [style=Map] (1) at (-1, 0) {$A^+$};
		\node [style=Map] (2) at (1, 0) {$A$};
		\node [style=Map] (3) at (3, 0) {$A^+$};
		\node [style=none] (4) at (4.5, 0) {};
	\end{pgfonlayer}
	\begin{pgfonlayer}{edgelayer}
		\draw (0.center) to (4.center);
	\end{pgfonlayer}
\end{tikzpicture}
 &= \Map{A^+},\label{thm:moore-penrose-pseudo2}\\
  \begin{tikzpicture}
	\begin{pgfonlayer}{nodelayer}
		\node [style=none] (0) at (-2.5, 0) {};
		\node [style=CoMapGray] (1) at (-1, 0) {$A$};
		\node [style=CoMapGray] (2) at (1, 0) {$A^{+}$};
		\node [style=none] (3) at (2.5, 0) {};
	\end{pgfonlayer}
	\begin{pgfonlayer}{edgelayer}
		\draw (0.center) to (3.center);
	\end{pgfonlayer}
\end{tikzpicture}
 &= \begin{tikzpicture}
	\begin{pgfonlayer}{nodelayer}
		\node [style=none] (0) at (-2.5, 0) {};
		\node [style=Map] (1) at (-1, 0) {$A^{+}$};
		\node [style=Map] (2) at (1, 0) {$A$};
		\node [style=none] (3) at (2.5, 0) {};
	\end{pgfonlayer}
	\begin{pgfonlayer}{edgelayer}
		\draw (0.center) to (3.center);
	\end{pgfonlayer}
\end{tikzpicture}
,\label{thm:moore-penrose-pseudo3}\\
  \begin{tikzpicture}
	\begin{pgfonlayer}{nodelayer}
		\node [style=none] (0) at (-2.5, 0) {};
		\node [style=CoMapGray] (1) at (-1, 0) {$A^{+}$};
		\node [style=CoMapGray] (2) at (1, 0) {$A$};
		\node [style=none] (3) at (2.5, 0) {};
	\end{pgfonlayer}
	\begin{pgfonlayer}{edgelayer}
		\draw (0.center) to (3.center);
	\end{pgfonlayer}
\end{tikzpicture}
 &= \begin{tikzpicture}
	\begin{pgfonlayer}{nodelayer}
		\node [style=none] (0) at (-2.5, 0) {};
		\node [style=Map] (1) at (-1, 0) {$A$};
		\node [style=Map] (2) at (1, 0) {$A^{+}$};
		\node [style=none] (3) at (2.5, 0) {};
	\end{pgfonlayer}
	\begin{pgfonlayer}{edgelayer}
		\draw (0.center) to (3.center);
	\end{pgfonlayer}
\end{tikzpicture}
.\label{thm:moore-penrose-pseudo4}
\end{align}
\end{theorem}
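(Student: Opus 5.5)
The proof has two parts: existence (the pseudoinverse satisfies the four equations) and uniqueness. For existence, I will not work with the abstract definition $\Sur(\Inj(A))$. Instead I will use the GSVD form given by Theorem~\ref{thm:gsvd-relations} together with Theorem~\ref{thm:gsvd-compositional}.

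When $\Map{A}$ is a linear transformation, Lemma~\ref{lem:no_lolipop_GSVD} gives $k_T = k_D = 0$. The diagonal relation $\Rel{D}$ then reduces to the graph of a map of the form
\[
x \mapsto \begin{bmatrix} S C^{-1} & 0 \\ 0 & 0 \end{bmatrix} x ,
\]
where the zero blocks correspond to the $k_I$ and $k_S$ wires. In other words, $A = U \Sigma V^\top$ with $\Sigma$ diagonal and invertible on the middle $d$ wires. This is essentially the SVD.

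By Theorem~\ref{thm:gsvd-compositional}, forming $\Sur(\Inj(A))$ changes the colour of the nodes on the $k_I$ and $k_S$ wires. Read as a cofunction (that is, flipped), this gives
\[
A^+ = V \Sigma^+ U^\top ,
\]
where $\Sigma^+$ inverts the middle block $CS^{-1}$ and is zero on the remaining wires. As a cross-check, Corollary~\ref{lem:pseudoinverse} expresses $A^+$ through $A$ and $A^\top$ and can be used to confirm this form.

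With both $A$ and $A^+$ written around the same orthogonal $U$ and $V$, each Penrose equation reduces diagrammatically to the corresponding equation for the diagonal pieces. The orthogonality relations $U^\top U = I$ and $V^\top V = I$ cancel adjacent factors. After cancellation:
\begin{itemize}
\item Equations \eqref{thm:moore-penrose-pseudo1} and \eqref{thm:moore-penrose-pseudo2} become $\Sigma\Sigma^+\Sigma = \Sigma$ and $\Sigma^+\Sigma\Sigma^+ = \Sigma^+$. Both are immediate because the middle blocks are inverse to each other and the other wires are zero.
\item Equations \eqref{thm:moore-penrose-pseudo3} and \eqref{thm:moore-penrose-pseudo4} state that $A^+A$ and $AA^+$ are symmetric, i.e. equal to their grey (transposed) versions. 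Both products equal $V\,\mathrm{diag}(I,0)\,V^\top$ and $U\,\mathrm{diag}(I,0)\,U^\top$ respectively. These are orthogonal projections, onto $\Ker(A)^\perp$ and $\Img(A)$, and by Lemma~\ref{lem:proj-props} they equal their own transposes.
\end{itemize}

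Uniqueness is a standard algebraic argument once we translate back to matrix syntax, since all the objects involved are maps. Suppose $X$ and $Y$ both satisfy the four equations. Then
\[
X = XAX = X(AX)^\top = X X^\top A^\top = X X^\top A^\top Y^\top A^\top = X(AX)^\top (AY)^\top = XAXAY = XAY .
\]
Symmetrically, $Y = YAY = (YA)^\top Y = A^\top Y^\top Y = A^\top X^\top A^\top Y^\top Y = (XA)^\top (YA)^\top Y = XAYAY = XAY$. Hence $X = Y$.

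The diagrammatic version of this chain uses only the fact that the transpose (grey flip) reverses composition and the substitutions \eqref{thm:moore-penrose-pseudo1}--\eqref{thm:moore-penrose-pseudo4}. I expect this to be mechanical.

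The main obstacle is making the existence step rigorous. I need to check that the GSVD of a linear transformation really collapses to the SVD, so that the same $U$ and $V$ appear in both $A$ and the flipped $A^+$. I also need to check carefully that the colour swap on the $k_I$ and $k_S$ wires yields zero blocks, rather than unconstrained ones, in the cofunction $A^+$.
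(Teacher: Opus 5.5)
Your proposal is correct. The uniqueness half is the same classical Penrose chain that the paper writes out diagrammatically, but the existence half takes a genuinely different route.

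The paper never diagonalizes anything:
\begin{itemize}
\item For \eqref{thm:moore-penrose-pseudo1} it substitutes the closed form of Corollary~\ref{lem:pseudoinverse}, which expresses $A^+$ through $A$ and $A^\top$ alone. It simplifies with the Flip and comonoid rules and the weak-inverse lemma, uses determinism of $A^\top$ to obtain only an inclusion into $A$, and then upgrades this to equality by Lemma~\ref{lem:pair_of_matrices_1}.
\item For \eqref{thm:moore-penrose-pseudo3} it works from the fundamental-subspace form of Lemma~\ref{lem:pseudo-relation}, using $\Ker(AA^\top)=\Ker(A^\top)$ and $\Img(AA^\top)=\Img(A)$.
\end{itemize}

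You instead collapse the GSVD of $A$ to an SVD and read off $A^+=V\Sigma^+U^\top$ from Theorem~\ref{thm:GSVD-Pseudoinverse}. In effect you anticipate Theorem~\ref{thm:SVDpseudoinverse}. This is not circular, since nothing you invoke uses the Penrose equations.

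The trade-off is as follows. Your route is more mechanical and treats all four equations uniformly by a block computation that only needs $SC^{-1}$ to be invertible. However, it rests on the existence of the GSVD and on Theorem~\ref{thm:gsvd-compositional}, and it leaves the graphical calculus for matrix algebra. The paper's argument is decomposition-free and stays inside the equational theory, which is the methodological point of the paper.

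The two worries you raise are already settled in the text:
\begin{itemize}
\item Theorem~\ref{thm:GSVD-Pseudoinverse} (see Remark~\ref{rem:pseudo}) keeps the same $U$ and $V$.
\item The $D^+$ row of Table~\ref{tab:gsvd_withLowRank} shows that the colour swap gives zero blocks, not unconstrained ones. On the $k_I$ wires the swapped node forces those coordinates of $x$ to be $0$. On the $k_S$ wires it leaves those coordinates of $y$ free, so the cofunction ignores them.
\end{itemize}

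One small labelling point: with the paper's convention $R;S=SR$, the diagram $A^+;A$ denotes $AA^+$. So \eqref{thm:moore-penrose-pseudo3} is the symmetry of $AA^+$ and \eqref{thm:moore-penrose-pseudo4} is that of $A^+A$, the reverse of your labelling. This is harmless, since you prove both.
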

\begin{proof}
  Appendix~\ref{ap:pseudoinverse}.
\end{proof}

\subsubsection{Pseudoinverse via GSVD}

The composition of the injectivity and surjectivity operations, which defines the pseudoinverse $\scalebox{0.8}{\CoMap{R^+}}$, not only extends the classical Moore–Penrose theory to the context of linear relations but also enables the derivation of algorithms. We now present a formulation of the pseudoinverse of relations in terms of the GSVD.

In the graphical syntax, the procedure essentially consists of turning on and off the ``wires'' corresponding to the injectivity and surjectivity properties in the GSVD of $\scalebox{0.8}{\Rel{R}}$.

\begin{theorem} \label{thm:GSVD-Pseudoinverse}
  Let $\TypedRel{R}{m}{n}$ be a linear relation with GSVD as in Theorem~\ref{thm:gsvd-compositional}.
  Then its pseudoinverse is
\begin{equation}
  \TypedCoMap{R^+}{m}{n} = \input{GSVD_GSVD_pseudoinversa2.tikz}.
\end{equation}
\end{theorem}
\begin{proof}
  Theorem~\ref{thm:gsvd-compositional} for $\Inj$ and $\Sur$.
\end{proof}

For a diagonal linear relation, Theorem~\ref{thm:GSVD-Pseudoinverse} maintains the central structure, without the orthogonal linear transformations.

\begin{corollary}[Pseudoinverse of a Diagonal Linear Relation]\label{cor:GSVD_diagonal_relation}
  The pseudoinverse of a diagonal linear relation
  \[\TypedRel{D}{m}{n}=\input{GSVD_GSVD_D.tikz},\quad\quad \text{is} \quad\quad \TypedCoMap{D^+}{m}{n}=\input{GSVD_GSVD_pseudoinversa.tikz}.\]
\end{corollary}

\begin{remark}\label{rem:pseudo}
  By Corollary~\ref{cor:GSVD_diagonal_relation}, we can rewrite Theorem~\ref{thm:GSVD-Pseudoinverse} in the simplified form
\[\CoMap{R^+}=\input{GSVD_GSVD_pseudoinversa_withD2.tikz}.\]
\end{remark}

In the particular case in which the linear relation $\scalebox{0.8}{\Rel{R}}$ is a linear transformation $\scalebox{0.8}{\Map{A}}$, it is still possible to employ the GSVD as in Theorem~\ref{thm:gsvd-relations}. To this end, it suffices to consider the linear transformation $\scalebox{0.8}{\Map{B}}=\scalebox{0.8}{\TypedId{m}}$ in the representation of the linear relation in span form (Proposition~\ref{prop:span-cospan}), in which case the GSVD coincides with the SVD~\citep{golub2013matrix,trefethen97}. In this way, we recover the classical solution of the Moore–Penrose pseudoinverse, introduced by~\cite{Penrose:SVD-pseudo}, as established in Theorem~\ref{thm:SVDpseudoinverse}.

\begin{theorem}[SVD for Pseudoinverse]\label{thm:SVDpseudoinverse} For every linear transformation $\TypedMap{A}{m}{n}$, there exist orthogonal linear transformations $\TypedOrtho{U}{n}{n}$ and $\TypedOrtho{V}{m}{m}$ and a linear transformation with nonnegative diagonal entries $\TypedMap{\Sigma}{m}{n}$ such that, if
\[\Map{A}=\input{GSVD_SVD_withD.tikz} \quad\quad \text{then} \quad\quad \CoMap{A^+}=\input{GSVD_SVD_pseudoinversa_withD.tikz},\]
where $\CoMap{\Sigma^+}\coloneqq\CoMap{\Sur\left ( \Inj(\Sigma) \right )}.$
\end{theorem}
\begin{proof}
Appendix~\ref{ap:algopseudoinverse}
\end{proof}

\subsubsection{Optimization problems}\label{ap:optimizationpseudo}

Now, we show how two optimization problems presented in the article \emph{The Restricted Singular Value Decomposition: Properties and Applications} \citep{Golub:1991} can be reformulated and reinterpreted when expressed in the semantics of linear relations.

The original paper uses the Restricted Singular Value Decomposition (RSVD) to analyze a series of applications involving the decomposition of a linear transformation $A$ relative to two other linear transformations $B$ and $C$, exploring how restrictions on column and row spaces influence decompositions and geometric relations between kernels and images. The results obtained in the context of the RSVD generally depend on the complete structure of the triple $(A,B,C)$ and the interaction between the generated subspaces. From the perspective of linear relations, we can view the triple of linear transformations $(A,B,C)$ as a decomposition of linear relations $\scalebox{0.8}{\Comp{C/Map, A/CoMap, B/Map}}$.
This is similar to the span and cospan forms in Proposition~\ref{prop:span-cospan}.
A proof that every linear relation may by decomposed this way is left to Proposition~\ref{prop:tripla}.

As proven in~\cite{Golub:1991}, by setting $\scalebox{0.8}{\Map{B}}=\scalebox{0.8}{\Id}$ or $\scalebox{0.8}{\Map{C}}=\scalebox{0.8}{\Id}$, the RSVD reduces to the QSVD --- a decomposition equivalent to the GSVD.
In this paper, we revisit some of the problems presented by the authors for the case $\scalebox{0.8}{\Map{C}}=\scalebox{0.8}{\Id}$.

\begin{remark}
To provide context for the reader and facilitate comparison between version,
Appendix~\ref{ap:functionalizer}
contains all relevant theorems of~\cite{Golub:1991} ``as-is'', i.e.,
in terms of triplet of matrices.
Throughout this section we prove equivalent results using linear relations instead.
\end{remark}

We begin by~\cite[Thm.~4]{Golub:1991}, transcribed as Theorem~\ref{thm:teo4golub}.
In that theorem, the authors describe the conditions for the linear transformation $\scalebox{0.8}{\Map{S}}$ that minimizes $\rank\left(\scalebox{0.6}{\input{Golub_a-dc.tikz}}\right)$ to be unique.
We present the same result in Theorem~\ref{thm:teo4-nosso},
but first prove a useful lemma.

\begin{lemma}[Image of \scalebox{0.7}{\input{Golub_r-d.tikz}}]\label{lem:rank_R-S} For every linear relation $\TypedRel{R}{m}{n} = \TypedSpan{A}{B}{m}{}{n}$ and every cofunction $\TypedCoMap{S}{m}{n}$ ,
\[\input{Golub_im_r-d.tikz}=\input{Golub_im_a-dc.tikz}.\]
\end{lemma}
\begin{proof}
\begin{hcalculation}[=]{\scalebox{0.8}{\input{Golub_im_r-d.tikz}}}
\hstep{\scalebox{0.8}{\input{Golub_im_ca-d.tikz}}}{$\scalebox{0.7}{\Rel{R}}=\scalebox{0.7}{\Span{A}{B}}$}
\hstep[=]{\scalebox{0.8}{\input{Golub_im_a-dc.tikz}}}{Fig.\ref{fig:Inequality} + Fig.\ref{fig:typerelations}}
\end{hcalculation}
\end{proof}

\begin{theorem}[Uniqueness of $\scalebox{0.7}{\CoMap{S}}$ that minimizes $\rank\left(\scalebox{0.7}{\input{Golub_r-d.tikz}}\right)$]\label{thm:teo4-nosso}
For every linear relation $\TypedRel{R}{m}{n}$,
\[\left\{\TypedCoMap{R^+}{m}{n}\right\} = \underset{\scalebox{0.7}{\TypedCoMap{S}{m}{n}}}{\argmin} \left [\rank\left(\scalebox{0.9}{\input{Golub_r-d.tikz}}\right)\right ]\]
\noindent if and only if $\Rel{R}$ is injective and surjective.
\end{theorem}
\begin{proof}
Let $\scalebox{0.8}{\Rel{R}} = \scalebox{0.8}{\Span{A}{B}}$.
\begin{hcalculation}[\Leftrightarrow ]{\scalebox{0.8}{\Rel{R}}\; \text{is injective and surjective}}
\hstep{\left\{\scalebox{0.8}{\input{GSVD_GSVD_pseudoinversa_withD2.tikz}}\right\} = \underset{\scalebox{0.7}{\CoMap{S}}}{\argmin}\left[\rank\left(\scalebox{0.8}{\input{Golub_a-dc.tikz}}\right)\right]}{Cor.\ref{cor:golub4-nosso}}
\hstep{\left\{\scalebox{0.8}{\input{GSVD_GSVD_pseudoinversa_withD2.tikz}}\right\} = \underset{\scalebox{0.7}{\CoMap{S}}}{\argmin}\left[\rank\left(\scalebox{0.8}{\input{Golub_r-d.tikz}}\right)\right]}{Lem.\ref{lem:rank_R-S}}
\hstep{\left\{\scalebox{0.8}{\CoMap{R^+}}\right\} = \underset{\scalebox{0.7}{\CoMap{S}}}{\argmin}\left[\rank\left(\scalebox{0.8}{\input{Golub_r-d.tikz}}\right)\right]}{Thm.\ref{thm:GSVD-Pseudoinverse}: Pseudoinverse}
\end{hcalculation}
\end{proof}

In~\cite[Thm.~7]{Golub:1991}, transcribed as Theorem~\ref{thm:teo7},
the authors seek to characterize the minimum norm solution of the matrix equation $\scalebox{0.8}{\begin{tikzpicture}
	\begin{pgfonlayer}{nodelayer}
		\node [style=Map] (0) at (-1, 0) {$B$};
		\node [style=Map] (1) at (1, 0) {$S$};
		\node [style=none] (2) at (-2.5, 0) {};
		\node [style=none] (3) at (2.5, 0) {};
	\end{pgfonlayer}
	\begin{pgfonlayer}{edgelayer}
		\draw (2.center) to (3.center);
	\end{pgfonlayer}
\end{tikzpicture}
}=\scalebox{0.8}{\Map{A}}$ with respect to the unknown linear transformation $\scalebox{0.8}{\Map{S}}$. They also note that this equation has particular historical relevance, as it led Penrose to the formulation of the pseudoinverse that now bears his name.
We reformulate these statements into Theorems~\ref{thm:teo7-nosso} and~\ref{thm:teo7-nosso2}, both in terms of linear relations.
Lemmas~\ref{lem:thm7} and~\ref{lem:thm7.2} provide the necessary tools for that.

\begin{lemma}[Conditions for the existence of an upper limit for a linear relation]\label{lem:thm7} Let $\Rel{R}=\Span{A}{B}$ be a linear relation. The following statements are equivalent:
\begin{enumerate}
\item $\Rel{R}$ is injective;
\item $\exists\CoMap{S}$ such that $\Rel{R}\subseteq \CoMap{S}$;
\item $\Map{A}=$;
\item $\input{Golub_im_r-d.tikz}=\Zero$.
\end{enumerate}
\end{lemma}
\begin{proof}($1\implies 2$)

  \begin{hcalculation}[\Leftrightarrow ]{\scalebox{0.8}{\Rel{R}}\; \text{is injective}}
    \hstep{\scalebox{0.8}{\Rel{R}}=\scalebox{0.8}{\input{Golub_R_inj.tikz}}}{Lem.\ref{lem:no_lolipop_GSVD}}
    \hstep[\Rightarrow ]{\scalebox{0.8}{\Rel{R}}\subseteq\scalebox{0.8}{\input{Golub_R_inj_black.tikz}}}{Fig.\ref{fig:min_max}}
    \hstep{\scalebox{0.8}{\Rel{R}}\subseteq \scalebox{0.8}{\CoMap{S}}}{$\scalebox{0.7}{\CoMap{S}}\coloneqq\scalebox{0.7}{\input{Golub_R_inj_black.tikz}}$}
  \end{hcalculation}

($2\implies 1$)

\begin{hcalculation}[\subseteq ]{\scalebox{0.8}{\begin{tikzpicture}
	\begin{pgfonlayer}{nodelayer}
		\node [style=Rel] (0) at (0, 0) {$R$};
		\node [style=White] (1) at (1.5, 0) {};
		\node [style=none] (2) at (-1.5, 0) {};
	\end{pgfonlayer}
	\begin{pgfonlayer}{edgelayer}
		\draw (2.center) to (1);
	\end{pgfonlayer}
\end{tikzpicture}
}}
  \hstep{\scalebox{0.8}{\begin{tikzpicture}
	\begin{pgfonlayer}{nodelayer}
		\node [style=CoMap] (0) at (0, 0) {$S$};
		\node [style=White] (1) at (1.5, 0) {};
		\node [style=none] (2) at (-1.5, 0) {};
	\end{pgfonlayer}
	\begin{pgfonlayer}{edgelayer}
		\draw (2.center) to (1);
	\end{pgfonlayer}
\end{tikzpicture}
}}{Hypothesis}
  \hstep{\scalebox{0.8}{\CoZero}}{Fig.~\ref{fig:typerelations}}
\end{hcalculation}

($2 \iff 3$)

\begin{hcalculation}[\Leftrightarrow]{\scalebox{0.8}{\Rel{R}}\subseteq \scalebox{0.8}{\CoMap{S}}}
  \hstep{\scalebox{0.8}{\Span{A}{B}}\subseteq\scalebox{0.8}{\CoMap{S}}}{$\scalebox{0.7}{\Rel{R}}=\scalebox{0.7}{\Span{A}{B}}$}
  \hstep{\scalebox{0.8}{\Map{A}}\subseteq\scalebox{0.8}{}}{Lem.\ref{lem:pa_pum_igual}--item $1$}
  \hstep{\scalebox{0.8}{\Map{A}}=\scalebox{0.8}{}}{Lem.\ref{lem:pair_of_matrices_1}}
\end{hcalculation}

($3 \iff 4$)

\begin{hcalculation}[\Leftrightarrow ]{\scalebox{0.8}{\Map{A}}=\scalebox{0.8}{}}
  \hstep{\scalebox{0.8}{\input{Golub_a-dc.tikz}} = \scalebox{0.8}{\Discard} \; \scalebox{0.8}{\Zero}}{Algebra}
  \hstep{\scalebox{0.8}{\input{Golub_im_a-dc.tikz}} = \scalebox{0.8}{\Zero}}{Fig.\ref{fig:min_max}}
\end{hcalculation}

\end{proof}

\begin{lemma}[Conditions for the existence of a lower limit for a linear relation]\label{lem:thm7.2} Let $\Rel{R}=\Span{C}{A}$ be a linear relation. The following statements are equivalent:
\begin{enumerate}
\item $\Rel{R}$ is surjective;
\item $\exists\CoMap{S}$ such that $\CoMap{S}\subseteq \Rel{R}$;
\item $\Map{A}=\begin{tikzpicture}
	\begin{pgfonlayer}{nodelayer}
		\node [style=Map] (0) at (-1, 0) {$S$};
		\node [style=Map] (1) at (1, 0) {$C$};
		\node [style=none] (2) at (-2.5, 0) {};
		\node [style=none] (3) at (2.5, 0) {};
	\end{pgfonlayer}
	\begin{pgfonlayer}{edgelayer}
		\draw (2.center) to (3.center);
	\end{pgfonlayer}
\end{tikzpicture}
$;
\item $\input{Golub_ker_r-d.tikz}=\Discard$.
\end{enumerate}
\end{lemma}
\begin{proof}
Analogous to the proof of Lemma~\ref{lem:thm7}.
\end{proof}

\begin{theorem}[Upper limit with least norm of a linear relation]\label{thm:teo7-nosso}
For every injective linear relation $\TypedRel{R}{m}{n}$,
\[\TypedCoMap{R^+}{m}{n}\in\underset{\scalebox{0.7}{\TypedCoMap{S}{m}{n}}}{\argmin}\left[\left\|\scalebox{0.9}{\Map{S}} \right\|_F\;:\;\scalebox{0.9}{\Rel{R}}\subseteq \scalebox{0.9}{\CoMap{S}}\right].\]
\end{theorem}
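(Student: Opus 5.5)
The plan is to turn the constraint $\Rel{R}\subseteq\CoMap{S}$ into a linear condition on the underlying matrix $\widetilde{S}$ of the cofunction. That condition pins $\widetilde{S}$ down on $\Img(R)$ and leaves it free on $\Img(R)^\perp$. The Frobenius norm then splits orthogonally along these two pieces, and I will check that $\CoMap{R^+}$ is exactly the candidate that vanishes on the free piece.

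\emph{Step 1 (reformulating the constraint).} Write $\CoMap{S}=\{(x,y) : x=\widetilde{S}y\}$. Then $\Rel{R}\subseteq\CoMap{S}$ says that $xRy$ implies $x=\widetilde{S}y$. Since $\Rel{R}$ is injective, Lemma~\ref{lem:thm7} guarantees that the feasible set is nonempty, and for each $y\in\Img(R)$ there is a unique $x$ with $xRy$. So $R$ defines a linear map $G:\Img(R)\to\mathbb{R}^m$, and the constraint is equivalent to $\widetilde{S}|_{\Img(R)}=G$, with no condition on $\Img(R)^\perp$. Diagrammatically, using the GSVD of Theorem~\ref{thm:gsvd-relations} with $k_I=0$ (Lemma~\ref{lem:no_lolipop_GSVD}), the feasible $\CoMap{S}$ are exactly those obtained by filling the $k_S$ wire, which corresponds to $\Img(R)^\perp$ by Lemma~\ref{lem:subespaces_GSVD}, with an arbitrary map. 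The $k_T$ and $d$ wires are forced to be the corresponding blocks of the proof of Lemma~\ref{lem:thm7}.

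\emph{Step 2 (norm splitting).} Let $P=P_{\Img(R)}$. Then
\[\|\widetilde{S}\|_F^2=\|\widetilde{S}P\|_F^2+\|\widetilde{S}(I-P)\|_F^2 .\]
This holds because the cross term $\operatorname{tr}\bigl(P\widetilde{S}^\top\widetilde{S}(I-P)\bigr)$ vanishes by cyclicity of the trace and $(I-P)P=0$. The first summand equals $\|GP\|_F^2$ for every feasible $S$. Hence the minimum is attained, exactly, by $\widetilde{S}=GP$, i.e.\ by $G$ on $\Img(R)$ and zero on $\Img(R)^\perp$. In GSVD terms, the orthogonal factors $U,V$ do not change the Frobenius norm, so only the $k_S$ block is free, and setting it to zero minimizes.

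\emph{Step 3 (identifying $R^+$).} Since $R$ is injective, Lemma~\ref{lem:comp-idempotent} together with Theorem~\ref{thm:gsvd-compositional} (item Inj with $k_I=0$) gives $\Inj(R)=R$, so $R^+=\Sur(R)$. By Theorem~\ref{thm:LQ-relational}, item~\ref{thm:LQ-relational:sur}, $\Sur(R)$ is $R$ followed by the opposite of $P_{\Img(R)}$. Thus $\bar y$ is related to $x$ precisely when $xR(P\bar y)$, i.e.\ $x=GP\bar y$. Equivalently, Theorem~\ref{thm:GSVD-Pseudoinverse} with $k_I=0$ shows that $R^+$ has the black (zero) node on the $k_S$ wire and agrees with $R$ elsewhere. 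Therefore $R^+$ is feasible (indeed $R\subseteq\Sur(R)$) and equals the minimizer found in Step~2.

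The main obstacle I expect is Step~1: showing rigorously that feasibility leaves exactly the $\Img(R)^\perp$ part free, and stating it in the graphical calculus rather than pointwise. I would first try the pointwise argument sketched above and then translate it using the GSVD block description, which makes the orthogonal splitting of the norm immediate.
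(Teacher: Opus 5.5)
Your proof is correct, but it takes a genuinely different route from the paper's. The paper does not argue directly. It writes $R$ in span form and appeals to De Moor and Golub's RSVD result on minimum-norm solutions of $BSC=A$ (Theorem~\ref{thm:teo7}), specialized to $C=I$ as Corollary~\ref{cor:golub7-nosso}. That specialization rests on Proposition~\ref{prop:wiresthm4}, the rank formulas \eqref{eq:k_i}--\eqref{eq:r} for the wire dimensions, and Remark~\ref{rmk:rsvd-to-gsvd} identifying the RSVD with the GSVD. The paper then uses Lemma~\ref{lem:thm7} to turn the matrix equation into the inclusion $R\subseteq S^{op}$, and Theorem~\ref{thm:GSVD-Pseudoinverse} to recognize the resulting GSVD expression as $R^+$.

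You instead work intrinsically:
\begin{itemize}
\item Injectivity gives $R=\{(Gy,y): y\in\Img(R)\}$ for a linear $G$.
\item Feasibility is exactly $\widetilde S|_{\Img(R)}=G$.
\item The splitting $\|\widetilde S\|_F^2=\|\widetilde SP\|_F^2+\|\widetilde S(I-P)\|_F^2$ isolates the free part.
\item Theorem~\ref{thm:LQ-relational} identifies $R^+=\Sur(R)$ with $GP$.
\end{itemize}

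Your route is self-contained: it needs no external theorem, no rank bookkeeping, and no GSVD or span form. It makes feasibility of $GP$ evident, so Lemma~\ref{lem:thm7} is not even needed. It also proves uniqueness of the Frobenius minimizer, which is stronger than the stated membership. The paper's route buys the explicit link to the RSVD literature, and through Theorem~\ref{thm:teo7} the underlying result holds for every unitarily invariant norm. Your argument reaches that generality too: for feasible $\widetilde S$ one has $\|GP\|=\|\widetilde SP\|\le\|\widetilde S\|\,\|P\|_2\le\|\widetilde S\|$, with only uniqueness possibly lost.

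Two small slips do not affect the argument. First, Lemma~\ref{lem:comp-idempotent} (idempotency) does not by itself give $\Inj(R)=R$. That equality comes from $\Ker(R)=\{0\}$, so $P_{\Ker(R)^\perp}=I$ in Theorem~\ref{thm:LQ-relational}, or from Theorem~\ref{thm:gsvd-compositional} with $k_I=0$, as you also note. Second, in your GSVD gloss the pinned part of $\widetilde S$ is its restriction to the input wires $k_D$ and $d$, which span $\Img(R)$ in rotated coordinates and are sent to $0$ and through $CS^{-1}$ respectively. It is not ``the $k_T$ and $d$ wires'', since the $k_T$ output still receives an arbitrary contribution from the free $k_S$ input.
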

\begin{proof}
Let $\scalebox{0.8}{\Rel{R}} = \scalebox{0.8}{\Span{A}{B}}$.
\begin{hcalculation}[\Leftrightarrow ]{\scalebox{0.8}{\Rel{R}}\; \text{is injective}}
\hstep{\scalebox{0.8}{\input{GSVD_GSVD_pseudoinversa_withD2.tikz}}  \in \underset{\scalebox{0.7}{\CoMap{S}}}{\argmin}\left[\left\|\scalebox{0.8}{\Map{S}} \right\|_F\;:\;\scalebox{0.8}{\Map{A}}= \scalebox{0.8}{} \right]}{Cor.\ref{cor:golub7-nosso}}
\hstep{\scalebox{0.8}{\input{GSVD_GSVD_pseudoinversa_withD2.tikz}}  \in \underset{\scalebox{0.7}{\CoMap{S}}}{\argmin}\left[\left\|\scalebox{0.8}{\Map{S}} \right\|_F\;:\;\scalebox{0.8}{\Rel{R}}\subseteq \scalebox{0.8}{\CoMap{S}} \right]}{Lem.\ref{lem:thm7}}
\hstep{\scalebox{0.8}{\CoMap{R^+}}  \in \underset{\scalebox{0.7}{\CoMap{S}}}{\argmin}\left[\left\|\scalebox{0.8}{\Map{S}} \right\|_F\;:\;\scalebox{0.8}{\Rel{R}}\subseteq \scalebox{0.8}{\CoMap{S}} \right]}{Thm.\ref{thm:GSVD-Pseudoinverse}: Pseudoinverse}
\end{hcalculation}
\end{proof}

\begin{theorem}[Lower limit with least norm of a linear relation]\label{thm:teo7-nosso2}
For every surjective linear relation $\TypedRel{R}{m}{n}$,
\[\TypedCoMap{R^+}{m}{n}\in\underset{\scalebox{0.7}{\TypedCoMap{S}{m}{n}}}{\argmin}\left[\left\|\scalebox{0.9}{\Map{S}} \right\|_F\;:\;\scalebox{0.9}{\CoMap{S}}\subseteq \scalebox{0.9}{\Rel{R}}\right].\]
\end{theorem}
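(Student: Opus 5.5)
The plan is to mirror the proof of Theorem~\ref{thm:teo7-nosso}, replacing Lemma~\ref{lem:thm7} by its dual Lemma~\ref{lem:thm7.2}. Write $R$ in span form as in Lemma~\ref{lem:thm7.2}, with legs $C$ on the domain side and $A$ on the codomain side. By Lemma~\ref{lem:thm7.2}, for a cofunction $S$ the constraint $S \subseteq R$ is equivalent to the matrix equation $A = SC$, read as a composite of maps. So the feasible set of the optimization problem is exactly the set of solutions of a Penrose-type equation, and the problem becomes
\[
\underset{S}{\argmin}\left[\|S\|_F \;:\; A = SC\right].
\]
Surjectivity of $R$ is exactly the hypothesis of Lemma~\ref{lem:thm7.2}, so this feasible set is nonempty.

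The core step is to show that $S_0 := R^+$, expressed through the GSVD as in Remark~\ref{rem:pseudo}, is a minimizer of this matrix problem. This is the dual counterpart of the result labelled Cor.~\ref{cor:golub7-nosso} used in Theorem~\ref{thm:teo7-nosso}, which I expect is the specialization of \cite[Thm.~7]{Golub:1991} to $C=\Id$. If the paper's appendix provides the transposed version directly, I cite it; otherwise I derive it by transposing. The condition $A = SC$ is equivalent to $A^\top = C^\top S^\top$, and $\|S\|_F = \|S^\top\|_F$. Flipping diagrams, this becomes the upper-limit problem for the opposite relation $R^\dagger$, which is injective exactly when $R$ is surjective. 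Applying Theorem~\ref{thm:teo7-nosso} to $R^\dagger$ then gives $(R^\dagger)^+$ as a minimizer. It remains to check that $(R^\dagger)^+$ is $R^+$ flipped. In the GSVD this is immediate: opposition swaps the roles of $U,V$ and of the $k_I/k_S$ and $k_T/k_D$ wires, and Theorem~\ref{thm:GSVD-Pseudoinverse} treats both blocks symmetrically.

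The main obstacle is making the identification in the last step precise. There are two points. First, the norm of a cofunction must be invariant under flipping, which holds because $\|S\|_F=\|S^\top\|_F$. Second, the pseudoinverse must commute with taking the opposite. Via Theorem~\ref{thm:gsvd-compositional} this reduces to checking that the flip of $\Sur(\Inj(R))$ equals $\Sur(\Inj(R^\dagger))$. That in turn follows from two facts: the flip of $\Inj(R)$ is $\Sur$ of the flip, up to the swap of orthogonal factors, and Lemma~\ref{lem:comp-idempotent} gives the commutativity $\Sur(\Inj(\cdot))=\Inj(\Sur(\cdot))$. Once this is in place, the chain of equivalences runs exactly as in the proof of Theorem~\ref{thm:teo7-nosso}:

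\[
\text{Cor.}~\ref{cor:golub7-nosso}\ (\text{dual}) \;\Longrightarrow\; \text{Lem.}~\ref{lem:thm7.2} \;\Longrightarrow\; \text{Thm.}~\ref{thm:GSVD-Pseudoinverse}.
\]
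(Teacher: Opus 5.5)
Your skeleton (Lemma~\ref{lem:thm7.2} in place of Lemma~\ref{lem:thm7}, a dual of Cor.~\ref{cor:golub7-nosso}, then Theorem~\ref{thm:GSVD-Pseudoinverse}) is what the paper's one-line ``analogous'' proof means. The appendix never states that dual, so the paper implicitly takes it from \cite[Thm.~7]{Golub:1991} with $B=I$ instead of $C=I$ (Remark~\ref{rmk:rsvd-to-gsvd}, item 2).

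The gap is in your fallback derivation of that step: it uses the wrong duality. Flipping a diagram gives the opposite relation, and the opposite \emph{preserves} inclusions. So $S^{op}\subseteq R$ becomes $S\subseteq R^{op}$, a \emph{function} below a relation, which is not an instance of Theorem~\ref{thm:teo7-nosso}. The other claims you attach to the flip also fail:
\begin{itemize}
\item $\Ker(R^{op})=\CoKer(R)$, so $R^{op}$ is injective iff $R$ is \emph{deterministic}, not surjective.
\item Opposition exchanges $k_I\leftrightarrow k_D$ and $k_S\leftrightarrow k_T$, not $k_I\leftrightarrow k_S$.
\item The opposite of $\Inj(R)$ is $\Det(R^{op})$, not $\Sur(R^{op})$.
\item The opposite of the cofunction $R^+$ is a function, while $(R^{op})^+$ is a cofunction. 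They can only coincide when both are the graph of an invertible map.
\end{itemize}

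What you need is the transpose $R^\top:=(R^\perp)^{op}$, i.e.\ flip \emph{and} color swap (the grey flipped box of Definition~\ref{def:composing_diagrams}, item 4). It has the right properties:
\begin{itemize}
\item It reverses inclusions.
\item It sends the cofunction $S^{op}$ to the cofunction $(S^\top)^{op}$, and $\|S^\top\|_F=\|S\|_F$.
\item It satisfies $\Ker(R^\top)=\Img(R)^\perp$ and $\Img(R^\top)=\Ker(R)^\perp$.
\end{itemize}
Hence $S^{op}\subseteq R\iff R^\top\subseteq (S^\top)^{op}$, and $R$ is surjective iff $R^\top$ is injective. Since orthogonal projections are symmetric, $\Inj(R)^\top=\Sur(R^\top)$ and $\Sur(R)^\top=\Inj(R^\top)$. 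With the commutativity in Lemma~\ref{lem:comp-idempotent} this gives $(R^+)^\top=\Inj(\Sur(R^\top))=\Sur(\Inj(R^\top))=(R^\top)^+$. Applying Theorem~\ref{thm:teo7-nosso} to $R^\top$ then finishes the proof directly. The detour through Lemma~\ref{lem:thm7.2} and the matrix problem becomes unnecessary.

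Corrected this way, your argument is a genuinely different route from the paper's. It is a self-contained duality reduction to the injective case, rather than a second appeal to the external RSVD result.

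Separately, the equation $A=CS$ (diagrammatically $S$ then $C$) characterizes $S^{op}\subseteq R$ only when $R$ is presented as the cospan $\{(x,y):Cx=Ay\}$, i.e.\ Golub's triple $(A,I,C)$. For a span $\{(Cw,Aw)\}$ it does not type-check, so your ``$A=SC$'' must be read with $R$ in that cospan form.
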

\begin{proof}
Analogous to the proof of Theorem~\ref{thm:teo7-nosso}.
\end{proof}

The results of this section establish an interesting connection between the pseudoinverse of linear relations and optimization problems. In particular, Theorems~\ref{thm:teo4-nosso},~\ref{thm:teo7-nosso} and~\ref{thm:teo7-nosso2} characterize solutions according to two distinct criteria: minimizing the rank of the difference between relations and minimizing the Frobenius norm of a cofunction that approximates a given linear relation.

In the next section, we revisit these criteria by exploring a relational version of the low-rank approximation problem. This formulation brings together the rank constraint and the minimization of the Frobenius norm within a unified optimization framework, extending classical results, such as the Eckart–Young theorem, to the context of linear relations.

\section{Relational Low-Rank Approximation}\label{sec:LR}

In Linear Algebra, low-rank approximation consists of an optimization problem that seeks to represent a linear transformation in the best possible way by another of lower rank. This approach has numerous applications, as it enables compact representations with controlled information loss. Applications include image processing and compression, data mining and analysis, noise reduction, regularization of ill-posed problems, and statistical modeling, among many others \citep{lowrank1,lowrank2,lowrank3,lowrank4,lowrank5}.
This chapter presents a generalization of low-rank approximation to linear relations.

Throughout this work we generalize the notion of ``rank''
to linear relations according to the following definition.

\begin{definition}[Rank of a Linear Relation]\label{def:rank-linear-relation}
  For every linear relation $\Rel{R}$, we define its rank as
  \[
    \rank\left( \scalebox{0.7}{\Rel{R}} \right)
    \coloneqq \dim\left( \scalebox{0.7}{\Subspace{\Img(R)}} \right)
    = \dim\left( \scalebox{0.7}{\Diagram{Rel}{R}{Black}{}} \right).
  \]
\end{definition}

\subsection{Generalizing Low-Rank Approximation for Linear Relations}\label{sec:g-lowrank}

We begin by using the GSVD to define a low-rank cofunction that approximates a linear relation.

\begin{definition}[Relational Low-Rank Approximation]\label{def:lowrank-gsvd}
For every linear relation  \(\Rel{R}=\input{GSVD_GSVD_withD.tikz}\) with rank $r$ and for any \(\ell \le r\), we define the cofunctions
\[\CoMap{R_{\overline{\ell}}^{+}} \coloneqq \input{GSVD_GSVD_pseudoinversa_withD_ell.tikz}\qquad\text{and}\qquad
\CoMap{R_{\ell}^{+}} \coloneqq \input{GSVD_GSVD_pseudoinversa_withD_ell2.tikz},\]
\noindent where
\begin{itemize}
    \item \(\CoMap{D_{\overline{\ell}}^{+}}\) is the cofunction containing the $\ell$ \textbf{smallest} elements of the diagonal of $\CoMap{D^+}$;
    \item \(\CoMap{D_{\ell}^{+}}\) as the cofunction containing the $\ell$ \textbf{largest} elements of the diagonal of $\CoMap{D^+}$.
\end{itemize}
\end{definition}

\begin{remark}\label{rmk:simple_identities_on_truncations}
  For any linear relation $\Rel{R}$ with rank $r$,
  \begin{align*}
    \CoMap{R_{\overline{r}}^{+}}&=\CoMap{R_{r}^{+}}=\CoMap{R^{+}}, \\
    \CoMap{R^+_{\overline{r - k}}} &= \input{GSVD_R_mais-R_k.tikz}.
  \end{align*}
\end{remark}

Theorem~\ref{thm:Eckart-Young-gsvd} states that, for every linear relation $\scalebox{0.8}{\Rel{R}}$ of rank $r$, $\scalebox{0.8}{\CoMap{R_{\overline{\ell}}^{+}}}$, as defined in Definition~\ref{def:lowrank-gsvd}, is the minimum norm cofunction with rank $\ell$ that approximates $\scalebox{0.8}{\Rel{R}}$. A similar result was proved in~\cite[Thm.~6]{Golub:1991} (stated in Theorem~\ref{thm:teo6}), although in a different context: given three linear transformations $\scalebox{0.8}{\Map{A}}$, $\scalebox{0.8}{\Map{B}}$, and $\scalebox{0.8}{\Map{C}}$, the authors characterize the minimum norm function $\scalebox{0.8}{\Map{S}}$ that satisfies a given value for $\rank\left(\scalebox{0.6}{\input{Golub_a-bdc.tikz}}\right)$.

\begin{theorem}[Relational Low-Rank Approximation with Least Norm]\label{thm:Eckart-Young-gsvd}
  Let $\TypedCoMap{R_{\overline{\ell}}^{+}}{m}{n}$ be defined as in Definition~\ref{def:lowrank-gsvd}. For every linear relation $\TypedRel{R}{m}{n}$ of rank $r$ and for every linear transformation $\TypedMap{S}{n}{m}$ satisfying \[k_I\leq \rank\left(\input{Golub_r-d.tikz}\right)< k_I +d,\] we have:
\[\TypedCoMap{R_{\overline{\ell}}^{+}}{m}{n}\in\underset{\scalebox{0.7}{\TypedCoMap{S}{m}{n}}}{\argmin}\left[\left\|\scalebox{0.9}{\Map{S}} \right\|_F\;:\;\rank\left(\scalebox{0.9}{\input{Golub_r-d.tikz}}\right)=r-\ell\right].\]
\end{theorem}
\begin{proof}
\begin{hcalculation}[\Leftrightarrow ]{\scalebox{0.8}{\Rel{R}}=\scalebox{0.8}{\Span{A}{B}}}
\hstep{\scalebox{0.7}{\input{GSVD_GSVD_pseudoinversa_withD_r-k.tikz}}  \in \underset{\scalebox{0.7}{\CoMap{S}}}{\argmin}\left[\left\|\scalebox{0.7}{\Map{S}} \right\|_F\;:\;\rank\left(\scalebox{0.7}{\input{Golub_a-dc.tikz}}\right)=k\right]}{Cor.\ref{cor:golub6-nosso}}
\hstep{\scalebox{0.7}{\input{GSVD_GSVD_pseudoinversa_withD_r-k.tikz}}  \in \underset{\scalebox{0.7}{\CoMap{S}}}{\argmin}\left[\left\|\scalebox{0.7}{\Map{S}} \right\|_F\;:\;\rank\left(\scalebox{0.7}{\input{Golub_r-d.tikz}}\right)=k\right]}{Lem.\ref{lem:rank_R-S}}
\hstep{\scalebox{0.7}{\input{GSVD_GSVD_pseudoinversa_withD_ell3.tikz}}  \in \underset{\scalebox{0.7}{\CoMap{S}}}{\argmin}\left[\left\|\scalebox{0.7}{\Map{S}} \right\|_F\;:\;\rank\left(\scalebox{0.7}{\input{Golub_r-d.tikz}}\right)=r-\ell\right]}{\makecell{Variable change:\\$\ell\coloneqq r-k$}}
\hstep{\scalebox{0.7}{\CoMap{R_{\overline{\ell}}^{+}}}  \in \underset{\scalebox{0.7}{\CoMap{S}}}{\argmin}\left[\left\|\scalebox{0.7}{\Map{S}} \right\|_F\;:\;\rank\left(\scalebox{0.7}{\input{Golub_r-d.tikz}}\right)=r-\ell\right]}{\makecell{Def.\ref{def:lowrank-gsvd}: Relational\\Low-Rank}}
\end{hcalculation}
\end{proof}

\subsection{Instances}\label{sec:lowrank-instances}

From the formulation presented in Theorem~\ref{thm:Eckart-Young-gsvd}, it is possible to retrieve a series of results from the literature through instances of the linear relation $\scalebox{0.8}{\Rel{R}}$ and the number $\ell$. We begin with a more familiar instance, namely, when the linear relation reduces to a linear transformation. This case serves as a consistency check for the relational formulation and provides a direct bridge to classical theory, since the result obtained corresponds to the Eckart-Young theorem~\citep{EckartYoung1936,YoungHouseholder1936}.

\begin{theorem}[Eckart-Young Theorem via Relational Low-Rank Approximation]
  \label{thm:eckart-young}
  Let $\TypedMap{A}{m}{n}$ be a linear transformation with rank $r$. Then, for every $0 \leq k < r$ we have
\[\TypedCoMap{A_k}{m}{n}\in\underset{\scalebox{0.7}{\TypedCoMap{S}{m}{n}}}{\argmin}\left[\left\|\scalebox{0.9}{\input{Golub_a-s.tikz}} \right\|_F\;:\;\rank\left(\scalebox{0.9}{\Map{S}}\right)=k\right].\]
\end{theorem}

\begin{proof} Let $k=r-\ell$. We instantiate Theorem~\ref{thm:Eckart-Young-gsvd} with $\scalebox{0.8}{\Rel{R}}=\scalebox{0.8}{\CoMap{A}}$.

\begin{hcalculation}[\iff]{\scalebox{0.7}{\CoMap{A_{\overline{r-k}}^{+}}}  \in \underset{\scalebox{0.7}{\CoMap{S}}}{\argmin}\left[\left\|\scalebox{0.7}{\Map{S}} \right\|_F\;:\;\rank\left(\scalebox{0.7}{\input{Golub_a-s.tikz}}\right)=k\right]}
\hstep{\scalebox{0.7}{\CoMap{A_{\overline{r-k}}}}  \in \underset{\scalebox{0.7}{\CoMap{S}}}{\argmin}\left[\left\|\scalebox{0.7}{\Map{S}} \right\|_F\;:\;\rank\left(\scalebox{0.7}{\input{Golub_a-s.tikz}}\right)=k\right]}{$k_I = k_S = 0$}
\hstep{\scalebox{0.7}{\input{GSVD_A-A_k.tikz}}\in\underset{\scalebox{0.6}{\input{Golub_a-s_op.tikz}}}{\argmin}\left[\left\|\scalebox{0.6}{\input{Golub_a-s.tikz}} \right\|_F\;:\;\rank\left(\scalebox{0.7}{\Map{S}}\right)=k\right]}{\makecell{Variable change:\\$\scalebox{0.6}{\Map{S}}\coloneqq \scalebox{0.6}{\input{Golub_a-s.tikz}}$}}
\hstep{\scalebox{0.7}{\CoMap{A_k}}\in\underset{\scalebox{0.7}{\CoMap{S}}}{\argmin}\left[\left\|\scalebox{0.7}{\input{Golub_a-s.tikz}} \right\|_F\;:\;\rank\left(\scalebox{0.7}{\Map{S}}\right)=k\right]}{Simplify}
\end{hcalculation}
\end{proof}

Lemma~\ref{lem:lowrank-to-upperlimit} shows that Theorem~\ref{thm:Eckart-Young-gsvd} recovers Theorem~\ref{thm:teo7-nosso} when $\ell=r$, that is, it finds the smallest upper limit of a linear relation.

\begin{lemma}[Upper limit with least norm of a linear relation via Relational Low-Rank Approximation]\label{lem:lowrank-to-upperlimit}
Let $\TypedRel{R}{m}{n}$ be an injective linear relation and $\TypedCoMap{R^+}{m}{n}$ its pseudoinverse defined as in Definition~\ref{def:pseudo-rel}. Then
\[\TypedCoMap{R^+}{m}{n}\in\underset{\scalebox{0.7}{\TypedCoMap{S}{m}{n}}}{\argmin}\left[\left\|\scalebox{0.9}{\Map{S}} \right\|_F\;:\;\scalebox{0.9}{\Rel{R}}\subseteq \scalebox{0.9}{\CoMap{S}}\right],\]
\end{lemma}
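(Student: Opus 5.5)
The plan is to instantiate Theorem~\ref{thm:Eckart-Young-gsvd} with $\ell = r$, i.e.\ $k = r-\ell = 0$, and then translate the rank condition into the inclusion $R\subseteq S$. This parallels the derivation of Theorem~\ref{thm:eckart-young}.

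\emph{First step: identify the candidate.} By Remark~\ref{rmk:simple_identities_on_truncations}, $R^+_{\overline{r}} = R^+$. Hence the cofunction supplied by Theorem~\ref{thm:Eckart-Young-gsvd} for $\ell=r$ is exactly the pseudoinverse of Definition~\ref{def:pseudo-rel}.

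\emph{Second step: check the hypothesis and rewrite the constraint.} Since $R$ is injective, Lemma~\ref{lem:no_lolipop_GSVD} gives $k_I=0$. The admissible range $k_I\le \rank(R-S) < k_I+d$ therefore becomes $0\le \rank(R-S)<d$, which contains the target value $r-\ell=0$ whenever $d>0$. The degenerate case $d=0$ would be handled separately: then $R^+$ is the zero/trivial cofunction on the relevant wires and the statement is immediate. The constraint $\rank(R-S)=0$ says that the image of $R-S$ is the zero subspace. By item~4 of Lemma~\ref{lem:thm7}, with $R=\Span{A}{B}$ and Lemma~\ref{lem:rank_R-S} identifying the image of $R-S$ with that of the associated map expression, this is equivalent to item~2, namely $R\subseteq S$.

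\emph{Third step: conclude.} Substitute this equivalence inside the $\argmin$. The feasible sets $\{S : \rank(R-S)=0\}$ and $\{S : R\subseteq S\}$ coincide, so the minimizers coincide, and $R^+$ lies in the required $\argmin$.

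The main obstacle is the side condition of Theorem~\ref{thm:Eckart-Young-gsvd}, which is phrased as a condition on $\rank(R-S)$ rather than on $\ell$. One must argue that taking $k=0=k_I$ lies within its scope, including the borderline case $d=0$. If that route proves awkward, the fallback is Theorem~\ref{thm:teo7-nosso}, which states exactly this conclusion for injective $R$. The lemma can then be presented as the consistency check that the $\ell=r$ truncation reproduces it.
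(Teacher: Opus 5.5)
Your proposal is correct and follows the paper's own proof. The paper instantiates Theorem~\ref{thm:Eckart-Young-gsvd} with $\ell=r$, uses Remark~\ref{rmk:simple_identities_on_truncations} to get $R^+_{\overline{r}}=R^+$, and uses Lemma~\ref{lem:thm7} (with Lemma~\ref{lem:rank_R-S}) to turn $\rank(R-S)=0$ into $R\subseteq S$. The paper leaves implicit two details you make explicit: that $k_I=0$ puts $k=0$ in the admissible range when $d>0$, and that the trivial case $d=0$ holds because $R^+$ is then the zero cofunction, which is feasible.
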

\begin{proof} We instantiate Theorem~\ref{thm:Eckart-Young-gsvd} with $r = \ell$.
\begin{hcalculation}[\Leftrightarrow ]{\scalebox{0.8}{\CoMap{R_{\overline{r}}^{+}}}\in\underset{\scalebox{0.7}{\CoMap{S}}}{\argmin}\left[\left\|\scalebox{0.8}{\Map{S}} \right\|_F\;:\;\rank\left(\scalebox{0.8}{\input{Golub_r-d.tikz}}\right)=0\right]}
\hstep{\scalebox{0.8}{\CoMap{R^+}}\in\underset{\scalebox{0.7}{\CoMap{S}}}{\argmin}\left[\left\|\scalebox{0.8}{\Map{S}} \right\|_F\;:\;\scalebox{0.8}{\Rel{R}}\subseteq \scalebox{0.8}{\CoMap{S}}\right]}{$R_{\bar{r}}^{+}=R^{+}$ and Lem.\ref{lem:thm7}}
\end{hcalculation}
\end{proof}

By instantiating $\ell=r$, we obtain another classic result when considering the linear relation $\scalebox{0.8}{\Rel{R}}$ as a linear transformation $\scalebox{0.8}{\Map{A}}$: the pseudoinverse is the left-inverse of least norm of a linear transformation. Lemma~\ref{lem:left_inverse} shows this result.

\begin{lemma}[Left-inverse with Least Norm]\label{lem:left_inverse}
  The pseudoinverse of an \emph{injective} linear transformation $\TypedMap{A}{m}{n}$ satisfies
  \[\TypedCoMap{A^+}{m}{n}\in\underset{\scalebox{0.7}{\TypedCoMap{S}{m}{n}}}{\argmin}\left[\left\|\scalebox{0.9}{\Map{S}} \right\|_F\;:\;\scalebox{0.9}{\input{EY_left-pseudo.tikz}}\right],\]
\end{lemma}
\begin{proof} We instantiate Theorem~\ref{thm:Eckart-Young-gsvd} with $r = \ell$ and $\scalebox{0.8}{\Rel{R}}=\scalebox{0.8}{\Map{A}}$.
  \begin{hcalculation}[\Leftrightarrow ]{\scalebox{0.8}{\CoMap{A_{\overline{r}}^{+}}}\in\underset{\scalebox{0.7}{\CoMap{S}}}{\argmin}\left[\left\|\scalebox{0.8}{\Map{S}} \right\|_F\;:\;\rank\left(\scalebox{0.8}{\input{Golub_a-d.tikz}}\right)=0\right]}
  \hstep{\scalebox{0.8}{\CoMap{A_{\overline{r}}^{+}}}\in\underset{\scalebox{0.7}{\CoMap{S}}}{\argmin}\left[\left\|\scalebox{0.8}{\Map{S}} \right\|_F\;:\;\scalebox{0.8}{\input{Golub_im_a-d.tikz}}=\scalebox{0.8}{\Zero}\right]}{Def.~\ref{def:rank-linear-relation}}
    \hstep{\scalebox{0.8}{\CoMap{A^+}}\in\underset{\scalebox{0.7}{\CoMap{S}}}{\argmin}\left[\left\|\scalebox{0.8}{\Map{S}} \right\|_F\;:\;\scalebox{0.8}{\input{EY_left-pseudo.tikz}}\right]}{Lem.\ref{lem:thm7}}
  \end{hcalculation}
\end{proof}

Instances of Theorem~\ref{thm:Eckart-Young-gsvd} also allow us to recover some of the optimization problems proposed in~\cite{Golub:1991}, already cited throughout this work. To do this, it suffices to consider the linear relation $\scalebox{0.8}{\Rel{R}}$ in its cospan form (Proposition ~\ref{prop:span-cospan}).

\begin{lemma}[{\cite[Thm.~6]{Golub:1991}} via Relational Low-Rank Approximation]\label{lem:lowrank-to-golub6}
For every linear relation $\Rel{R}=\Span{A}{B}$, there exist orthogonal linear transformations \(\Ortho{U}\) and \(\Ortho{V}\), and a diagonal linear relation \(\Rel{D}\), defined as in Definition~\ref{def:diagonalrelation}, such that
\[\scalebox{0.9}{\input{GSVD_GSVD_pseudoinversa_withD_r-k_op.tikz}}  \in \underset{\scalebox{0.7}{\Map{S}}}{\argmin}\left[\left\|\Map{S} \right\|_F\;:\;\rank\left(\scalebox{0.9}{\input{Golub_a-dc.tikz}}\right)=k\right].\]
\end{lemma}
\begin{proof}
Apply Theorem~\ref{thm:Eckart-Young-gsvd} with $\scalebox{0.8}{\Rel{R}}=\scalebox{0.8}{\Span{A}{B}}$.
\end{proof}

\begin{lemma}[{\cite[Thm.~7]{Golub:1991}} via Relational Low-Rank Approximation]\label{lem:lowrank-to-golub7}
For every linear relation $\Rel{R}=\Span{A}{B}$, we have that:
\[\input{GSVD_GSVD_pseudoinversa_withD_op.tikz}  \in \underset{\scalebox{0.7}{\Map{S}}}{\argmin}\left[\left\|\Map{S} \right\|_F\;:\;\Map{A}= \right]\]
\end{lemma}
\begin{proof} By substituting $\CoMap{R^+} = \input{GSVD_GSVD_pseudoinversa_withD2.tikz}$ in Lemma~\ref{lem:lowrank-to-upperlimit} we obtain
\begin{hcalculation}[\Leftrightarrow]{\scalebox{0.8}{\input{GSVD_GSVD_pseudoinversa_withD2.tikz}}\in\underset{\scalebox{0.7}{\CoMap{S}}}{\argmin}\left[\left\|\scalebox{0.8}{\Map{S}} \right\|_F\;:\;\scalebox{0.8}{\Rel{R}}\subseteq \scalebox{0.8}{\CoMap{S}}\right]}
\hstep{\scalebox{0.8}{\input{GSVD_GSVD_pseudoinversa_withD_op.tikz}}  \in \underset{\scalebox{0.7}{\Map{S}}}{\argmin}\left[\left\|\scalebox{0.8}{\Map{S}} \right\|_F\;:\;\scalebox{0.8}{\Map{A}}=\scalebox{0.8}{} \right]}{Algebra +  Lem.\ref{lem:thm7}}
\end{hcalculation}
\end{proof}

One can also recover the main result from~\cite{Golub1987}, which describes a solution to the problem of low-rank approximation with a specified number of fixed columns.

\begin{lemma}[\cite{Golub1987} via Relational Low-Rank Approximation]\label{lem:lowrank-to-golub87} Let $\Rel{R}=\input{Golub_lowrank_golub.tikz}$ be a linear relation. Then there exist $\Map{A_1}$ and $\Map{A_2}$ such that $\Map{A}=\input{Golub_im_r-d-golub3.tikz}$ and
\[\scalebox{0.9}{\input{Golub_golub87_golub1.tikz}}\in\underset{\scalebox{0.7}{\CoMap{\bar{A}_2}}}{\argmin}\left[\left\|\scalebox{0.9}{\input{Golub_golub87_golub2.tikz}} \right\|_F\;:\;\rank\left(\scalebox{0.9}{\input{Golub_golub87_golub3.tikz}}\right)=r-\ell\right],\]
\end{lemma}
\begin{proof}
We instantiate Theorem~\ref{thm:Eckart-Young-gsvd} with $\scalebox{0.8}{\Rel{R}}=\scalebox{0.8}{\input{Golub_lowrank_golub.tikz}}$.
\begin{hcalculation}[\Leftrightarrow]{\scalebox{0.7}{\CoMap{R_{\overline{\ell}}^{+}}}  \in \underset{\scalebox{0.7}{\CoMap{S}}}{\argmin}\left[\left\|\scalebox{0.7}{\Map{S}} \right\|_F\;:\;\rank\left(\scalebox{0.7}{\input{Golub_golub87_golub4.tikz}}\right)=r-\ell\right]}
\hstep{\scalebox{0.7}{\CoMap{R_{\overline{\ell}}^{+}}}  \in \underset{\scalebox{0.7}{\CoMap{S}}}{\argmin}\left[\left\|\scalebox{0.7}{\Map{S}} \right\|_F\;:\;\rank\left(\scalebox{0.7}{\input{Golub_golub87_golub5.tikz}}\right)=r-\ell\right]}{Lem.\ref{lem:golub87}}
\hstep{\scalebox{0.7}{\CoMap{R_{\overline{\ell}}^{+}}}  \in \underset{\scalebox{0.7}{\input{Golub_golub87_golub7.tikz}}}{\argmin}\left[\left\|\scalebox{0.7}{\input{Golub_golub87_golub8.tikz}} \right\|_F\;:\;\rank\left(\scalebox{0.7}{\input{Golub_golub87_golub9.tikz}}\right)=r-\ell\right]}{\makecell{Variable change:\\\scalebox{0.7}{\input{Golub_golub87_golub6.tikz}}}}
\hstep{\scalebox{0.7}{\input{Golub_golub87_golub10.tikz}}  \in \underset{\scalebox{0.7}{\CoMap{\bar{A}_2}}}{\argmin}\left[\left\|\scalebox{0.7}{\input{Golub_golub87_golub8.tikz}} \right\|_F\;:\;\rank\left(\scalebox{0.7}{\input{Golub_golub87_golub9.tikz}}\right)=r-\ell\right]}{Algebra}
\end{hcalculation}
\end{proof}

Finally, Theorem~\ref{thm:Eckart-Young-gsvd} can also be instantiated using subspaces, arriving at all of the results shown in Table~\ref{tab:instantiations_of_geyt}. Below, as an example, we prove two of them.

\begin{lemma}\label{lem:subspace-copy} Let $\Rel{R}=\input{subspace_A-imB.tikz}$ be a linear relation where $\Subspace{V}$ is a subspace and $\Map{A}$ is a linear transformation. Then
\[\input{subspace_a-pseudo.tikz}\in\underset{\scalebox{0.7}{\TypedCoMap{\tilde{S}}{m}{n}}}{\argmin}\left[\left\|\scalebox{0.9}{\input{subspace_a-Stildeop.tikz}} \right\|_F\;:\;\rank\left(\scalebox{0.9}{\input{subspace_B-Stilde.tikz}}\right)=r-\ell\right].\]
\end{lemma}
\begin{proof}We instantiate Theorem~\ref{thm:Eckart-Young-gsvd} with $\scalebox{0.8}{\Rel{R}}=\scalebox{0.8}{\input{subspace_A-imB.tikz}}$.
\begin{hcalculation}[\Leftrightarrow]{\scalebox{0.7}{\CoMap{R_{\overline{\ell}}^{+}}}  \in \underset{\scalebox{0.7}{\CoMap{S}}}{\argmin}\left[\left\|\scalebox{0.7}{\Map{S}} \right\|_F\;:\;\rank\left(\scalebox{0.7}{\input{subspace_step1.tikz}}\right)=r-\ell\right]}
\hstep{\scalebox{0.7}{\CoMap{R_{\overline{\ell}}^{+}}}  \in \underset{\scalebox{0.7}{\CoMap{S}}}{\argmin}\left[\left\|\scalebox{0.7}{\Map{S}} \right\|_F\;:\;\rank\left(\scalebox{0.7}{\input{subspace_step2.tikz}}\right)=r-\ell\right]}{Fig.~\ref{fig:Commutative_Comonoid}}
\hstep{\scalebox{0.7}{\CoMap{R_{\overline{\ell}}^{+}}}  \in \underset{\scalebox{0.6}{\input{subspace_a-Stilde.tikz}}}{\argmin}\left[\left\|\scalebox{0.7}{\input{subspace_a-Stildeop.tikz}} \right\|_F\;:\;\rank\left(\scalebox{0.7}{\input{subspace_B-Stilde.tikz}}\right)=r-\ell\right]}{\makecell{Variable change\\$\scalebox{0.7}{\Map{\tilde{S}}}:=\scalebox{0.7}{\input{subspace_A-Sop.tikz}}$}}
\hstep{\scalebox{0.7}{\input{subspace_a-pseudo.tikz}}  \in \underset{\scalebox{0.7}{\CoMap{\tilde{S}}}}{\argmin}\left[ \left\|\scalebox{0.7}{\input{subspace_a-Stildeop.tikz}} \right\|_F\;:\;\rank\left(\scalebox{0.7}{\input{subspace_B-Stilde.tikz}}\right)=r-\ell\right]}{Algebra}
\end{hcalculation}
\end{proof}

\begin{lemma}\label{lem:subspace-sum} Let $\Rel{R}=\input{subspace_V-mais-Aop.tikz}$ be a linear relation where $\Subspace{V}$ is a subspace and $\Map{A}$ is a linear transformation. Then
\[\input{subspace_a-pseudo.tikz}\in\underset{\scalebox{0.7}{\TypedCoMap{\tilde{S}}{m}{n}}}{\argmin}\left[\left\|\scalebox{0.9}{\input{subspace_a-Stildeop.tikz}} \right\|_F\;:\;\rank\left(\scalebox{0.9}{\input{subspace_V-mais-S.tikz}}\right)=r-\ell\right].\]
\end{lemma}
\begin{proof}We instantiate Theorem~\ref{thm:Eckart-Young-gsvd} with $\scalebox{0.8}{\Rel{R}}=\scalebox{0.8}{\input{subspace_V-mais-Aop.tikz}}$, and the proof proceeds analogously to the proof of Lemma~\ref{lem:subspace-copy}.
\end{proof}

\begin{lemma}\label{lem:2subspaces} Let $\Rel{R}=\input{subspace_2subspaces.tikz}$ be a linear relation where $\Subspace{U}$ and $\Subspace{V}$ are subspaces and $\Map{A}$ is a linear transformation. Then
\[\input{subspace_a-pseudo.tikz}\in\underset{\scalebox{0.7}{\TypedCoMap{\tilde{S}}{m}{n}}}{\argmin}\left[\left\|\scalebox{0.9}{\input{subspace_a-Stildeop.tikz}} \right\|_F\;:\;\rank\left(\scalebox{0.9}{\input{subspace_2subspacesop.tikz}}\right)=r-\ell\right].\]
\end{lemma}
\begin{proof}We instantiate Theorem~\ref{thm:Eckart-Young-gsvd} with $\scalebox{0.8}{\Rel{R}}=\scalebox{0.8}{\input{subspace_2subspaces.tikz}}$, and the proof proceeds analogously to the proof of Lemma~\ref{lem:subspace-copy}.
\end{proof}

\appendix
\section{Axioms and Theorems in Graphical Notation}
\label{ap:gla}

Most of the proofs in this text consist of sequences of diagram manipulations. This subsection contains all the necessary rules in their diagrammatic form. We present the theorems of linear relations in graphical syntax, which will be used later. Only the theorems necessary for the presented proofs are recorded here. A complete presentation can be found at~\cite{PAIXAO2022}. In order to maintain the focus of this work, most of the proofs in this section will be omitted, as they can be found in the same reference.

Raw terms are quotiented with respect to the laws of symmetric strict monoidal (SSM) categories, summarized in Fig.\ref{fig:lawsSMC}. We omit the (well-known) details~\cite{selinger2010survey} here and mention only that this amounts to eschewing the need for ``dotted line boxes''
and ensuring that diagrams with the same topological connectivity are equated. For the purposes of this work, a detailed knowledge of strict symmetric monoidal categories is not required, since we use this structure only as a convenient tool for graphical reasoning.

\begin{figure}[pos=h]
  \begin{center}
    \[
      \scalebox{0.9}{\input{SSM_sequential-associativity.tikz}} = \scalebox{0.9}{\input{SSM_sequential-associativity-1.tikz}}
    \]
    \[
      \scalebox{0.9}{\input{SSM_interchange-law.tikz}} = \scalebox{0.9}{\input{Introdution_SSM_interchange-law-1.tikz}}
      \qquad
      \scalebox{0.9}{\input{SSM_parallel-associativity.tikz}} = \scalebox{0.9}{\input{SSM_parallel-associativity-1.tikz}}
    \]
    \[
      \scalebox{0.9}{\input{SSM_unit-right2.tikz}} = \scalebox{0.9}{\input{SSM_c.tikz}}
      = \scalebox{0.9}{\input{SSM_unit-left2.tikz}}
      \quad\quad
      \scalebox{0.9}{\input{SSM_parallel-unit-above.tikz}} = \scalebox{0.9}{\input{SSM_c.tikz}} =  \scalebox{0.9}{\input{SSM_parallel-unit-below.tikz}}
    \]
    \[
      \scalebox{0.9}{\input{SSM_sym-natural.tikz}} = \scalebox{0.9}{\input{SSM_sym-natural-1.tikz}}
      \qquad
      \scalebox{0.9}{\input{SSM_sym-iso.tikz}} = \scalebox{0.9}{\input{SSM_id2.tikz}}
    \]
    \caption{Laws of Symmetric Strict Monoidal (SSM) Categories.}
    \label{fig:lawsSMC}
  \end{center}
\end{figure}

In graphical notation, the diagrams are closed under two symmetries: \emph{Mirror-Image} and \emph{Color-Swap}.
The theorems and their corresponding symmetries are summarized in Figures~\ref{fig:Commutative_Comonoid} to~\ref{fig:min_max}.
Finally, Figure~\ref{fig:typerelations} outlines statements determining whether a relation is classified as total (TOT), deterministic (DET), surjective (SUR), or injective (INJ).

Lemmas~\ref{lem:weak-inverse} to~\ref{lem:pa_pum_igual} show how compositions of linear relations and linear transformations behave in equalities and inequalities.

\begin{lemma}[Opposite is Weak Inverse]
  \label{lem:weak-inverse}
  For every linear transformation $\Map{A}$,
  \[ \Comp{A/Map,A/CoMap,A/Map} = \Map{A}. \]
\end{lemma}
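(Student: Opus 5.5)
We work in the semantics of linear relations, where $A$ is a map $\mathbb{R}^m\to\mathbb{R}^n$ and $A^\top$ its transpose. The statement to prove is Lemma~\ref{lem:weak-inverse}: the composite $A;A^{\mathrm{op}};A$ equals $A$. Concretely, the left-hand side is
\[\{(x,z) \;:\; \exists y,w \text{ with } Ax=y,\; Aw=y,\; Aw=z\}.\]
We show two inclusions: first that $A$ is contained in the composite, then the reverse.

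\textbf{Inclusion of $A$ in the composite.} Take $(x,Ax)$. Choose $w=x$ and $y=Ax$. Then $Ax=y$, $Aw=y$ and $Aw=Ax$, so $(x,Ax)$ lies in the composite.

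\textbf{Reverse inclusion.} Take any $(x,z)$ in the composite, with witnesses $y,w$. Then $z=Aw=y=Ax$, so $(x,z)\in\mathrm{Gr}(A)$.

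Alternatively, the same argument can be run diagrammatically using the rules of Figure~\ref{fig:typerelations}. The inner composite $A^{\mathrm{op}};A$ is the projection-like relation $\{(y,z): y=z\in\Img(A)\}$. Determinism of $A$ (DET) gives $A^{\mathrm{op}};A\subseteq \mathrm{id}$. Totality of $A$ (TOT) gives $\mathrm{id}\subseteq A;A^{\mathrm{op}}$.

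\textbf{Chains of inclusions.} Composing the DET inclusion on the left with $A$ yields $A;A^{\mathrm{op}};A\subseteq A$. Composing the TOT inclusion on the right with $A$ yields $A\subseteq A;A^{\mathrm{op}};A$. Antisymmetry of $\subseteq$ then gives equality. The only care needed is that composition is monotone with respect to $\subseteq$, which is immediate from the relational definition.

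There is no real obstacle. The only point requiring attention is orientation: using DET and TOT in the correct direction for $A$ versus $A^{\mathrm{op}}$ in the diagrammatic version. The pointwise argument above sidesteps this entirely.
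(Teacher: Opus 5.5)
Your proof is correct. The pointwise argument is complete on its own, and your diagrammatic alternative is the standard Graphical Linear Algebra argument: $A\subseteq A;A^{\mathrm{op}};A$ follows from TOT ($\mathrm{id}\subseteq A;A^{\mathrm{op}}$), and $A;A^{\mathrm{op}};A\subseteq A$ follows from DET ($A^{\mathrm{op}};A\subseteq\mathrm{id}$), both via Figure~\ref{fig:typerelations} together with monotonicity of composition. The paper omits this proof and defers to the GLA literature, but it uses the same TOT/DET sandwich in the neighbouring Lemmas~\ref{lem:pair_of_matrices_1} and~\ref{lem:pa_pum}, so your approach matches its style.
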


\begin{lemma}
  \label{lem:pair_of_matrices_1}
  Let $\Map{A}$ and $\Map{B}$ be two linear transformations, then
  \[\input{Papum_Lemma1_lemma.tikz}.\]
\end{lemma}
\begin{proof}
  \begin{hcalculation}[\subseteq]{\scalebox{0.8}{\Map{B}}}
    \hstep{\scalebox{0.8}{\input{Papum_Lemma1_proof1.tikz}}}{Fig.\ref{fig:typerelations}-TOT: \scalebox{0.7}{\input{theorems_totalid.tikz}}}
    \hstep{\scalebox{0.8}{\input{Papum_Lemma1_proof2.tikz}}}{Hyp: $\scalebox{0.7}{\Map{A}}\subseteq \scalebox{0.7}{\Map{B}}$}
    \hstep{\scalebox{0.8}{\Map{A}}}{Fig.\ref{fig:typerelations}-DET: \scalebox{0.7}{\input{theorems_detid.tikz}}}
  \end{hcalculation}
\end{proof}

\begin{lemma}\label{lem:pa_pum}
For every linear relations $\Rel{A_2}$, $\Rel{B_1}$ and linear transformations $\Map{A_1}$, $\Map{B_2}$,
\[\input{Papum_lemma.tikz}.\]
\end{lemma}
\begin{proof}
($\Rightarrow$)
\begin{hcalculation}[\subseteq]{\scalebox{0.8}{\begin{tikzpicture}
	\begin{pgfonlayer}{nodelayer}
		\node [style=none] (0) at (3, 0) {};
		\node [style=Map] (1) at (1.5, 0) {$B_2$};
		\node [style=Rel] (2) at (-0.75, 0) {$A_2$};
		\node [style=none] (3) at (-2.25, 0) {};
	\end{pgfonlayer}
	\begin{pgfonlayer}{edgelayer}
		\draw (3.center) to (2);
		\draw (2) to (1);
		\draw (1) to (0.center);
	\end{pgfonlayer}
\end{tikzpicture}
}}
  \hstep{\scalebox{0.8}{\input{Papum_idaprova2.tikz}}}{Fig.\ref{fig:typerelations}-TOT: \scalebox{0.7}{\input{theorems_totalid.tikz}}}
  \hstep{\scalebox{0.8}{\input{Papum_idaprova3.tikz}}}{Hyp: \scalebox{0.7}{\input{Papum_hip1.tikz}}}
  \hstep{\scalebox{0.8}{\begin{tikzpicture}
	\begin{pgfonlayer}{nodelayer}
		\node [style=none] (2) at (3.25, 0) {};
		\node [style=Rel] (3) at (1.5, 0) {$B_1$};
		\node [style=Map] (4) at (-0.75, 0) {$A_1$};
		\node [style=none] (5) at (-2.25, 0) {};
	\end{pgfonlayer}
	\begin{pgfonlayer}{edgelayer}
		\draw (3) to (2.center);
		\draw (5.center) to (4);
		\draw (4) to (3);
	\end{pgfonlayer}
\end{tikzpicture}
}}{Fig.\ref{fig:typerelations}-DET: \scalebox{0.7}{\input{theorems_detid.tikz}}}
\end{hcalculation}
($\Leftarrow$)
\begin{hcalculation}[\subseteq]{\scalebox{0.8}{\begin{tikzpicture}
	\begin{pgfonlayer}{nodelayer}
		\node [style=none] (0) at (3.25, 0) {};
		\node [style=Rel] (1) at (1.5, 0) {$A_2$};
		\node [style=CoMap] (2) at (-0.75, 0) {$A_1$};
		\node [style=none] (3) at (-2.25, 0) {};
	\end{pgfonlayer}
	\begin{pgfonlayer}{edgelayer}
		\draw (3.center) to (2);
		\draw (2) to (1);
		\draw (1) to (0.center);
	\end{pgfonlayer}
\end{tikzpicture}
}}
  \hstep{\scalebox{0.8}{\input{Papum_voltaprova2.tikz}}}{Fig.\ref{fig:typerelations}-TOT: \scalebox{0.7}{\input{theorems_totalid.tikz}}}
  \hstep{\scalebox{0.8}{\input{Papum_voltaprova3.tikz}}}{Hyp: \scalebox{0.7}{\input{Papum_hip2.tikz}}}
  \hstep{\scalebox{0.8}{\begin{tikzpicture}
	\begin{pgfonlayer}{nodelayer}
		\node [style=none] (2) at (3, 0) {};
		\node [style=CoMap] (3) at (1.5, 0) {$B_2$};
		\node [style=Rel] (4) at (-1, 0) {$B_1$};
		\node [style=none] (5) at (-2.75, 0) {};
	\end{pgfonlayer}
	\begin{pgfonlayer}{edgelayer}
		\draw (3) to (2.center);
		\draw (5.center) to (4);
		\draw (4) to (3);
	\end{pgfonlayer}
\end{tikzpicture}
}}{Fig.\ref{fig:typerelations}-DET: \scalebox{0.7}{\input{theorems_detid.tikz}}}
\end{hcalculation}
\end{proof}

\begin{lemma}\label{lem:pa_pum_igual}
  For every linear transformations $\Map{A_1}$, $\Map{A_2}$, $\Map{B_1}$, $\Map{B_2}$,
  \[\def\arraystretch{2.0}
  \begin{array}{lrclcrcl}
   1. & \Span{A_1}{A_2} &\subseteq& \CoSpan{B_1}{B_2}
      &\iff&
      \Comp{A_1/Map, B_1/Map} &=& \Comp{A_2/Map, B_2/Map} \\
    2.& \Image{A_2} &\subseteq& \Kernel{B_2}
      &\iff&
      \Discard \; \Zero &=& \Comp{A_2/Map, B_2/Map} \\
    3.& \Map{A_2} &\subseteq& \CoMap{B_2}
      &\iff&
      \Id &=& \Comp{A_2/Map, B_2/Map} \\
    4.& \CoMap{A_1} &\subseteq& \Map{B_1}
      &\iff&
      \Comp{A_1/Map, B_1/Map} &=& \Id \\
    5.& \Map{A_2} &\subseteq& \CoSpan{B_1}{B_2}
      &\iff&
      \Map{B_1} &=& \Comp{A_2/Map, B_2/Map} \\
    6.& \Span{A_1}{A_2} &\subseteq& \Map{B_1}
        &\iff&
        \Comp{A_1/Map, B_1/Map} &=& \Map{A_2}
  \end{array}
  \]
\end{lemma}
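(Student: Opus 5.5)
The plan is to prove item~1 in full and then obtain items~2--6 as specializations of it, by instantiating some of the four maps as identities, zeros or discards. I read $\Span{A_1}{A_2}$ as the relation $\{(A_1w, A_2w)\}$, where $w$ ranges over the common domain of $A_1$ and $A_2$. I read $\CoSpan{B_1}{B_2}$ as $\{(x,y) : B_1x = B_2y\}$. On these readings, item~1 says that the span is contained in the cospan iff the two composites $A_1;B_1$ and $A_2;B_2$ agree as linear transformations out of the apex.

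For item~1, ($\Rightarrow$): apply Lemma~\ref{lem:pa_pum} to slide the leg $\CoMap{A_1}$ of the span across the inclusion. This gives $\Comp{A_1/Map, B_1/Map} \subseteq \Comp{A_2/Map, B_2/Map}$. Both sides are linear transformations, since composites of maps are deterministic and total. Lemma~\ref{lem:pair_of_matrices_1} then upgrades the inclusion to an equality. For ($\Leftarrow$): start from the span and insert the unit of totality for $B_1$, namely $\Id \subseteq \Comp{B_1/Map,B_1/CoMap}$ (Fig.~\ref{fig:typerelations}-TOT). Rewrite $A_1;B_1$ as $A_2;B_2$ using the hypothesis. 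What remains is contained in the cospan, because $\Comp{A_2/CoMap, A_2/Map} \subseteq \Id$ (Fig.~\ref{fig:typerelations}-DET). Pointwise, this direction is simply the observation that $B_1A_1w = B_2A_2w$ for every $w$.

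Items~2--6 are then instances of item~1 with suitable choices of the four maps:
\begin{itemize}
\item Item~2: the subspace $\Image{A_2}$ is the span whose left leg is the discard, and $\Kernel{B_2}$ is the cospan whose left leg is $\Zero$. The left composite becomes $\Discard\;\Zero$.
\item Item~3: $\Map{A_2}=\Span{\Id}{A_2}$ and $\CoMap{B_2}=\CoSpan{\Id}{B_2}$. This yields $\Id = A_2;B_2$.
\item Item~4: $\CoMap{A_1} = \Span{A_1}{\Id}$ and $\Map{B_1}=\CoSpan{B_1}{\Id}$. This yields $A_1;B_1=\Id$.
\item Item~5: use $\Map{A_2}=\Span{\Id}{A_2}$ against the general cospan.
\item Item~6: use the general span against $\Map{B_1} = \CoSpan{B_1}{\Id}$.
\end{itemize}
In each case one only has to check that the identity or discard legs simplify using the SSM laws and the unit laws of Fig.~\ref{fig:Commutative_Comonoid}.

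The main obstacle is the ($\Leftarrow$) direction of item~1 in graphical form. There one must insert the TOT and DET (in)equalities in the right places, with the inclusions pointing the correct way. The equality upgrade via Lemma~\ref{lem:pair_of_matrices_1} is routine once it is recognized that both sides are maps.
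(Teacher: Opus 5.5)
Your proposal is correct and follows the paper's proof. The paper proves item 1 by using Lemma~\ref{lem:pa_pum} to get an inclusion between the two composites, upgrades it to an equality by Lemma~\ref{lem:pair_of_matrices_1}, and declares items 2--6 to be instances of item 1, exactly as you plan. The differences are cosmetic: you write out the converse explicitly, mirroring the paper's TOT/DET insertion with $B_1$ and $A_2$ in place of $B_2$ and $A_1$; and Lemma~\ref{lem:pa_pum} as stated yields $A_2;B_2\subseteq A_1;B_1$ rather than your reversed inclusion, which is immaterial because both sides are maps.
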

\begin{proof}
We first prove item (i).
By Lemma~\ref{lem:pa_pum}, $\scalebox{0.8}{\CompTwo{.,Map,A_2,Map,B_2,.}} \subseteq \scalebox{0.8}{\CompTwo{.,Map,A_1,Map,B_1,.}}$.
The equality follows by Lemma~\ref{lem:pair_of_matrices_1}.
The remaining items are instances of the first one.
\end{proof}

\begin{lemma}\label{lem:golub87} For every linear relation $\Rel{R}=\input{Golub_lowrank_golub.tikz}$ and every cofunction $\CoMap{S}$, there exist linear transformations $\Map{A_1}$ and $\Map{A_2}$ such that
  \[\input{Golub_im_r-d.tikz}=\input{Golub_im_r-d-golub.tikz}.\]
\end{lemma}
\begin{proof}
  \begin{hcalculation}[=]{\scalebox{0.8}{\input{Golub_im_r-d.tikz}}}
    \hstep{\scalebox{0.8}{\input{Golub_im_r-d-golub1.tikz}}}{Def. $\scalebox{0.7}{\Rel{R}}$}
    \hstep{\scalebox{0.8}{\input{Golub_im_r-d-golub2.tikz}}}{Fig.\ref{fig:Split}: $\scalebox{0.7}{\Map{A}}=\scalebox{0.7}{\input{Golub_im_r-d-golub3.tikz}}$}
    \hstep{\scalebox{0.8}{\input{Golub_im_r-d-golub.tikz}}}{Fig.\ref{fig:Commutative_Comonoid}}
  \end{hcalculation}
\end{proof}

\begin{figure}[pos=htb]
    \centering
    \scalebox{0.9}{\input{Appendix_Symmetry.tikz}}
    \caption{The Symmetry Theorems}
    \label{fig:Symmetry}
\end{figure}

\begin{figure}[pos=htb]
    \centering
    \scalebox{0.9}{\input{Appendix_Commutative_Comonoid.tikz}}
    \caption{Commutative Comonoid}
    \label{fig:Commutative_Comonoid}
\end{figure}

\begin{figure}[pos=htb]
    \centering
    \scalebox{0.9}{\input{Appendix_Inequality.tikz}}
    \caption{The Inequality Theorems}
    \label{fig:Inequality}
\end{figure}

\begin{figure}[pos=htb]
    \centering
    \scalebox{0.9}{\input{Appendix_Frobenius_Algebra.tikz}}
    \caption{Frobenius Algebra}
    \label{fig:Frobenius_Algebra}
\end{figure}

\begin{figure}[pos=htb]
    \centering
    \scalebox{0.9}{\input{Appendix_Bialgebra.tikz}}
    \caption{Bialgebra}
    \label{fig:Bialgebra}
\end{figure}

\begin{figure}[pos=htb]
    \centering
    \scalebox{0.9}{\input{Appendix_Flip.tikz}}
    \caption{Flip}
    \label{fig:Flip}
\end{figure}

\begin{figure}[pos=htb]
    \centering
    \scalebox{0.9}{\input{Appendix_Snake.tikz}}
    \caption{Snake}
    \label{fig:Snake}
\end{figure}

\begin{figure}[pos=htb]
  \centering
  \begin{minipage}[t]{.6\textwidth}
    \centering
    \scalebox{0.9}{\input{Appendix_Connect_sum.tikz}}
    \begingroup
      \ExplSyntaxOn
      \dim_set:Nn \l_fig_width_dim { \linewidth }
      \ExplSyntaxOff
      \caption{Connect sum}
      \label{fig:Connect_sum}
    \endgroup
  \end{minipage}\hfill
  \begin{minipage}[t]{.4\textwidth}
    \centering
    \scalebox{0.9}{\input{Appendix_Split.tikz}}
    \begingroup
      \ExplSyntaxOn
      \dim_set:Nn \l_fig_width_dim { \linewidth }
      \ExplSyntaxOff
      \caption{Split}
      \label{fig:Split}
    \endgroup
  \end{minipage}

  \medskip

  \begin{minipage}[t]{.6\textwidth}
    \centering
    \scalebox{0.9}{\input{Appendix_Compare.tikz}}
    \begingroup
      \ExplSyntaxOn
      \dim_set:Nn \l_fig_width_dim { \linewidth }
      \ExplSyntaxOff
      \caption{Compare}
      \label{fig:Compare}
    \endgroup
  \end{minipage}\hfill
  \begin{minipage}[t]{.4\textwidth}
    \centering
    \scalebox{0.9}{\input{Appendix_Minimum_and_maximum.tikz}}
    \begingroup
      \ExplSyntaxOn
      \dim_set:Nn \l_fig_width_dim { \linewidth }
      \ExplSyntaxOff
      \caption{Minimum and maximum relations}
      \label{fig:min_max}
    \endgroup
  \end{minipage}
\end{figure}

\begin{figure}[pos=htb]
    \centering
    \resizebox{\textwidth}{!}{\scalebox{0.9}{\input{Appendix_typesofrelations.tikz}}}
    \caption{Each column of the table is a series of equivalent statements.
      Furthermore, the opposite inequalities for the second and third row hold for any relation (See Figure~\ref{fig:Inequality}).
      We are, thus, allowed to treat these characterizations as equalities.
    }
    \label{fig:typerelations}
\end{figure}

\subsection{Translating graphical notation to classical notation}\label{subsec:translating}

In the graphical syntax, each diagram semantically corresponds to a linear relation. The translation of diagrams to classical notation is compositional and translates their semantic meaning; that is, the meaning of a compound diagram is calculated from the meanings of its sub-diagrams. Here there is a connection with relational algebra, the two operations of diagram composition are mapped to the standard ways of composing relations: relational composition and cartesian product.
\begin{align*}
  \input{Introdution_GLA_Copy.tikz} &\longmapsto \left\{\left(x,\,\begin{bmatrix}y_1\\y_2 \end{bmatrix}\right) \;\mid\; x,y_1,y_2\in \mathbb{R}\;\text{and}\;y_1=y_2=x\right\} \\
  \Discard &\longmapsto \left\{(x,*)\;|\; x\in \mathbb{R} \right\} \\
  \input{Introdution_GLA_sum.tikz}  &\longmapsto \left\{\left(\begin{bmatrix}x_1\\x_2 \end{bmatrix},\,y\right) \;\mid\; x_1,x_2,y\in \mathbb{R}\;\text{and}\;y=x_1+x_2\right\} \\
  \Zero &\longmapsto \{(*,y)\;|\; y\in \mathbb{R}\;\text{and}\;y=0\}
\end{align*}
The generators $\input{Introdution_GLA_CoCopy.tikz},\CoDiscard,\input{Introdution_GLA_CoSum.tikz}$ and $\CoZero$ are, respectively, the opposite of the relations above, as hinted by the symmetric graphical syntax. Its interpretation in classical notation is therefore trivial.

\begin{example}
Here's an example of how to translate a more complicated diagram. The intuitive way to think about diagrams is like a series of logical gates, where the input starts on the left and flows to the right via the wires. In the image below we include some variable names $x_1, x_2, x_3, y_1, y_2$ (which are not wire numberings) to aid comprehension.
    \[\scalebox{0.9}{\input{Introdution_GLA_example2.tikz}}\]
By visualizing the $x_i$'s flowing to the right, we can see $x_2$ is discarded and $x_3$ is copied. One of the copies is passed to $y_1$ while the other is added with $x_1$ to form $y_2$. So $y_1 = x_3$ and $y_2 = x_1 + x_3$. Thus, written in classical notation this diagram corresponds to the relation
    $$\left\{\left(\begin{bmatrix}x_1\\x_2\\x_3 \end{bmatrix},\,
    \begin{bmatrix}y_1\\y_2 \end{bmatrix}\right) \in \mathbb{R}^3 \times \mathbb{R}^2 \;|\; y_1=x_3 \text{ and }  y_2=x_1+x_3 \right\}.$$
   For the purpose of simplifying the notation, sometimes in this paper, the inclusion ``$\in \mathbb{R}^3 \times \mathbb{R}^2$” will be omitted and diagrams such as the one presented above will be written in classical notation as follows.
    $$\left\{\left(\begin{bmatrix}x_1\\x_2\\x_3 \end{bmatrix},\,
    \begin{bmatrix}y_1\\y_2 \end{bmatrix}\right) \;|\; y_1=x_3 \text{ and } y_2=x_1+x_3\right\}.$$
\end{example}

In our case, string diagrams also implicitly handle some non-trivial rules in the usual syntax, just as is commonly done with associativity of functions by ignoring parentheses. Table~\ref{tab:freetheoremsintro} exemplifies how some rules are handled implicitly in the diagrammatic language, with the dotted lines simulating parentheses. It is worth noting that the dashed lines are represented here for illustrative purposes only and are not part of the notation.

\begin{table}[pos=htb]
  \centering
  \begin{tabular}{c c}
    Usual syntax & Graphical syntax \\ \hline
    $A(BC) = (AB)C$ & $\scalebox{0.8}{\input{Introdution_Relations_composition_assoc.tikz}}$\\ \hline
    $A(\mathrm{Im}(B)) = \mathrm{Im}(AB)$ & $\scalebox{0.8}{\input{Introdution_Relations_im-composition.tikz}}$ \\ \hline
    $\mathrm{Im}(A) + \mathrm{Im}(B) = \mathrm{Im}(\begin{bmatrix} A & B \end{bmatrix})$ & $\scalebox{0.8}{\input{Introdution_Relations_sumimage.tikz}}$ \\ \hline
    $(C \times D) (A \times B) = (CA) \times (DB)$ & $\scalebox{0.8}{\input{Introdution_Relations_mixed_composition.tikz}}$ \\ \hline
    $\begin{bmatrix} A & B \end{bmatrix} \begin{bmatrix} C \\ D \end{bmatrix}=AC+BD$ & $\scalebox{0.8}{\input{Introdution_Relations_divideconquer.tikz}}$
  \end{tabular}
  \caption{Implicit rules. The dotted lines represent parenthesization.}
  \label{tab:freetheoremsintro}
\end{table}

Table~\ref{tab:dicconcepts1} showcases how classical constructions are formulated on either graphical or usual notation.

\begin{table}[pos=htb]
  \centering
  \begin{tabular}{c c c}
    \textbf{Graphical Syntax} & \textbf{Name} & \textbf{Classical Notation} \\ \hline
    \scalebox{0.8}{\input{Introdution_Dictionary_relation.tikz}} & Relation & $R$
    \\ \hline
    \scalebox{0.8}{\input{Introdution_Dictionary_relation_op.tikz}} &  Opposite & $R^{op}$
    \\ \hline
    \scalebox{0.8}{\input{Introdution_Dictionary_color1.tikz}} &  \makecell{Orthogonal complement\\\citep{Stein2024}} & $R^{\perp}$
    \\ \hline
    \scalebox{0.8}{\input{Introdution_Dictionary_comp-serie.tikz}} & Composition & $R \,;\, S=SR$
    \\ \hline
    \scalebox{0.8}{\input{Introdution_Dictionary_comp-paralelo.tikz}} & Cartesian product & $R \times S$
    \\ \hline
    \Id & Identity & $I$
    \\ \hline
    \scalebox{0.8}{\input{Introdution_Dictionary_lineartransformation.tikz}} & Linear transformation & $A$
    \\ \hline
    \scalebox{0.8}{\Inv{A}} & \makecell{Linear transformation\\known to have an inverse} & $A^{-1}$
    \\ \hline
    \scalebox{0.8}{\Ortho{U}} & \makecell{Orthogonal linear\\transformation} & $U$
    \\ \hline
    \input{Introdution_Dictionary_subspace.tikz} &  Linear Subspace & $V$
    \\ \hline
    \input{Introdution_Dictionary_matrixAB.tikz} & \makecell{Linear transformation\\composition} & $A;B = BA$
    \\ \hline
    \scalebox{0.8}{\input{Introdution_Dictionary_imagem.tikz}}  & Image & $\Img{(A)}$
    \\ \hline
    \scalebox{0.8}{\input{Introdution_Dictionary_kernel.tikz}} & Kernel & $\Ker{(A)}$
    \\ \hline
    \scalebox{0.8}{\input{Introdution_Dictionary_coluna.tikz}} & \makecell{Two-by-one block\\linear transformation} & $\begin{bmatrix}A\\B \end{bmatrix}$
    \\ \hline
    \scalebox{0.8}{\input{Introdution_Dictionary_linha.tikz}} & \makecell{One-by-two block\\linear transformation} & $\begin{bmatrix}A&B \end{bmatrix}$
    \\ \hline
    \scalebox{0.8}{\input{Introdution_Dictionary_defmatrix.tikz}} & \makecell{Two-by-two\\linear transformation} & $\begin{bmatrix}a&b\\c&d \end{bmatrix}$
    \\ \hline
    \scalebox{0.8}{\input{Introdution_Dictionary_defmatrixbloco.tikz}} & \makecell{Two-by-two block\\linear transformation} & $\begin{bmatrix}A&B\\C&D \end{bmatrix}$
    \\
  \end{tabular}
  \caption{Dictionary of Notations.
    $R,S$ denote relations; $A,B,C,D, U$ represent linear transformations; and $a,b,c,d$ are real numbers.
  }
  \label{tab:dicconcepts1}
\end{table}

\FloatBarrier

\section{Auxiliary results and omitted proofs}

\subsection{Relational Least-Squares}\label{ap:relationalLQ}

\begin{lemma}[Properties of the orthogonal projection]\label{lem:prop-projection}
  Let $V$ be a finite-dimensional subspace of $U$, $u\in U$ and $v\in V$. Then
  \begin{enumerate}
    \item \label{lem:prop-projection:norm}
          $\left\|P_V{(u)} \right\|_2\leq \left\|u \right\|_2$;
    \item \label{lem:prop-projection:in}
          $u-P_V{(u)}\in V^\perp$;
    \item \label{lem:prop-projection:dist}
          $\left\|u-P_V{(u)} \right\|_2\leq \left\|u-v \right\|_2$ and the inequality is an equality iff $v=P_V{(u)}$.
  \end{enumerate}
\end{lemma}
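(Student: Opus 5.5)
The three claims are standard facts about the orthogonal projection onto a finite-dimensional subspace $V$. I will prove them in the order (2), (1), (3), since both (1) and (3) follow from (2) via the Pythagorean identity.

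\textbf{Step 1: the decomposition (item 2).} Fix an orthonormal basis $e_1,\dots,e_k$ of $V$. This exists by Gram--Schmidt, which is where finite-dimensionality is used. Realize $P_V(u)=\sum_i \langle u,e_i\rangle e_i$. This formula is consistent with Definition~\ref{def:proj-rel}, since both describe the unique map fixing $V$ and killing $V^\perp$, as in Lemma~\ref{lem:proj-props}. For each $j$,
\[\langle u-P_V(u),e_j\rangle=\langle u,e_j\rangle-\langle u,e_j\rangle=0.\]
By linearity, $u-P_V(u)$ is orthogonal to all of $V$, which is item~2.

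\textbf{Step 2: the norm bound (item 1).} Write $u=P_V(u)+(u-P_V(u))$. The two summands lie in $V$ and $V^\perp$ respectively, so the Pythagorean theorem gives
\[\|u\|_2^2=\|P_V(u)\|_2^2+\|u-P_V(u)\|_2^2\ge \|P_V(u)\|_2^2 .\]
Taking square roots gives item~1.

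\textbf{Step 3: best approximation (item 3).} For any $v\in V$, write $u-v=(u-P_V(u))+(P_V(u)-v)$. The second summand lies in $V$ because both $P_V(u)$ and $v$ do. The first lies in $V^\perp$ by item~2. Pythagoras therefore gives
\[\|u-v\|_2^2=\|u-P_V(u)\|_2^2+\|P_V(u)-v\|_2^2,\]
which yields the inequality. Equality holds iff $\|P_V(u)-v\|_2=0$, i.e.\ iff $v=P_V(u)$, by definiteness of the norm.

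There is no genuine obstacle here. The only point needing care is matching the concrete formula for $P_V$ with the relational Definition~\ref{def:proj-rel}. If one prefers to avoid that identification, it suffices to use the two properties recorded in Lemma~\ref{lem:proj-props}: $P_V(u)\in V$ (item~\ref{lem:proj-props:img}) and $\Ker(P_V)=V^\perp$ together with idempotence. Together these give that $u-P_V(u)$ lies in $\Ker P_V=V^\perp$, which is item~2 directly, and the rest follows exactly as above.
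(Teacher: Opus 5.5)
Your proof is correct and follows essentially the same route as the paper: both use the splitting $u = P_V(u) + (u - P_V(u))$ with the second summand in $V^\perp$, get item 1 from Pythagoras, and prove item 3 by the identical computation $\|u-v\|_2^2=\|u-P_V(u)\|_2^2+\|P_V(u)-v\|_2^2$. The only difference is that you justify the orthogonal splitting explicitly through an orthonormal basis of $V$ (which is where finite-dimensionality enters), whereas the paper simply assumes the decomposition $u = v + w$ with $w \in V^\perp$ and $P_V(u)=v$.
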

\begin{proof}
  \begin{enumerate}
    \item If $u=v+w$ with $w \in V^\perp$ then
      \begin{hcalculation}[=]{\left\|P_V{(u)} \right\|_2^2}
        \hstep{\left\| v\right\|_2^2}{$u=v+w$ and $w \in V^\perp$}
        \hstep[\leq]{\left\| v\right\|_2^2+\left\| w\right\|_2^2}{$\left\| w\right\|_2^2\geq 0$}
        \hstep[=]{\left\| u\right\|_2^2}{Pythagorean Theorem}
      \end{hcalculation}
    \item  If $u=v+w$ with $w \in V^\perp$ then $u-P_V{(u)}=u-v = w \in V^\perp$.
    \item \begin{hcalculation}[=]{\left\|u-P_V{(u)} \right\|_2^2}
        \hstep[\leq]{\left\|u-P_V{(u)} \right\|_2^2+\left\|P_V{(u)}-v \right\|_2^2}{$\left\|P_V{(u)}-v \right\|_2^2\geq 0$}
        \hstep{\left\|\left(u-P_V{(u)}\right)+\left(P_V{(u)}-v\right) \right\|_2^2}{Item~\ref{lem:prop-projection:in} + Pythagorean Theorem}
        \hstep{\left\|u-v \right\|_2^2}{Algebra}
      \end{hcalculation} Furthermore, inequality is an equality if and only if $\left\|P_V{(u)}-v \right\|_2^2=0\Leftrightarrow v=P_V{(u)}$.
  \end{enumerate}
\end{proof}

\begin{proof}[Proof of Theorem~\ref{thm:LQ-relational} (Relational Least-Squares)]
  We prove items~\ref{thm:LQ-relational:sur} and~\ref{thm:LQ-relational:inj}.
  The proofs of the other items are analogous.
\begin{enumerate}
\item
\begin{hcalculation}[=]{\Sur(R)}
  \hstep{\left\{(x,\bar{y})\mid \exists y;\; xRy\;\;\text{and}\;\; y\in \underset{\hat{y}}{\argmin}\left [\left\| \bar{y}-\hat{y}\right\|_2^2\;:\;\hat{y}\in \Img(R) \right ] \right\}}{Def.\ref{def:LQ-geometric}}
  \hstep{\left\{(x,\bar{y})\mid \exists y;\; xRy\;\;\text{and}\;\; \forall \hat{y}\in \Img{(R)};\; \left\| \bar{y}-y\right\|_2^2\leq \left\| \bar{y}-\hat{y}\right\|_2^2 \right\}}{Def. argmin}
  \hstep{\left\{(x,\bar{y})\mid \exists y;\; xRy\;\;\text{and}\;\; y=P_{\Img(R)}{\bar{y}} \right\}}{Lem.\ref{lem:prop-projection} -- item~\ref{lem:prop-projection:dist}}
  \hstep{P_{\Img(R)}^{op}\;R}{Def.\ref{def:composing_diagrams} -- items~\ref{def:composing_diagrams:seq} and~\ref{def:composing_diagrams:par}}.
\end{hcalculation}

\item By definition, $(x,y) \in R$ and $(\bar{x},y) \in R$. Since the linear relation $R$ is a subspace of $U \times V$, it follows that
  \[(x,y)-(\bar{x},y)=(x-\bar{x},0) \in R\]
  Consequently, $x-\bar{x} \in \Ker(R)$. Thus, any $\tilde{x}$ such that $y \in R\tilde{x}$ can be expressed as $\tilde{x} = x + e$, for some $e \in \Ker(R)$.
  \begin{hcalculation}[\Leftrightarrow]{\bar{x}\in \underset{\tilde{x}:\;\tilde{x}Ry}{\argmin}\left\|\tilde{x}\right\|_2^2}
    \hstep{\tilde{x}\in \underset{e\in \Ker(R)}{\argmin}\left\|x+e\right\|_2^2}{$\hat{x}=x+e$}
    \hstep{\bar{x}\in \underset{e\in \Ker(R)}{\argmin}\left\|P_{\Ker(R)}x + P_{\Ker(R)^\perp}x+e\right\|_2^2}{$x=P_{\Ker(R)}x + P_{\Ker(R)^\perp}x$}
    \hstep{\bar{x}\in \underset{\bar{e}\in \Ker(R)}{\argmin}\left\|P_{\Ker(R)^\perp}x + \bar{e}\right\|_2^2}{$\bar{e}\coloneqq + P_{\Ker(R)}x\in \Ker(R)$}
    \hstep{\bar{x}\in \underset{\bar{e}\in \Ker(R)}{\argmin}\left(\left\|P_{\Ker(R)^\perp}x\right\|_2^2+\left\|\bar{e}\right\|_2^2\right)}{Pythagorean Theorem}
    \hstep{\left\|\bar{e}\right\|_2^2=0}{$\bar{e}\in \Ker(R)$ and $\left\|\bar{e}\right\|_2^2\geq 0$}
  \end{hcalculation}
  Therefore
  \begin{equation}\label{eq:proof_relationalLQ}
    \bar{x}\;=\;P_{\Ker(R)^\perp}x + \bar{e}\;=\;P_{\Ker(R)^\perp}x + 0\;=\;P_{\Ker(R)^\perp}x
  \end{equation}
  So
  \begin{hcalculation}[=]{\Inj(R)}
    \hstep{\left\{(\bar{x},y)\mid \exists x;\; xRy\;\;\text{and}\;\; \bar{x}\in \underset{\hat{x}:\;\hat{x}Ry}{\argmin}\left\|\hat{x}\right\|_2^2 \right\}}{Def.\ref{def:LQ-geometric}}
    \hstep{\left\{(\bar{x},y)\mid \exists x;\; xRy\;\;\text{and}\;\; \bar{x}=P_{\Ker(R)^\perp}x \right\}}{Eq.\eqref{eq:proof_relationalLQ}}
    \hstep{R\;P_{\Ker(R)^{\perp}}^{op}}{Def.\ref{def:composing_diagrams} -- items~\ref{def:composing_diagrams:seq} and~\ref{def:composing_diagrams:op}}
  \end{hcalculation}
\end{enumerate}
\end{proof}

\begin{proof}[Proof of Lemma~\ref{lem:prop_fund_Rbar} (Fundamental properties)]
  We prove item~\ref{lem:prop_fund_Rbar:sur} The remaining items are proved analogously.
  \begin{hcalculation}[=]{\scalebox{0.8}{\begin{tikzpicture}
	\begin{pgfonlayer}{nodelayer}
		\node [style=Black] (0) at (-2.5, 0) {};
		\node [style=Rel] (1) at (0, 0) {$\Sur(R)$};
		\node [style=none] (2) at (2.25, 0) {};
	\end{pgfonlayer}
	\begin{pgfonlayer}{edgelayer}
		\draw (1) to (2.center);
		\draw (0) to (1);
	\end{pgfonlayer}
\end{tikzpicture}
}}
    \hstep{\scalebox{0.8}{\input{Connect_disconnect_FundPropLQ_step2.tikz}}}{Thm.\ref{thm:LQ-relational}}
    \hstep{\scalebox{0.8}{\input{Connect_disconnect_FundPropLQ_step3.tikz}}}{Def.~\ref{def:proj-rel}}
    \hstep{\scalebox{0.8}{\input{Connect_disconnect_FundPropLQ_step4.tikz}}}{Def.\ref{def:fund_subspaces}: Fundamental subspace}
    \hstep[\supseteq]{\scalebox{0.8}{\input{Connect_disconnect_FundPropLQ_step5.tikz}}}{Fig.\ref{fig:Inequality}}
    \hstep{\scalebox{0.8}{\input{Connect_disconnect_FundPropLQ_step6.tikz}}}{Fig.\ref{fig:Commutative_Comonoid}}
    \hstep{\scalebox{0.8}{\CoDiscard}}{$V+V^{\perp}= \mathbb{R}^n$}
  \end{hcalculation}
\end{proof}

\subsubsection{Compositions of Least-Squares Operators}
\label{ap:comp}

\begin{proof}[Proof of Lemma~\ref{lem:comp-idempotent} (Idempotency)]
  We prove item~\ref{lem:comp-idempotent:sur}.
  The remaining items are proved analogously.
  \begin{hcalculation}[=]{\scalebox{0.8}{\Rel{\Sur\left ( \Sur(R) \right )}}}
    \hstep{\Sur \left( \scalebox{0.8}{\Comp{R/Rel,  P_{\Img(R)}/CoMap}} \right)}{Thm.\ref{thm:LQ-relational}}
    \hstep{ \scalebox{0.8}{\Comp{R/Rel,  P_{\Img(R)}/CoMap, P_{\Img(R)}/CoMap }}}{Thm.\ref{thm:LQ-relational}}
    \hstep{ \scalebox{0.8}{\Comp{R/Rel,  P_{\Img(R)}/CoMap} }}{Lem.\ref{lem:proj-props}}
    \hstep{\scalebox{0.8}{\Rel{\Sur(R)}}}{Thm.\ref{thm:LQ-relational}}
  \end{hcalculation}
\end{proof}

\begin{proof}[Proof of Lemma~\ref{lem:comp-idempotent}]
  (Commutativity)
  We prove item~\ref{lem:comp-commutative:sur}.
  The remaining items are proved analogously.
  \begin{hcalculation}[=]{\scalebox{0.8}{\Rel{\Sur\left( \Inj(R) \right)}}}
    \hstep{ \Sur\left( \scalebox{0.8}{\Comp{P_{\Ker(R)}/CoMapGray, R/Rel}} \right) }{Thm.\ref{thm:LQ-relational}}
    \hstep{ \scalebox{0.8}{\Comp{P_{\Ker(R)}/CoMapGray, R/Rel, P_{\Img(R)}/CoMap}} }{Thm.\ref{thm:LQ-relational}}
    \hstep{ \Inj\left( \scalebox{0.8}{\Comp{R/Rel, P_{\Img(R)}/CoMap}} \right)     }{Thm.\ref{thm:LQ-relational}}
    \hstep{ \scalebox{0.8}{\Rel{\Inj\left( \Sur(R) \right)} }}{Thm.\ref{thm:LQ-relational}}
  \end{hcalculation}
\end{proof}

\subsection{Pseudoinverse}\label{ap:pseudoinverse}

\begin{proof}[Proof of Lemma~\ref{lem:pseudo-relation}]
  \begin{hcalculation}[=]{\scalebox{0.8}{\CoMap{R^+}}}
    \hstep{\scalebox{0.8}{\CoMap{\Sur\left ( \Inj(R) \right )}}}{Def.\ref{def:pseudo-rel}}
    \hstep{\scalebox{0.8}{\input{MQ_Pseudo_pseudo-relation_step1.tikz}}}{Thm.\ref{thm:LQ-relational}}
    \hstep{\scalebox{0.8}{\input{MQ_Pseudo_pseudo-relation_step2.tikz}}}{Def.~\ref{def:proj-rel}}
    \hstep{\scalebox{0.8}{\input{MQ_Pseudo_pseudoinverse.tikz}}}{Def.\ref{def:fund_subspaces}: Fundamental subspace}
  \end{hcalculation}
\end{proof}

\begin{proof}[Proof of Corollary~\ref{lem:pseudoinverse}]
  \begin{hcalculation}[=]{\scalebox{0.8}{\CoMap{A^+}}}
    \hstep{\scalebox{0.8}{\input{MQ_Pseudo_pseudoinverseA.tikz}}}{Lem.\ref{lem:pseudo-relation}}
    \hstep{\scalebox{0.8}{\input{MQ_Pseudo_pseudoinverseA2.tikz}}}{Fig.\ref{fig:Flip}}
    \hstep{\scalebox{0.8}{\input{MQ_Pseudo_pseudoinverseA3.tikz}}}{Fig.\ref{fig:Commutative_Comonoid}}
    \hstep{\scalebox{0.8}{\input{MQ_Pseudo_pseudoinverseA4.tikz}}}{Lem.\ref{lem:weak-inverse}: \scalebox{0.7}{\Comp{A/Map,A/CoMap,A/Map} = \Map{A}}}
    \hstep{\scalebox{0.8}{\input{MQ_Pseudo_Pseudo-function_pseudo-function.tikz}}}{Lem.\ref{lem:weak-inverse}: \scalebox{0.7}{\Comp{A/Map,A/CoMap,A/Map} = \Map{A}}}
  \end{hcalculation}
\end{proof}

\begin{proof}[Proof of Theorem~\ref{thm:moore-penrose-pseudo} (Moore-Penrose pseudoinverse)]
  We provide proofs for equations~\eqref{thm:moore-penrose-pseudo1} and~\eqref{thm:moore-penrose-pseudo3},
  since the other two are analogous, following equivalent reasoning.
  \begin{enumerate}
    \item[\eqref{thm:moore-penrose-pseudo1}]
      \begin{hcalculation}[=]{\scalebox{0.8}{}}
        \hstep{\scalebox{0.8}{\input{MQ_Pseudo_Pseudo-function_eq1_step1.tikz}}}{Lem.\ref{lem:pseudoinverse}}
        \hstep{\scalebox{0.8}{\input{MQ_Pseudo_Pseudo-function_eq1_step2.tikz}}}{Fig.\ref{fig:Flip}}
        \hstep{\scalebox{0.8}{\input{MQ_Pseudo_Pseudo-function_eq1_step3.tikz}}}{Fig.\ref{fig:Commutative_Comonoid}}
        \hstep{\scalebox{0.8}{\input{MQ_Pseudo_Pseudo-function_eq1_step4.tikz}}}{Fig.\ref{fig:Commutative_Comonoid} + Fig.\ref{fig:Flip}}
        \hstep{\scalebox{0.8}{\input{MQ_Pseudo_Pseudo-function_eq1_step5.tikz}}}{Def~\ref{def:composing_diagrams}: \scalebox{0.7}{\CoMapGray{A}=\Map{A^{\top}}}}
        \hstep[\subseteq]{\scalebox{0.8}{\input{MQ_Pseudo_Pseudo-function_eq1_step6.tikz}}}{Fig~\ref{fig:typerelations}- DET: $\scalebox{0.7}{\Comp{A^{\top}/CoMap,  A^{\top}/Map}} \subseteq \scalebox{0.7}{\Id}$}
        \hstep{\scalebox{0.8}{\Map{A}}}{Lem.\ref{lem:weak-inverse}: \scalebox{0.7}{\Comp{A/Map,A/CoMap,A/Map} = \Map{A}}}
      \end{hcalculation}
      So, \begin{hcalculation}[\Leftrightarrow]{\scalebox{0.8}{}\subseteq \scalebox{0.8}{\Map{A}}}
        \hstep{\scalebox{0.8}{}= \scalebox{0.8}{\Map{A}}}{Lem.\ref{lem:pair_of_matrices_1}}
      \end{hcalculation}
    \item[\eqref{thm:moore-penrose-pseudo3}]
      \begin{hcalculation}[=]{\scalebox{0.8}{}}
        \hstep{\scalebox{0.8}{\input{MQ_Pseudo_Pseudo-function_eq3-1_step1.tikz}}}{Lem.\ref{lem:pseudo-relation}}
        \hstep{\scalebox{0.8}{\input{MQ_Pseudo_Pseudo-function_eq3-1_step2.tikz}}}{Fig.\ref{fig:Flip}}
        \hstep{\scalebox{0.8}{\input{MQ_Pseudo_Pseudo-function_eq3-1_step2-1.tikz}}}{$\Ker(AA^\top)=\Ker(A^\top)$}
        \hstep{\scalebox{0.8}{\input{MQ_Pseudo_Pseudo-function_eq3-1_step3.tikz}}}{$\Img(A)=\Img(AA^\top)$}
        \hstep{\scalebox{0.8}{\input{MQ_Pseudo_Pseudo-function_eq3-1_step4.tikz}}}{Fig.\ref{fig:Flip}}
        \hstep{\scalebox{0.8}{}}{Lem.\ref{lem:pseudo-relation}}
      \end{hcalculation}
  \end{enumerate}
  It only remains to prove that $\scalebox{0.8}{\Map{A^+}}$ is unique. To do this, assume that there exist linear transformations $\scalebox{0.8}{\Map{X_1}}$ and $\scalebox{0.8}{\Map{X_2}}$ that satisfy the four equations.
  \begin{hcalculation}[=]{\scalebox{0.8}{\Map{X_1}}}
    \hstep{\scalebox{0.8}{\begin{tikzpicture}
	\begin{pgfonlayer}{nodelayer}
		\node [style=Map] (0) at (0, 0) {$A$};
		\node [style=Map] (1) at (-1.75, 0) {$X_1$};
		\node [style=Map] (2) at (1.75, 0) {$X_1$};
		\node [style=none] (3) at (-3.25, 0) {};
		\node [style=none] (4) at (3.25, 0) {};
	\end{pgfonlayer}
	\begin{pgfonlayer}{edgelayer}
		\draw (3.center) to (4.center);
	\end{pgfonlayer}
\end{tikzpicture}
}}{Thm.\ref{thm:moore-penrose-pseudo}: Eq.\eqref{thm:moore-penrose-pseudo2}}
    \hstep{\scalebox{0.8}{\input{MQ_Pseudo_Pseudo-function_Unique_step2.tikz}}}{Thm.\ref{thm:moore-penrose-pseudo}: Eq.\eqref{thm:moore-penrose-pseudo3}}
    \hstep{\scalebox{0.8}{\input{MQ_Pseudo_Pseudo-function_Unique_step3.tikz}}}{Thm.\ref{thm:moore-penrose-pseudo}: Eq.\eqref{thm:moore-penrose-pseudo1}}
    \hstep{\scalebox{0.8}{\input{MQ_Pseudo_Pseudo-function_Unique_step4.tikz}}}{Thm.\ref{thm:moore-penrose-pseudo}: Eq.\eqref{thm:moore-penrose-pseudo3}}
    \hstep{\scalebox{0.8}{\begin{tikzpicture}
	\begin{pgfonlayer}{nodelayer}
		\node [style=Map] (0) at (0.25, 0) {$A$};
		\node [style=Map] (1) at (-1.25, 0) {$X_2$};
		\node [style=Map] (2) at (1.75, 0) {$X_1$};
		\node [style=none] (3) at (-2.75, 0) {};
		\node [style=none] (6) at (3.25, 0) {};
	\end{pgfonlayer}
	\begin{pgfonlayer}{edgelayer}
		\draw (3.center) to (6.center);
	\end{pgfonlayer}
\end{tikzpicture}
}}{Thm.\ref{thm:moore-penrose-pseudo}: Eq.\eqref{thm:moore-penrose-pseudo2}}
    \hstep{\scalebox{0.8}{\input{MQ_Pseudo_Pseudo-function_Unique_step6.tikz}}}{Thm.\ref{thm:moore-penrose-pseudo}: Eq.\eqref{thm:moore-penrose-pseudo2}}
    \hstep{\scalebox{0.8}{\input{MQ_Pseudo_Pseudo-function_Unique_step7.tikz}}}{Thm.\ref{thm:moore-penrose-pseudo}: Eq.\eqref{thm:moore-penrose-pseudo4}}
    \hstep{\scalebox{0.8}{\input{MQ_Pseudo_Pseudo-function_Unique_step8.tikz}}}{Thm.\ref{thm:moore-penrose-pseudo}: Eq.\eqref{thm:moore-penrose-pseudo1}}
    \hstep{\scalebox{0.8}{\begin{tikzpicture}
	\begin{pgfonlayer}{nodelayer}
		\node [style=Map] (0) at (0.25, 0) {$A$};
		\node [style=Map] (1) at (-1.25, 0) {$X_2$};
		\node [style=Map] (2) at (1.75, 0) {$X_2$};
		\node [style=none] (3) at (-2.75, 0) {};
		\node [style=none] (6) at (3.25, 0) {};
	\end{pgfonlayer}
	\begin{pgfonlayer}{edgelayer}
		\draw (3.center) to (6.center);
	\end{pgfonlayer}
\end{tikzpicture}
}}{Thm.\ref{thm:moore-penrose-pseudo}: Eq.\eqref{thm:moore-penrose-pseudo4}}
    \hstep{\scalebox{0.8}{\Map{X_2}}}{Thm.\ref{thm:moore-penrose-pseudo}: Eq.\eqref{thm:moore-penrose-pseudo2}}
  \end{hcalculation}
\end{proof}

\subsubsection{GSVD Algorithm for Pseudoinverse}\label{ap:algopseudoinverse}

\begin{proof}[Proof of Theorem~\ref{thm:GSVD-Pseudoinverse} (GSVD for Pseudoinverse)]
  \begin{hcalculation}[=]{\scalebox{0.8}{\CoMap{R^+}}}
    \hstep{\scalebox{0.8}{\input{MQ_Pseudo_pseudoinverse.tikz}}}{Lem.\ref{lem:pseudo-relation}}
    \hstep{\scalebox{0.8}{\input{MQ_Pseudo_pseudo-gsvd_step_1.tikz}}}{\makecell{Thm.\ref{thm:gsvd-relations}: GSVD\\+ Lem.\ref{lem:subespaces_GSVD}}}
    \hstep{\scalebox{0.8}{\input{MQ_Pseudo_pseudo-gsvd_step_2.tikz}}}{$U$ is orthogonal + Fig.\ref{fig:Inequality}}
    \hstep{\scalebox{0.8}{\input{MQ_Pseudo_pseudo-gsvd_step_3.tikz}}}{Fig.\ref{fig:Commutative_Comonoid} + Fig.\ref{fig:Bialgebra}}
    \hstep{\scalebox{0.8}{\input{GSVD_GSVD_pseudoinversa_withD2.tikz}}}{Def.\ref{cor:GSVD_diagonal_relation}}
  \end{hcalculation}
\end{proof}

\begin{proof}[Proof of Theorem~\ref{thm:SVDpseudoinverse} (SVD for Pseudoinverse)]
  \begin{hcalculation}[=]{\scalebox{0.8}{\CoMap{A^+}}}
    \hstep{\scalebox{0.8}{\input{MQ_Pseudo_Pseudo-function_pseudo-function.tikz}}}{Lem.\ref{lem:pseudoinverse}}
    \hstep{\scalebox{0.8}{\input{MQ_Pseudo_pseudo-svd_step_1.tikz}}}{SVD}
    \hstep{\scalebox{0.8}{\input{MQ_Pseudo_pseudo-svd_step_2.tikz}}}{Lem.\ref{lem:subespaces_GSVD}}
    \hstep{\scalebox{0.8}{\input{MQ_Pseudo_pseudo-svd_step_3.tikz}}}{Fig.\ref{fig:Flip}}
    \hstep{\scalebox{0.8}{\input{MQ_Pseudo_pseudo-svd_step_4.tikz}}}{U and V are orthogonal}
    \hstep{\scalebox{0.8}{\input{MQ_Pseudo_pseudo-svd_step_5.tikz}}}{Fig.\ref{fig:Commutative_Comonoid}}
  \end{hcalculation}
\end{proof}

\subsubsection{Optimization problems}\label{ap:functionalizer}

\begin{proposition}[Triple of linear transformations for a linear relation]\label{prop:tripla}
For every linear relation $\TypedRel{R}{m}{n}$, there exist $p,q \in \mathbb{N}$ and linear transformations $\TypedMap{A}{q}{p}$, $\TypedMap{B}{q}{n}$ and $\TypedMap{C}{m}{p}$ such that:
\[\Rel{R} =\begin{tikzpicture}
	\begin{pgfonlayer}{nodelayer}
		\node [style=Map] (0) at (-1.75, 0) {$C$};
		\node [style=CoMap] (1) at (0, 0) {$A$};
		\node [style=Map] (2) at (1.75, 0) {$B$};
		\node [style=none] (3) at (3, 0) {};
		\node [style=none] (4) at (-3, 0) {};
	\end{pgfonlayer}
	\begin{pgfonlayer}{edgelayer}
		\draw (4.center) to (3.center);
	\end{pgfonlayer}
\end{tikzpicture}
.\]
\end{proposition}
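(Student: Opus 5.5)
The plan is to obtain the triple from a span form followed by a trivial cospan factor. I will make $C$ and $B$ carry the data and let $A$ be a simple map whose opposite glues them.

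First, by Proposition~\ref{prop:span-cospan}, write $R$ in span form: there are linear transformations $B_0$ from $\mathbb{R}^p$ to $\mathbb{R}^m$ and $A_0$ from $\mathbb{R}^p$ to $\mathbb{R}^n$ with $R = \{(B_0 z, A_0 z) : z \in \mathbb{R}^p\}$. This is the opposite of $B_0$ composed with $A_0$.

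The composite $C\,;\,A^{op}\,;\,B$ in the statement relates $x$ to $y$ exactly when there is $w$ with $Cx = Aw$ and $y = Bw$. So I want $C$ to detect membership of $x$ in the image of $B_0$ together with the choice of preimage. The cleanest way to do this is to switch to the cospan form instead. By Proposition~\ref{prop:span-cospan}, $R=\{(x,y): \tilde A x = \tilde B y\}$ with $\tilde A$ from $\mathbb{R}^m$ to $\mathbb{R}^q$ and $\tilde B$ from $\mathbb{R}^n$ to $\mathbb{R}^q$.

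Take $p := q$ and $C := \tilde A$. Take the intermediate space to be $\mathbb{R}^n$, with $A := \tilde B$ and $B := \mathrm{id}_n$. Then $C\,;\,A^{op}\,;\,B$ relates $x$ to $y$ iff there is $w$ with $\tilde A x = \tilde B w$ and $y = w$, i.e.\ iff $\tilde A x = \tilde B y$. This is exactly $R$.

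Diagrammatically, this is the cospan form of Proposition~\ref{prop:span-cospan} with an identity appended on the right, using the unit law of Fig.~\ref{fig:lawsSMC}. The types match the statement: $A$ goes from $\mathbb{R}^q$ to $\mathbb{R}^p$ with the roles of $p$ and $q$ chosen as above, $B$ goes from $\mathbb{R}^q$ to $\mathbb{R}^n$ (here $q=n$ for the middle wire), and $C$ goes from $\mathbb{R}^m$ to $\mathbb{R}^p$.

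The only potential obstacle is matching the paper's naming of the dimensions $p,q$. That is bookkeeping: rename so that the middle wire has dimension $n$ and the codomain of $C$ is the cospan apex. The argument is otherwise immediate.
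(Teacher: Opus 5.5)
Your proof is correct and is essentially the paper's argument: both write $R$ in cospan form via Proposition~\ref{prop:span-cospan} and then express the reversed right leg of the cospan as a span. The paper does this by invoking Proposition~\ref{prop:span-cospan} a second time on that cofunction, whereas you take the trivial span with legs $\tilde{B}$ and $\mathrm{id}_n$ (unit law), which is a valid and more explicit instance of the same step; your opening paragraph on the span form is unused and can be dropped.
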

\begin{proof}
\begin{hcalculation}[=]{\scalebox{0.8}{\Rel{R}}}
\hstep{\scalebox{0.8}{\CoSpan{C}{\tilde{A}}}}{Prop.\ref{prop:span-cospan}}
\hstep{\scalebox{0.8}{}}{Prop.\ref{prop:span-cospan}: $\scalebox{0.8}{\CoMap{\tilde{A}}}$is also a linear relation}
\end{hcalculation}
\end{proof}

For the theorem below, consider the following notations presented in~\cite{Golub:1991}:
\begin{align*}
  r_a=\rank(A),\; r_b &=\rank(B),\; r_c=\rank(C),\\
  r_{ac}=\rank\begin{pmatrix}A & C\end{pmatrix},\; r_{ab} &=\rank\begin{pmatrix}A\\B\end{pmatrix},
  \; r_{abc}=\rank\begin{pmatrix}A & C\\ B & 0\end{pmatrix}.
\end{align*}

\begin{theorem}[Theorem~$1$ of \cite{Golub:1991}: The RSVD Theorem]\label{thm:RSVD}
Every triplet of linear transformations $\TypedMap{A}{n}{m}$, $\TypedMap{B}{n}{q}$ and
$\TypedMap{C}{p}{m}$ can be factorized as
\begin{align*}
\TypedMap{A}{n}{m} &= \input{Golub_rsvd1.tikz},\\
\TypedMap{B}{n}{q} &= \input{Golub_rsvd2.tikz}, \\
\TypedMap{C}{p}{m} &= \input{Golub_rsvd3.tikz},
\end{align*}
where $\TypedInv{P}{m}{m}$ and $\TypedInv{Q}{n}{n}$ are invertible linear transformations, and
$\TypedOrtho{\bar{U}}{q}{q}$ and $\TypedOrtho{\bar{V}}{p}{p}$ are orthogonal linear transformations.
$\TypedMap{S_a}{n}{m}$, $\TypedMap{S_b}{n}{q}$, and $\TypedMap{S_c}{p}{m}$
are real quasi-diagonal linear transformations with nonnegative elements.
%
\end{theorem}

\begin{corollary}[RSVD of a linear relation]\label{cor:rsdv-relation}
For every linear relation $\TypedRel{R}{m}{n}$ there exist orthogonal linear transformations $\TypedOrtho{\bar{U}}{n}{n}$ and $\TypedOrtho{\bar{V}}{m}{m}$, and a diagonal linear relation $\TypedRel{\bar{D}}{m}{n}$ such that:
\[\Rel{R}=\input{Golub_rsvd_withD.tikz}.\]
\end{corollary}
\begin{proof}
\begin{hcalculation}[=]{\scalebox{0.8}{\Rel{R}}}
\hstep{\scalebox{0.8}{}}{Prop.\ref{prop:tripla}}
\hstep{\scalebox{0.8}{\input{Golub_rvsd_step1.tikz}}}{Thm.\ref{thm:RSVD}}
\hstep{\scalebox{0.8}{\input{Golub_rvsd_step2.tikz}}}{\scalebox{0.7}{\Inv{P}} and \scalebox{0.7}{\Inv{Q}} are invertible}
\hstep{\scalebox{0.8}{\input{Golub_rsvd_withD.tikz}}}{$\scalebox{0.7}{\Rel{\bar{D}}} \coloneqq \scalebox{0.7}{\begin{tikzpicture}
	\begin{pgfonlayer}{nodelayer}
		\node [style=Map] (1) at (-1, 0) {$S_c$};
		\node [style=none] (2) at (-2.75, 0) {};
		\node [style=CoMap] (10) at (0.75, 0) {$S_a$};
		\node [style=Map] (15) at (2.5, 0) {$S_b$};
		\node [style=none] (18) at (4, 0) {};
	\end{pgfonlayer}
	\begin{pgfonlayer}{edgelayer}
		\draw (2.center) to (18.center);
	\end{pgfonlayer}
\end{tikzpicture}
}$}
\end{hcalculation}
\end{proof}

\begin{remark}\label{rmk:rsvd-to-gsvd} According to Theorem~$2$ of \cite{Golub:1991}, for every linear transformations $\TypedMap{A}{n}{m}$, $\TypedMap{B}{n}{q}$ and $\TypedMap{C}{p}{m}$, we have
\begin{enumerate}
  \item If $\TypedMap{C}{m}{p}=\TypedId{m}$ then $\operatorname{RSVD}\left(A,B,I_m\right)=\operatorname{GSVD}\left(A,B\right)$;
  \item If $\TypedMap{B}{q}{n}=\TypedId{n}$ then $\operatorname{RSVD}\left(A,I_n,C\right)=\operatorname{GSVD}\left(A,C\right)$.
\end{enumerate}
In these cases, the orthogonal linear transformations $\Ortho{\bar{U}}$ and $\Ortho{\bar{V}}$ coincide with the orthogonal linear transformations $\Ortho{U}$ and $\Ortho{V}$ from the GSVD (Theorem~\ref{thm:gsvd-relations}), while the diagonal linear relation satisfies $\Rel{\bar{D}}=\Rel{D}$, where $\Rel{D}$ is the diagonal linear relation defined in Definition~\ref{def:diagonalrelation}.
\end{remark}

\citet{Golub:1991} use rank specifications as conditions of uniqueness and existence in some of their theorems. Here we interpret these conditions in terms of the fundamental properties (injectivity, surjectivity, determinism, and totality), since we can associate them with the threads of the GSVD (Theorem~\ref{thm:gsvd-relations}). For comparison purposes, we write the wire dimensions $k_I$, $k_S$, $k_T$, $k_D$ and $d$ in terms of the ranks of $(A,B,C)$:
\begin{align}
  k_I &= r_{ab} - r_{abc} + m,\label{eq:k_i} \\
  k_S &= r_{ac} - r_{abc} + n,\label{eq:k_s} \\
  k_T &= r_{ac} - r_a,\label{eq:k_T} \\
  k_D &= r_{ab} - r_a,\label{eq:k_r}\\
  d   &= r_{abc} + r_{a} - r_{ac} - r_{ab}.\label{eq:r}
\end{align}

\begin{proposition}[Rank conditions for injectivity and surjectivity]\label{prop:wiresthm4} A linear relation
  $\Rel{R} = \Span{A}{B}$
  is \emph{injective} and \emph{surjective} if and only if
  \[r_b = n,\quad\quad r_c = m,\quad\quad r_{abc} = r_{ac} + r_b = r_{ab} + r_c.\]
\end{proposition}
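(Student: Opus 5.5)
The plan is to reduce both properties to the vanishing of wires in the GSVD and then turn that vanishing into rank identities using the formulas \eqref{eq:k_i} and \eqref{eq:k_s}. Write $\Rel{R}$ as a triple $(A,B,C)$ as in Proposition~\ref{prop:tripla}, so that the ranks $r_{ab}$, $r_{ac}$, $r_{abc}$, $r_b$, $r_c$ are defined, and take its GSVD as in Theorem~\ref{thm:gsvd-relations}. By Lemma~\ref{lem:no_lolipop_GSVD}, $\Rel{R}$ is injective iff $k_I=0$ and surjective iff $k_S=0$. So the statement becomes: $k_I=k_S=0$ iff the three rank conditions hold.

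The key auxiliary facts are two elementary rank inequalities for the block matrix $\begin{pmatrix}A & C\\ B & 0\end{pmatrix}$. Its column blocks are $\begin{pmatrix}A\\B\end{pmatrix}$ and $\begin{pmatrix}C\\0\end{pmatrix}$, so subadditivity of rank under column concatenation gives $r_{abc}\le r_{ab}+r_c$. Its row blocks are $\begin{pmatrix}A & C\end{pmatrix}$ and $\begin{pmatrix}B & 0\end{pmatrix}$, which gives $r_{abc}\le r_{ac}+r_b$. Trivially we also have $r_c\le m$ and $r_b\le n$, with $m,n$ understood as in \eqref{eq:k_i}--\eqref{eq:k_s}, namely the column count of $C$ and the row count of $B$. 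Chaining these bounds and comparing with \eqref{eq:k_i} yields
\[
0\le k_I = (r_{ab}+m)-r_{abc},
\]
and $k_I=0$ holds iff both $r_{ab}+r_c=r_{ab}+m$ and $r_{abc}=r_{ab}+r_c$ hold. That is, $k_I=0$ iff $r_c=m$ and $r_{abc}=r_{ab}+r_c$. Symmetrically, $k_S=0$ iff $r_b=n$ and $r_{abc}=r_{ac}+r_b$.

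Combining the two equivalences gives exactly the stated conditions, in both directions. For the converse, substitute $r_c=m$ and $r_{abc}=r_{ab}+r_c$ into \eqref{eq:k_i} to get $k_I=0$, and proceed likewise for $k_S$. Lemma~\ref{lem:no_lolipop_GSVD} then returns injectivity and surjectivity.

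The main obstacle is not the inequalities but the bookkeeping of conventions. I must check that the $m$, $n$ appearing in \eqref{eq:k_i}--\eqref{eq:k_s} are the same dimensions that bound $r_c$ and $r_b$, since the paper's typing of $C$ and $B$ differs between statements. I must also check that those formulas apply to the span/triple presentation used here, for which Remark~\ref{rmk:rsvd-to-gsvd} identifies the RSVD diagonal relation with the GSVD one. If these formulas are taken as given, the rest is the short chain of inequalities above.
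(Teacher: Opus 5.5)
Your proposal is correct and takes essentially the paper's route: you reduce to $k_I=k_S=0$ via Lemma~\ref{lem:no_lolipop_GSVD}, then combine Eqs.~\eqref{eq:k_i}--\eqref{eq:k_s} with rank subadditivity. The only difference is that the paper fixes $C=I_m$ (the span form), so $r_c=r_{ac}=m$ holds automatically and only $r_{abc}\le r_{ac}+r_b$ is needed; your splitting of $k_I$ and $k_S$ into two nonnegative summands works for an arbitrary triple, but you should take the triple to be $(A,B,I_m)$ so that the ranks are the ones in the statement.
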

\begin{proof}
Since $C=I_m$ then $r_c=r_{ac}=m$.

\noindent $(\Rightarrow)$ Let $R$ be an injective and surjective linear relation.
\begin{hcalculation}[\Leftrightarrow ]{R\;\text{is injective}}
\hstep{k_I=0}{Lem.\ref{lem:no_lolipop_GSVD}}
\hstep{r_{ab} - r_{abc} + m = 0}{Eq.(\ref{eq:k_i})}
\hstep{r_{ab} - r_{abc} + r_c = 0}{Hypothesis: $C=I_m$}
\hstep{r_{abc} = r_{ab} + r_c}{Algebra}
\end{hcalculation}
On the other hand,
\begin{hcalculation}[\Leftrightarrow ]{R\;\text{is surjective}}
\hstep{k_S=0}{Lem.\ref{lem:no_lolipop_GSVD}}
\hstep{r_{ac} - r_{abc} + n = 0}{Eq.(\ref{eq:k_s})}
\hstep{r_{abc} = r_{ac} + n}{Algebra}
\hstep{r_{abc} = r_{ab} + n\leq r_{ac} + r_b}{Rank property: $r_{abc} \leq r_{ac} + r_b$}
\hstep{n\leq r_b}{Algebra}
\hstep{n = r_b}{$r_b\leq n$ because $B \in \mathbb{R}^{n\times q}$}
\end{hcalculation}
Therefore, $r_{abc} = r_{ac} + n = r_{ac} + r_b$.

\noindent $( \Leftarrow )$ Direct application of the hypothesis to Equations~\eqref{eq:k_i} and~\eqref{eq:k_s}.
\end{proof}

\begin{theorem}[Theorem~$4$ of \cite{Golub:1991}]\label{thm:teo4golub} Let $\TypedMap{A}{n}{m}$, $\TypedMap{B}{n}{q}$, $\TypedMap{C}{p}{m}$ and $\TypedMap{S}{n}{m}$ be linear transformations such that $\rank\left(\scalebox{0.8}{\input{Golub_a-bdc.tikz}}\right) = r_{ab} + r_{ac} - r_{abc}$ and assume
that $r_a > r_{ab} + r_{ac} - r_{abc}$. Then
\[\left\{ \input{GSVD_GSVD_pseudoinversa_withD_op_golub.tikz}\right\}  \in \underset{\scalebox{0.7}{\Map{S}}}{\argmin}\left[\rank\left(\input{Golub_a-bdc.tikz}\right)\right]\]
\noindent if and only if:
\[r_b = n,\quad\quad r_c = m,\quad\quad r_{abc} = r_{ac} + r_b = r_{ab} + r_c.\]
\end{theorem}

\begin{corollary}\label{cor:golub4-nosso} Let $\Rel{R} =\Span{A}{B}$ be a linear relation. Then
\[\left\{\input{GSVD_GSVD_pseudoinversa_withD2.tikz}\right\} = \underset{\scalebox{0.7}{\CoMap{S}}}{\argmin}\left[\rank\left(\input{Golub_a-dc.tikz}\right)\right]\]
\noindent if and only if $\Rel{R}$ is injective and surjective.
\end{corollary}
\begin{proof}
\begin{hcalculation}[\Leftrightarrow ]{\scalebox{0.8}{\Rel{R}}\; \text{is injective and surjective}}
\hstep{r_b = n,\;\; r_c = m\;\;\text{and}\;\;r_{abc} = r_{ac} + r_b = r_{ab} + r_c}{Prop.\ref{prop:wiresthm4}}
\hstep{\left\{ \scalebox{0.8}{\input{GSVD_GSVD_pseudoinversa_withD_op_golub.tikz}}\right\}  = \underset{\scalebox{0.7}{\Map{S}}}{\argmin}\left[\rank\left(\scalebox{0.8}{\input{Golub_a-dc.tikz}}\right)\right]}{Thm.\ref{thm:teo4golub} with \scalebox{0.7}{\Map{C}}=\scalebox{0.7}{\Id}}
\hstep{\left\{\scalebox{0.8}{\input{GSVD_GSVD_pseudoinversa_withD2.tikz}}\right\} = \underset{\scalebox{0.7}{\CoMap{S}}}{\argmin}\left[\rank\left(\scalebox{0.8}{\input{Golub_a-dc.tikz}}\right)\right]}{\makecell{Opposite: Def.\ref{def:composing_diagrams}\\+ Remark~\ref{rmk:rsvd-to-gsvd}}}
\end{hcalculation}
\end{proof}

\begin{theorem}[Theorem~$7$ of \cite{Golub:1991}]\label{thm:teo7}
The matrix equation $\begin{tikzpicture}
	\begin{pgfonlayer}{nodelayer}
		\node [style=Map] (0) at (-0.5, 0) {$B$};
		\node [style=Map] (1) at (1, 0) {$S$};
		\node [style=none] (2) at (-2, 0) {};
		\node [style=Map] (3) at (2.5, 0) {$C$};
		\node [style=none] (4) at (4, 0) {};
	\end{pgfonlayer}
	\begin{pgfonlayer}{edgelayer}
		\draw (2.center) to (3);
		\draw (3) to (4.center);
	\end{pgfonlayer}
\end{tikzpicture}
=\Map{A}$ in the unknown linear transformation $\Map{S}$ is consistent if and only if $r_{abc}=r_{ab}+r_c$.
\noindent The minimum norm solution corresponds to
\[\input{GSVD_GSVD_pseudoinversa_withD_op_golub.tikz}  \in \underset{\scalebox{0.7}{\Map{S}}}{\argmin}\left[\left\|\Map{S} \right\|\;:\;\Map{A}= \right],\]
\noindent where $\left\|. \right\|$ be any unitarily invariant norm.
\end{theorem}

\begin{corollary}\label{cor:golub7-nosso} Let $\Rel{R} =\Span{A}{B}$ be a linear relation. The matrix equation $=\Map{A}$ in the unknown linear transformation $\Map{S}$ is consistent if and only if $\Rel{R}$ is injective. The minimum norm solution corresponds to
\[\input{GSVD_GSVD_pseudoinversa_withD2.tikz}  \in \underset{\scalebox{0.7}{\CoMap{S}}}{\argmin}\left[\left\|\Map{S} \right\|_F\;:\;\Map{A}=  \right].\]
\end{corollary}
\begin{proof}
\begin{hcalculation}[\Leftrightarrow ]{\Rel{R}\; \text{is injective}}
\hstep{r_c = m\;\;\text{and}\;\;r_{abc} = r_{ab} + r_c}{Prop.\ref{prop:wiresthm4}}
\hstep{\scalebox{0.8}{\input{GSVD_GSVD_pseudoinversa_withD_op_golub.tikz}}  \in \underset{\scalebox{0.7}{\Map{S}}}{\argmin}\left[\left\|\scalebox{0.8}{\Map{S}} \right\|_F\;:\;\scalebox{0.8}{\Map{A}}=\scalebox{0.8}{} \right]}{Thm.\ref{thm:teo7} with \scalebox{0.7}{\Map{C}}=\scalebox{0.7}{\Id}}
\hstep{\scalebox{0.8}{\input{GSVD_GSVD_pseudoinversa_withD2.tikz}}  \in \underset{\scalebox{0.7}{\CoMap{S}}}{\argmin}\left[\left\|\scalebox{0.8}{\Map{S}} \right\|_F\;:\;\scalebox{0.8}{\Map{A}}= \scalebox{0.8}{} \right]}{\makecell{Opposite: Def.\ref{def:composing_diagrams}\\+ Remark~\ref{rmk:rsvd-to-gsvd}}}
\end{hcalculation}
\end{proof}

\subsection{Generalizing Low-Rank Approximation to Linear Relations}\label{ap:proof_cap_low-rank}

Table~\ref{tab:gsvd_withLowRank} contains all the characterizations involving the diagonal relation presented in this work.

\renewcommand{\arraystretch}{1.05}
\setlength{\tabcolsep}{3pt}

\begin{table}[pos=htbp]
\centering
\scriptsize
\setlength{\tabcolsep}{2.5pt}

\resizebox{\textwidth}{!}{%
\begin{tabular}{C{0.07\textwidth}|C{0.23\textwidth}|M{0.35\textwidth}}
\textbf{Name} &
\textbf{Graphical Syntax} &
\textbf{Matrix notation} \\ \hline

$D$ &
$\scalebox{0.8}{\input{GSVD_GSVD_D.tikz}}$ &
$xDy\iff\exists w;\; \tilde{C}w=x$
$\text{and}\; \tilde{S}w=y$ where
$ \tilde{C}=\begin{bmatrix}
0_{k_T \times k_D} & 0 & 0 \\
0 & C & 0 \\
0 & 0 & I_{k_I}
\end{bmatrix}$
$ \tilde{S}=\begin{bmatrix}
I_{k_D} & 0 & 0 \\
0 & S & 0 \\
0 & 0 & 0_{k_S \times k_I}
\end{bmatrix}$
\\ \hline

$\Sur(D)$ &
$\scalebox{0.8}{\input{Connect_disconnect_GSVD_sur_D_cospan.tikz}}$ &
$xDy\iff\tilde{C}x=\tilde{S}y$
where
$ \tilde{C}=\begin{bmatrix}
I_{k_T} & 0 & 0 \\
0 & C^{-1} & 0_{r \times k_I}
\end{bmatrix}$
$ \tilde{S}=\begin{bmatrix}
0_{k_T \times k_D} & 0 & 0 \\
0 & S^{-1} & 0_{r \times k_S}
\end{bmatrix}$
\\ \hline

$\Inj(D)$ &
$\scalebox{0.8}{\input{Connect_disconnect_GSVD_inj_D_span.tikz}}$ &
$xDy\iff\exists w;\; \tilde{C}w=x$
$\text{and}\; \tilde{S}w=y$ where
$ \tilde{C}=\begin{bmatrix}
0_{k_T \times k_D} & 0 \\
0 & C \\
0 & 0_{k_I \times r}
\end{bmatrix}$
$ \tilde{S}=\begin{bmatrix}
I_{k_D} & 0 \\
0 & S \\
0 & 0_{k_S \times r}
\end{bmatrix}$
\\ \hline

$\Tot(D)$ &
$\scalebox{0.8}{\input{Connect_disconnect_GSVD_tot_D_cospan.tikz}}$ &
$xDy\iff \tilde{C}x=\tilde{S}y$
where
$ \tilde{C}=\begin{bmatrix}
0_{r\times k_T} & 0 & 0 \\
0 & C^{-1} & 0_{k_S \times k_I}
\end{bmatrix}$
$ \tilde{S}=\begin{bmatrix}
0_{r \times k_D} & 0 & 0 \\
0 & S^{-1} & I_{k_S}
\end{bmatrix}$
\\ \hline

$\Det(D)$ &
$\scalebox{0.8}{\input{Connect_disconnect_GSVD_det_D_span.tikz}}$ &
$xDy\iff\exists w;\; \tilde{C}w=x$
$\text{and}\; \tilde{S}w=y$ where
$ \tilde{C}=\begin{bmatrix}
0_{k_T \times r} & 0 \\
C & 0 \\
0 & I_{k_I}
\end{bmatrix}$
$ \tilde{S}=\begin{bmatrix}
0_{k_D\times r} & 0 \\
S & 0 \\
0 & 0_{k_S \times k_I}
\end{bmatrix}$
\\ \hline

$D^+$ &
$\scalebox{0.8}{\input{GSVD_GSVD_pseudoinversa.tikz}}$ &
$x=D^+y$ where
$\left(D^+\right)^{op} =
\begin{bmatrix}
0_{k_T \times k_D} & 0 & 0 \\
0 & CS^{-1} & 0 \\
0 & 0 & 0_{k_I \times k_S}
\end{bmatrix}$
\\ \hline

$D_{\overline{\ell}}^{+}$ &
$\scalebox{0.8}{\input{GSVD_GSVD_lowrank.tikz}}$ &
$x=D_{\overline{\ell}}^{+}y$ where
$\left(D_{\overline{\ell}}^{+}\right)^{op} =
\begin{bmatrix}
0_{\tilde{k}_T \times \tilde{k}_D} & 0 & 0 \\
0 & C_{\overline{\ell}}S_{\overline{\ell}}^{-1} & 0 \\
0 & 0 & 0_{k_I \times k_S}
\end{bmatrix}$

\end{tabular}}

\caption{Equivalent notations for the diagonal linear relation with its relational operations, pseudoinverse and low-rank approximation.}
\label{tab:gsvd_withLowRank}
\end{table}

\begin{theorem}[Theorem~$6$ of \cite{Golub:1991}]\label{thm:teo6}
Consider all linear transformations $\Map{S}$ satisfying \[r_{ab}+r_{ac}-r_{abc}\leq k =\rank\left(\input{Golub_a-bdc.tikz}\right)< r_a\]
where $k$ is a given integer and let $\|\cdot\|$ be any unitarily invariant norm.
A linear transformation $\Map{S}$ of minimal norm $\left\|\Map{S} \right\|$ is given by
\[\input{GSVD_GSVD_pseudoinversa_withD_r-k_op.tikz}  \in \underset{\scalebox{0.8}{\Map{S}}}{\argmin}\left[\left\|\Map{S} \right\|\;:\;\rank\left(\input{Golub_a-bdc.tikz}\right)=k\right],
\]
\noindent where $\left\|. \right\|$ be any unitarily invariant norm.
\end{theorem}

\begin{corollary}\label{cor:golub6-nosso} Let $\Rel{R} =\Span{A}{B}$ be a linear relation with rank $r$. Consider all linear transformations $\Map{S}$ satisfying \[k_I\leq k =\rank\left(\input{Golub_r-d.tikz}\right)< k_I +r\]
where $k$ is a given integer and let $\|\cdot\|$ be any unitarily invariant norm.
A linear transformation $\Map{S}$ of minimal norm $\left\|\Map{S} \right\|_F$ is given by
\[\input{GSVD_GSVD_pseudoinversa_withD_r-k.tikz}  \in \underset{\scalebox{0.7}{\CoMap{S}}}{\argmin}\left[\left\|\Map{S} \right\|_F\;:\;\rank\left(\input{Golub_a-dc.tikz}\right)=k\right].\]
\end{corollary}
\begin{proof}
\begin{hcalculation}[\Leftrightarrow ]{k_I\leq k =\rank\left(\scalebox{0.7}{\input{Golub_r-d.tikz}}\right)< k_I +r}\hstep{k_I\leq k =\rank\left(\scalebox{0.7}{\input{Golub_a-dc.tikz}}\right)< k_I + r}{Lem.\ref{lem:rank_R-S}}
\hstep{r_{ab}+r_{ac}-r_{abc}\leq k =\rank\left(\scalebox{0.7}{\input{Golub_a-dc.tikz}}\right)< r_a}{\makecell{Eqs.\eqref{eq:k_i}--\eqref{eq:r}\\with $r_c=r_{ac}=m$}}
\hstep{\scalebox{0.7}{\input{GSVD_GSVD_pseudoinversa_withD_r-k_op.tikz}}  \in \underset{\scalebox{0.7}{\Map{S}}}{\argmin}\left[\left\|\scalebox{0.7}{\Map{S}} \right\|_F\;:\;\rank\left(\scalebox{0.7}{\input{Golub_a-dc.tikz}}\right)=k\right]}{\makecell{Thm.\ref{thm:teo6} with\\ \scalebox{0.7}{\Map{C}}=\scalebox{0.7}{\Id}}}
\hstep{\scalebox{0.7}{\input{GSVD_GSVD_pseudoinversa_withD_r-k.tikz}}  \in \underset{\scalebox{0.7}{\CoMap{S}}}{\argmin}\left[\left\|\scalebox{0.7}{\Map{S}} \right\|_F\;:\;\rank\left(\scalebox{0.7}{\input{Golub_a-dc.tikz}}\right)=k\right]}{\makecell{Opposite: Def.\ref{def:composing_diagrams}\\+ Remark~\ref{rmk:rsvd-to-gsvd}}}
\end{hcalculation}
\end{proof}

\section*{Acknowledgements}

This study was financed in part by the Coordenação de Aperfeiçoamento de Pessoal de Nível Superior - Brasil (CAPES) - Finance Code 001.

\bibliographystyle{cas-model2-names}
\bibliography{cas-refs}

\end{document}